\documentclass[11pt, a4paper]{article}

\usepackage[english]{babel}
\usepackage{enumitem}
\usepackage{amsmath,mathtools}
\usepackage{amsthm}
\usepackage{amssymb}
\usepackage{xcolor}
\usepackage{algorithm}
\usepackage{algpseudocode}
\usepackage{tikz}
\usepackage{cite}
\usepackage[blocks]{authblk}
\usepackage{subcaption}
\usepackage{dsfont}
\usepackage{microtype}
\usepackage{todonotes}
\usepackage{thmtools}
\usepackage{thm-restate}
\usepackage{array}
\usepackage{multirow}
\usepackage{booktabs}
\usepackage[margin=1in]{geometry}
\newcommand\numberthis{\addtocounter{equation}{1}\tag{\theequation}}
\definecolor{myred}{RGB}{227, 38, 54}
\definecolor{myblue}{RGB}{75, 141, 200} 
\definecolor{mygreen}{RGB}{87, 166, 74} 
\definecolor{greenish}{RGB}{27,158,119}
\definecolor{MyOrange}{RGB}{217,95,2}
\definecolor{MyPurple}{RGB}{117,112,179}
\usepackage{hyperref}
\hypersetup{
	colorlinks,
	linkcolor={red!60!black},
	citecolor={blue!60!black},
	urlcolor={blue!80!black}
}
\usepackage[nameinlink]{cleveref}

\declaretheorem[name=Theorem, numberwithin=section]{thm}
\declaretheorem[name=Lemma, numberlike=thm, numberwithin=section]{lemma}

\declaretheorem[name=Corollary, numberlike=thm, numberwithin=section]{corollary}

\declaretheorem[name=Example, style=definition, numberwithin=section]{example}

\crefname{subsection}{subsection}{subsections}
\crefname{example}{example}{examples}

\title{Combinatorial Perpetual Scheduling: \\ Existence and Computation of Low-Height Schedules}
\author{
    Mirabel Mendoza-Cadena\thanks{Centro de Modelamiento Matemático (CNRS IRL2807), Universidad de Chile, Santiago, Chile. \texttt{lmmendoza@cmm.uchile.cl}.}\qquad  
    Arturo Merino\thanks{Department of Computer Science, University of Chile, Chile. \texttt{amerino@dcc.uchile.cl}.}\qquad\\
    Mads Anker Nielsen\thanks{Department of Mathematics and Computer Science, University of Cologne, Germany. \texttt{m.nielsen@uni-koeln.de}.}\qquad
    Kevin Schewior\thanks{Department of Mathematics and Computer Science, University of Cologne, Germany, and Department of Mathematics and Computer Science, University of Southern Denmark, Denmark. \texttt{schewior@cs.uni-koeln.de}.}
}
\date{\today}

\newcommand{\R}{\mathbb{R}}
\newcommand{\N}{\mathbb{N}}
\newcommand{\PP}{\mathbb{P}}
\newcommand{\QQ}{\mathbb{Q}}

\newcommand{\EE}{\mathbb{E}}
\newcommand{\NN}{\mathbb{N}}
\newcommand{\I}{\mathcal{I}}
\newcommand{\F}{\mathcal{F}}
\newcommand{\M}{\mathcal{M}}
\newcommand{\cI}{\mathcal{I}}

\newcommand{\cH}{\mathcal{H}}

\DeclareMathOperator{\conv}{conv}

\newcommand{\hatcirc}{\mathbin{\hat\circ}}

\newcommand{\fl}[1]{\lfloor #1 \rfloor}
\newcommand{\ce}[1]{\lceil #1 \rceil}

\def\final{0}  
\def\iflong{\iffalse}
\ifnum\final=0  
\newcommand{\mnote}[1]{{\color{violet!70}[{\tiny \textbf{Mirabel:} \bf #1}]\marginpar{\color{violet!70}*}}}

\else 
\newcommand{\mnote}[1]{}
\fi

\newcommand{\ind}{\mathds{1}}
\begin{document}
\maketitle

\begin{abstract}
This paper considers a framework for combinatorial variants of perpetual-scheduling problems. 
Given an independence system $(E,\mathcal{I})$, a schedule consists of an independent set $I_t \in \mathcal{I}$ for every time step $t \in \mathbb{N}$, with the objective of fulfilling frequency requirements on the occurrence of elements in $E$. 
We focus specifically on \emph{combinatorial bamboo garden trimming}, where elements accumulate height at growth rates $g(e)$ for $e \in E$
and are reset to zero when scheduled, with the goal of minimizing the maximum height attained by any element. We assume that $g$ is normalized so that it is a convex combination of the incidence vectors of $\mathcal{I}$.

Using the integrality of the matroid-intersection polytope, we prove that, when $(E,\mathcal{I})$ is a matroid, it is possible to guarantee a maximum height of at most 2, which is optimal. 
We complement this existential result with efficient algorithms for specific matroid classes, achieving a maximum height of 2 for uniform and partition matroids, and 4 for graphic and laminar matroids. 
In contrast, we show that for general independence systems, the optimal guaranteed height is $\Theta(\log |E|)$ and can be achieved by an efficient algorithm. 
For \emph{combinatorial pinwheel scheduling}, where each element $e\in E$ needs to occur in the schedule at least every $a_e \in \mathbb{N}$ time steps, our results imply bounds on the density sufficient for schedulability.
\end{abstract}

\section{Introduction}

Certain tasks such as maintenance, monitoring, or cleaning do not merely require to be accomplished once. Instead, they often require to be executed again and again under a certain frequency requirement, virtually forever.
Such tasks are captured by perpetual-scheduling problems with frequency requirements, a class of problems which has recently seen a surge of interest. These problems, often appealingly clean, include the pinwheel scheduling (PS) problem (and variants)~\cite{Kawamura24,KanellopoulosKMPP26,Mishra26} and the bamboo-garden trimming (BGT) problem (and variants)~\cite{Kuszmaul22,HohneS23,BiktairovGJNSW25}. 
Kawamura's survey~\cite{Kawamura2025} gives a broad overview of the state of the art in this area.

To a large extent, the literature has only considered settings in which, at any time, only a single task can be scheduled.
In many applications, however, it may be possible to schedule multiple tasks at the same time, subject to a combinatorial constraint.
For instance, there may be multiple workers, each of which has the expertise to do a certain subset of the tasks. In this case, a set of tasks can be executed simultaneously if and only if it is an independent set of a transversal matroid.
In this paper, we study a general framework for combinatorial variants of perpetual scheduling, specifically, PS and BGT. This framework essentially (cf.\ discussion later) unifies the few combinatorial models considered in the literature~\cite{Bar-NoyL03,ImMZ21,GasieniecSW24}. The framework was introduced in \cite{cicerone2019FairHitting} as the \emph{fair hitting sequence problem}. 
However, their work studies the problem from an approximation-algorithms perspective, which is not the main focus of the present paper.

The majority of this paper is written in terms of what we call \emph{combinatorial BGT} (CBGT).
Here, there is a ground set $E$ of $n$ \emph{bamboos} and a fixed independence system $\mathcal{I}\subseteq 2^E$. Initially, each bamboo $e\in E$ has height $0$ and a fixed \emph{growth rate} $g(e)\in\mathbb{Q}_{\geq 0}$. 
The following two steps alternate ad infinitum: First, all bamboos grow according to their growth rates. Then, an independent set $I\in \mathcal{I}$ can be selected to be cut, i.e., all heights of bamboos in $I$ are reduced to $0$. This process leads to an infinite valid \emph{schedule} $\pi=I_1,I_2,\ldots$, with $I_i\in\mathcal{I}$. The goal is to minimize the maximum height $h(\pi)$ that \emph{ever} occurs among any of the bamboos. When $\mathcal{I}$ is the $1$-uniform matroid, this is precisely vanilla BGT.

Note that scaling $g$ by some factor $c$ also just scales the height $h(\pi)$ of any schedule $\pi$ by $c$. For this reason, it is standard for BGT to assume that $g$ is \emph{normalized} so that $\sum_{e\in E}g(e)=1$. For CBGT, we generalize this by assuming that $g$ is a convex combination of the incidence vectors of $\I$. That is, there exists coefficients $\lambda(I) \geq 0$ for each $I \in \I$ such that $\sum_{I \in \I} \lambda(I) = 1$ and $g = \sum_{I \in \I} \lambda(I)\ind_I$, where $\ind_I \in \{0,1\}^E$ is the vector such that $\ind_I(e) = 1$ iff $e \in I$ for all $e \in E$. This is essentially (again cf.\ discussion later) the type of normalization used in the combinatorial models in the literature~\cite{ImMZ21,GasieniecSW24}.

The main result of this paper is that, when $(E,\mathcal{I})$ is a matroid, a maximum height of at most $2$ can be guaranteed. The bound is tight (as can be easily seen from an instance with $n=2$ and growth rates $1-\varepsilon,\varepsilon$ for vanilla BGT~\cite{GasieniecJKLLMR24}) and generalizes the same bound for vanilla BGT~\cite{GasieniecJKLLMR24}. While our proof only yields a pseudo-polynomial-time algorithm that produces any next cut, we give polynomial-time algorithms for uniform matroids, graphic matroids, and laminar matroids, in the latter two cases only guaranteeing a maximum height of at most $4$. When $k=1$, our algorithm for $k$-uniform matroids implies a simplification of the state of the art (that has appeared before in a different context~\cite
{bar-noyEfficientPeriodicScheduling2002}). We also show that, for general set systems $\mathcal{I}$, the smallest height that can be guaranteed is $\Theta(\log |E|)$. We also give a deterministic polynomial-time algorithm implementing a height-$O(\log |E|)$ schedule, which implies an $O(\log |E|)$-approximation algorithm, solving an open problem from~\cite{cicerone2019FairHitting}.

In combinatorial PS (CPS), again an independence system $(E,\mathcal{I})$ is given, and each element $e$ of the ground set $E$ has a \emph{period} $a(e)\in\mathbb{N}$. An instance is called \emph{schedulable} if there exists an infinite valid schedule $I_1,I_2,\ldots$ with $I_i\in\mathcal{I}$ such that all periods are satisfied, i.e., for each $e\in E$ and $t\in\mathbb{N}$, it holds that $e\in \bigcup_{i=t}^{i+a(e)-1}I_i$. Analogously to the above normalization, we define the \emph{density} of an instance as the minimum $\rho$ such that there exist $\lambda(I)\geq 0$ for all $I\in\mathcal{I}$ with $\sum_{i\in\mathcal{I}}\lambda(I)=\rho$ and $\sum_{I\in\mathcal{I}:e\in I} \lambda(I) = 1/a(e)$ for all $e\in E$. Our results for CBGT imply \emph{density bounds}, i.e., bounds on the density that guarantees schedulability, the quantity that has been the main object of study for vanilla PS~\cite{Kawamura2025}.

\subsection{Our Contribution}

In this exposition, we focus on CBGT. Our main result is the following. 

\begin{restatable*}{thm}{matroidtwo}
    \label{thm:matroidtwo}    
    For any CBGT instance $(E,\I,g)$ where $(E,\I)$ is a matroid, there exists an infinite valid schedule $\pi$ for $(E,\I,g)$ with
    $h(\pi) < 2$.
\end{restatable*}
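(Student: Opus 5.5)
The plan is to reduce to a pinwheel-type problem with power-of-two periods, and then build a periodic schedule by repeated halving, where each halving step is justified by the integrality of a matroid-intersection polytope.

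\emph{Reduction.} For each $e$ with $g(e)>0$, let $p(e)$ be the unique power of two with $1/g(e)\le p(e)<2/g(e)$, and set $x(e)=1/p(e)\le g(e)$; put $x(e)=0$ if $g(e)=0$. Since $g$ lies in the matroid polytope $P(M)$ and $P(M)$ is down-closed, also $x\in P(M)$. Suppose we find a periodic schedule in which every $e$ is cut exactly at the times $t\equiv r_e \pmod{p(e)}$ for some offsets $r_e$. Then the height of $e$ never exceeds $g(e)p(e)<2$, which proves the theorem. So it suffices to show the following claim. \emph{If $x\in P(M)$ and every entry of $x$ lies in $\{0\}\cup\{2^{-k}:k\ge0\}$, then such an offset assignment exists with every time slot independent.}

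\emph{Induction on $K=\max\{k: x(e)=2^{-k}\}$.} Let $A=\{e:x(e)=1\}$. Because $x(A)=|A|\le r(A)$, the set $A$ is independent. Also $A$ is tight, so $x|_{E\setminus A}$ lies in the polytope of the contraction $M/A$. If $K=0$, cutting $A$ at every step works. Otherwise we split $E'=\{e:0<x(e)\le 1/2\}$ into $E_0\,\dot\cup\, E_1$ so that both $y_j=\ind_A+2x|_{E_j}$, $j\in\{0,1\}$, lie in $P(M)$. Each $y_j$ has entries that are powers of two with maximum exponent $K-1$, so induction gives schedules $\pi_0,\pi_1$. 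Interleaving them ($\pi_0$ on even steps, $\pi_1$ on odd steps) cuts $A$ every step. An element $e\in E_j$ is cut every $2\cdot(p(e)/2)=p(e)$ steps at a fixed residue, as required.

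\emph{The splitting step.} This is where matroid intersection enters, and I expect it to be the main obstacle. Consider the ground set $E'\times\{0,1\}$ with two matroids. The first is the direct sum of two copies of $M/A$. The second is the partition matroid allowing each $e$ in at most one copy. The point $w(e,j)=x(e)$ lies in both polytopes. Indeed, $w(\cdot,j)=x|_{E'}\in P(M/A)$, and $\sum_j w(e,j)=2x(e)\le 1$. The difficulty is that we need a \emph{weighted} rounding: each element must go wholly to one side, carrying weight $2x(e)$, not weight $1$.

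My first attempt would handle this by working level by level. The elements with $x(e)=1/2$ are tight for the partition constraint, so on the face of the intersection polytope where these are tight, integrality yields an integral point. That point assigns each of them to a copy while keeping each copy independent in $M/A$. For the smaller weights, I would replicate: replace an element of weight $2^{-k}$ by a bundle that must be placed together, or equivalently pass to the scaled ground set in which $2^K x$ is integral. Then I would use the integer decomposition property of matroid polytopes, which also follows from matroid intersection/union. The aim is to argue that a fractional split of the smaller elements can be pushed to an integral one without violating either copy's rank constraints, using an uncrossing/exchange argument on tight sets.

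If this direct approach fails, the fallback is to run the halving bottom-up instead. Repeatedly pair up the smallest-weight elements using integrality of the intersection polytope, so that two elements of weight $2^{-K}$ become one ``virtual'' element of weight $2^{-K+1}$ that alternates between them. Proceed until all weights are $\ge 1/2$, and then the $K=1$ case is exactly the tight-face argument above. In either version, the crux is showing that the required integral split exists, which is precisely what integrality of the matroid-intersection polytope should provide.
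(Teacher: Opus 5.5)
\paragraph{Verdict: genuine gap.} Your reduction to power-of-two periods and the interleaving induction are sound: $x\le g$ stays in $P(M)$, $A$ is independent and tight, and $g(e)p(e)<2$. The problem is that everything rests on the splitting step, which you neither prove nor reduce to a tool that applies. What you need is the following. For dyadic $x\in P(M/A)$ with entries $\le \tfrac12$, find a partition $E'=E_0\,\dot\cup\,E_1$ with $2x|_{E_j}\in P(M/A)$. Iterated, this is a matroid analogue of Kraft's inequality: every element gets a residue class mod $p(e)$ so that every time slot is independent. That is strictly stronger than the theorem, and it does not follow from two-matroid intersection.

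\paragraph{Why matroid intersection does not give the split.} The requirement that each element go \emph{wholly} to one side with weight $2x(e)<1$ is not a matroid constraint. The family $\{S : 2x|_S\in P(M)\}$ is knapsack-like. Already on $U_{1,3}$ with $x=(\tfrac12,\tfrac14,\tfrac14)$, its maximal members are $\{a\}$ and $\{b,c\}$, which have different sizes. So integrality of a matroid-intersection polytope says nothing about it. Your auxiliary intersection on $E'\times\{0,1\}$ only yields unit-weight assignments, as you note yourself. The ``uncrossing/exchange argument'' is left unspecified. The bottom-up fallback leaves the matroid setting after the first merge: a virtual element alternating between $e$ and $f$ must be independent together with both of them, which is an intersection of two matroids.

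\paragraph{The level-by-level version fails as stated.} An arbitrary integral point on the face for the $x(e)=\tfrac12$ elements can be a dead end. Take the graphic matroid of a triangle $u,v,w$. Let $b_1=uw$ and $b_2=vw$ have rate $\tfrac12$, and let $q_1,q_2,q_3$ be three parallel copies of $uv$ with rate $\tfrac14$. Then $g\in P(M)$, $x=g$, and $A=\emptyset$.
\begin{itemize}
\item Putting $b_1,b_2$ both into copy $0$ is a legitimate integral point of your face, since $\{uw,vw\}$ is independent.
\item After that, no $q_i$ can join side $0$, because the triangle would get weight $\tfrac52>2$.
\item But putting all three $q_i$ on side $1$ gives their parallel class weight $\tfrac32>1$.
\end{itemize}
The split $\{b_1,q_1,q_2\}\,|\,\{b_2,q_3\}$ does work. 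So the heavy elements must be assigned with all lower levels in view simultaneously, and you give no mechanism for doing that.

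\paragraph{How the paper avoids this.} The paper does not ask for fixed residues; it targets discrepancy $<1$ over one period $T$. For each $e$, the admissible sets of cut times (the $i$-th cut must lie in the interval $C_i$) are exactly the bases of a transversal matroid $M_e$. The vector $g(e)\mathbf{1}$ lies in the polytope of $M_e$ via an explicit fractional matching. Validity ($M_T=\bigoplus_t M$) and frequency ($M_E=\bigoplus_e M_e$) are therefore both matroid constraints, and Edmonds' integrality gives a common basis. Discrepancy $<1$ then implies height $<2$. Your stronger target of perfect periodicity with power-of-two periods is precisely what destroys this matroid structure. To rescue your route you would have to prove the dyadic splitting lemma for all matroids, which is a different and harder statement.
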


A first approach could be the following: Reduce the problem to vanilla BGT by identifying the independent sets $I\in\I$ with bamboos in the new instance and interpreting the coefficients $\lambda(I)$ for all $I\in\I$ as growth rates. Then an optimal valid schedule $\pi'$ for the vanilla-BGT instance can be interpreted as a valid schedule $\pi$ for the CBGT instance. Unfortunately, this leads to an arbitrarily large height, even when $(E,\I)$ is a uniform matroid: Let $n=|E|$ be even, let $\I=\{I\subseteq E\mid |I|\leq n/2\}$, and set $\lambda(I)=1/\binom{n}{n/2}$ for all $I\in\I$ with $|I|=n/2$ (and $\lambda(I)=0$ for all other $I$), which implies $g(e)=1/2$ for all $e\in E$. Note that, on this instance, $\pi$ starts with an arbitrary permutation of $\mathcal{I}$. In particular, for some fixed $e^\star$, $\pi$ may first schedule all $\binom{n}{n/2}/2$ sets that do not contain $e^\star$, leading to a height of $\binom{n}{n/2}/4$. Still, even a height-1 schedule for this instance exists, as achieved by just alternating between any set and its complement. Thus, it is inevitable to take into account the structure of the set system rather than just the values of~$\lambda$, which seems to be challenging in general.

To establish our result, we consider a slightly different setting than CBGT, which is also of independent interest: In BGT terms, upon cutting, the height of a bamboo is not reduced \emph{to $0$} but \emph{by $1$}, possibly below $0$. We call the supremum of all \emph{absolute} heights (positive or negative) any bamboo ever reaches the \emph{discrepancy} $d(\pi)$ of the corresponding schedule $\pi$. Later, we give a formal definition, and show that $d(\pi)<1$ implies $h(\pi)<2$ for all schedules $\pi$.
(It is also not difficult to see that our result implies a height bound of $2$ in the setting where one may cut by at most $1$ and never below $0$; see, e.g.,~\cite[Section 4]{Kuszmaul22} for this setting.)

For this variant, our normalization (i.e., that $g$ lies in the matroid polytope) is necessary to obtain a schedule of finite discrepancy. Indeed, suppose that an infinite schedule $\pi$ has finite $d(\pi)$. As $g$ is rational, each $e \in E$ can only take finitely many distinct heights without its absolute height exceeding $d(\pi)$, and thus we may assume that $\pi$ eventually cyclically repeats some finite sequence $\pi'$. Each $e \in E$ must be cut by exactly as much as it grows in $\pi'$, as otherwise $d(\pi)$ would diverge to infinity. Thus, taking $\lambda(I)$ to be the fraction of time steps $t \in [|\pi'|]$ such that $\pi'(t) = I$ for all $I \in \I$, we have $\sum_{I \in \I} \lambda(I) = 1$ and $\sum_{I \in \I} \lambda(I)\ind_I = g$, which by definition implies that $g$ lies in the matroid polytope of $(E,\I)$.

Note that, if one could cut according to some point in the matroid polytope, one would simply be able to remove precisely the just-grown height. Minimizing discrepancy can thus also be understood as the problem of rounding the fractional growth process to an integral sequence of cuts. 
In other words, the question is: \emph{How well can a continuous combinatorial process be approximated through a discrete one?} (See~\cite{ImMZ21} for stating a similar question in the context of the matroidal adversarial version.)

The crucial ingredient is the following. Fix some $e\in E$ and $T\in\N$ such that $Tg(e)\in\N$. Then the sets $B\subseteq\{1,\dots,T\}$ of time points at which $e$ can be cut so as to fulfill the discrepancy requirement are precisely the bases of a matroid.
Not only that, the $T$-dimensional vector whose entries are identical $g(e)$ turns out to be contained in the corresponding matroid polytope.

This allows us to present the first formulation of BGT-like problems in terms of matroid intersection, which may be of independent interest. To this end, we define
two matroids on the ground set $E\times[T]$ (for some $T$ such that $Tg(e)\in\N$ for all $e\in E$): The direct sum $M_T$ of $T$ copies of the input matroid, and the direct sum $M_E$ of the $|E|$ matroids that we just described. By our normalization and the above ingredient, we get that the vector $y^\star\in\R^{E\times[T]}$ with $y_{(e,t)}=g(e)$ for all $(e,t)\in E\times[T]$ is contained in the \emph{matroid-intersection} polytope of $M_T$ and $M_E$. This allows us to use the well-known integrality of the matroid-intersection polytope~\cite{Edmonds1970} to show that $M_T$ and $M_E$ have a common basis, an infinite repetition of which corresponds to our desired infinite schedule with discrepancy less than $1$.

Ideally, we would like to have a polynomial-time algorithm for computing any next cut of such schedule. While mathematically clean, our above proof does not yield that; indeed, it seems inherently pseudo-polynomial. To address this drawback, we give polynomial-time algorithms for certain classes of matroids. We show how \emph{tree schedules}~\cite{bar-noyEfficientPeriodicScheduling2002} can be used to get height-2 schedules on uniform (and thus partition) matroids in polynomial time. Then, using the concept of \emph{rainbow-circuit-free colorings}~\cite{fujita2010rainbow,Gouge10,hoffman2019rainbow,jendrol200rainbow}, we give polynomial-time algorithms for graphic and laminar matroids achieving height 4.

One may wonder whether \Cref{thm:matroidtwo} extends beyond matroids. Certainly, there are non-matroid independence systems such as $(\{a,b,c\},\{\emptyset,\{a\},\{b\},\{c\},\{b,c\}\})$, for which a height of $2$ is also achievable. Indeed, here $b$ and $c$ can simply be treated as a single element from the ground set. We, however, establish that there are set systems for which not even a constant height is achievable.

\begin{restatable*}[$\star$]{thm}{jamisonlb}
    \label{thm:jamison-lower-bound}
    For every $k \in \NN$ there is a CBGT instance with $n=2^k-1$ elements where every infinite valid schedule has height at least $\frac{1}{2}\log_2 (n+1) - \frac{1}{2}$. 
\end{restatable*}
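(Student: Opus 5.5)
The plan is a construction over $\mathbb{F}_2^k$, followed by a window argument showing that every short stretch of the schedule misses some element.

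\textbf{Construction.} Take $E=\mathbb{F}_2^k\setminus\{0\}$, so $n=2^k-1$. For each nonzero $a\in\mathbb{F}_2^k$ let
\[
S_a=\{x\in E : \langle a,x\rangle = 1\},
\]
which has exactly $2^{k-1}$ elements. Let $\I$ be the downward closure of $\{S_a : a\neq 0\}$; this is an independence system.

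\textbf{Normalization.} Put uniform weights $\lambda(S_a)=1/(2^k-1)$. A fixed nonzero $x$ satisfies $\langle a,x\rangle=1$ for exactly $2^{k-1}$ choices of $a$. Hence
\[
g(x)=\frac{2^{k-1}}{2^k-1}>\frac12 \quad\text{for every } x\in E,
\]
and $g$ is a convex combination of incidence vectors of $\I$, as required.

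\textbf{Key step: $k-1$ consecutive cuts always miss some element.} Consider any $k-1$ consecutive cuts $I_{t+1},\dots,I_{t+k-1}$. Each is a subset of some $S_{a_i}$; an empty cut may be enlarged to an arbitrary $S_a$, which only enlarges the union. The elements missed by all of them include the set
\[
\{x\neq 0 : \langle a_i,x\rangle=0 \text{ for } i=1,\dots,k-1\}.
\]
This is a linear subspace of $\mathbb{F}_2^k$ cut out by $k-1$ equations, so it has dimension at least $1$, minus the origin. It therefore contains some nonzero element $e$, which is not cut at any of the times $t+1,\dots,t+k-1$.

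\textbf{Concluding the bound.} Apply this with $t=0$, or with any $t\ge 0$. The last cut of $e$ occurred at time at most $t$, or never, in which case its height started at $0$. So $e$ has grown at least $k-1$ times without being cut. Just after the growth phase of step $t+k-1$, its height is at least
\[
(k-1)\,g(e) > \frac{k-1}{2} = \tfrac12\log_2(n+1)-\tfrac12 .
\]
This holds for every infinite valid schedule, giving the theorem, in fact with strict inequality.

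\textbf{Main obstacle.} The only step that needs care is the choice of set system. It must satisfy two conditions at once: the normalized growth rate is roughly $1/2$ for every element, and covering $E$ requires about $\log_2 n$ sets. The hyperplane-complement system above achieves both, via the dimension count on the common kernel of the $a_i$. One should also double-check the timing convention, namely that growth precedes cutting within each step, to confirm that $k-1$ growths really accrue. This is immediate from the definition of the process.
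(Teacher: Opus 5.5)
Your proof is correct and uses exactly the paper's construction. You take the same ground set $\{0,1\}^k\setminus\{\mathbf{0}\}$ and the same sets $\{x : \langle a,x\rangle \equiv 1 \pmod 2\}$ (the paper's $H_v$) with uniform weights, so $g(x) = 2^{k-1}/(2^k-1) > 1/2$ for every $x$. Your final estimate $(k-1)g(e) > (k-1)/2$ is also the same.

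The one real difference is how you justify that $k-1$ cuts cannot cover $E$. The paper cites Jamison's theorem (\Cref{thm:jamison}). You instead note that the points missed by $H_{a_1},\dots,H_{a_{k-1}}$ include every nonzero $x$ with $\langle a_i,x\rangle = 0$ for all $i$, and that this solution space has dimension at least $1$. This is the binary case of Jamison's theorem, where it reduces to a rank count. The theorem's real content is over $\mathbb{F}_q$ with $q>2$: there, covering all nonzero points needs $k(q-1)$ hyperplanes, and the rank argument only gives $k$. So your version makes the lower bound self-contained and shows that only linear algebra over $\mathbb{F}_2$ is needed. The citation adds nothing for this particular instance.

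Your other two changes are harmless and leave the bound unchanged:
\begin{itemize}
\item You take the downward closure so that $(E,\I)$ is an independence system. The paper skips this because CBGT allows arbitrary set systems.
\item You argue about an arbitrary window of $k-1$ consecutive steps rather than only the first $k-1$.
\end{itemize}
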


While the proof of this theorem requires algebraic methods, we also give a simpler construction (in the proof of \Cref{thm:lower}) that uses first principles but loses constant factors. 
We show that there exists an instance $(E,\mathcal{I},g)$ in which $g(e)=1/2$ for all $e\in E$ such that, for every subfamily $\mathcal{I}'\subseteq \mathcal{I}$ of sufficiently small size (but even of size $\Omega(\log |E|)$), there exists an element $e\in E$ with $e\notin \bigcup_{I\in\mathcal{I}'} I$.
Consequently, if $\mathcal{I}'$ constitutes the first cuts in a schedule, some bamboo is never cut and therefore reaches height $\Omega(\log |E|)$.

Finally, we show that this upper bound can be matched asymptotically, even with a polynomial-time algorithm (assuming access to a maximum-weight independent-set oracle over the set system).
Assuming that $\I$ is given explicitly, a polynomial-time algorithm computing an $O(\log |E|)$-height schedule with high probability was given in \cite{cicerone2019FairHitting}. Our algorithm answers the open question of whether there exists a deterministic polynomial-time algorithm.

\begin{restatable*}{thm}{compub}
    \label{thm:compub}
    For any CBGT instance $(E,\cI,g)$ with $n$ elements there is an infinite valid schedule $\pi$ of period $2n$ such that $h(\pi)\leq O(\log n)$.
    Furthermore, this schedule can be computed by making at most $2n$ queries to a maximum-weight independent-set oracle for $\I$.
\end{restatable*}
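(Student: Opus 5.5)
The plan is a derandomized rounding of the fractional growth process, using a greedy choice that minimizes an exponential potential; each step costs one call to the maximum-weight independent-set oracle. We build a finite sequence $I_1,\dots,I_{2n}$ and repeat it forever. Write $h_e(t)$ for the height of bamboo $e$ right after the cut at time $t$, starting from $h_e(0)=0$, and set
\[
\Phi(t)=\sum_{e\in E} e^{h_e(t)}.
\]
At step $t+1$ every bamboo grows to $h_e(t)+g(e)$. We then query the oracle for an $I\in\I$ maximizing the weight
\[
w(I)=\sum_{e\in I}\bigl(e^{h_e(t)+g(e)}-1\bigr),
\]
which is exactly the decrease in potential caused by cutting $I$. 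These weights are nonnegative.

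The key step is an averaging argument over the normalization $g=\sum_I\lambda(I)\ind_I$. Since $\sum_I\lambda(I)=1$, the maximum weight is at least the $\lambda$-average weight, which equals $\sum_e g(e)\bigl(e^{h_e(t)+g(e)}-1\bigr)$. Hence
\[
\Phi(t+1)\le\sum_e\Bigl[(1-g(e))\,e^{g(e)}\,e^{h_e(t)}+g(e)\Bigr]\le \Phi(t)+\sum_e g(e)\le \Phi(t)+n .
\]
Here the middle inequality uses $(1-x)e^{x}\le 1$, and the last uses $g(e)\le 1$. It follows that $\Phi(t)\le n(t+1)$. Therefore every height during the first $2n$ steps is at most $\ln\bigl(n(2n+1)\bigr)=O(\log n)$. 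This uses exactly $2n$ oracle queries.

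It remains to handle periodicity, since the repeated schedule does not restart from zero heights. Consider any element $e$ and any maximal gap between consecutive cuts of $e$ in the infinite periodic schedule.
\begin{itemize}
\item If the gap lies inside one period, the analysis above bounds the height directly.
\item If the gap wraps around, it consists of a suffix of one period followed by a prefix of the next. The growth over the suffix is at most the final height $h_e(2n)=O(\log n)$. The growth over the prefix is at most the height reached in the first-period analysis, since that analysis started from $0$. So the total is $O(\log n)$.
\item If $e$ is never cut at all, then $h_e(2n)=2n\,g(e)\le O(\log n)$. In the periodic schedule $e$ must still be cut sometime; if not, $g(e)$ would be $0$ in a reasonable instance, or we must argue separately. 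Here I expect to need a small fix. Elements with $g(e)=0$ are harmless. For an element with $g(e)>0$ that is never cut, the height grows unboundedly. To avoid this, I will append a short cleanup or slightly modify the weights so that, say, uncut elements get priority. I expect this to be the main obstacle.
\end{itemize}

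The most natural fix is to note that if $e$ is uncut for all $2n$ steps, then $\Phi\ge e^{2n g(e)}$. So $g(e)\le \ln(n(2n+1))/(2n)$, and $e$ has a tiny rate. We can then reserve the roles of steps: run the greedy for the first $n$ steps, and in the remaining $n$ steps cut each still-uncut element once, using the oracle on weights concentrated on it. Every singleton of positive rate is independent because it lies in some $I$ with $\lambda(I)>0$. These extra steps are interleaved so that the potential bound still holds. It also helps to take $\Phi(t+1)\le\Phi(t)+n$ even for arbitrary cuts, since the worst case adds at most $\sum_e g(e)e^{h_e}$-type growth. This holds only if heights are kept small, so I will have to check this accounting carefully.

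Finally, I will address the encoding of the exponential weights, which may be irrational. Replacing $e^x$ by $2^{x}$, rounded to polynomially many bits, changes the potential inequality only by constant factors.
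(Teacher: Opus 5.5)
\paragraph{Gap: elements never cut by the greedy.}
Your single-run potential argument is correct. Averaging over $\lambda$ gives $\Phi(t+1)\le\Phi(t)+\sum_e g(e)$, so every height in a run of length $L$ started from zero is at most $\ln(n(L+1))+1$. Your wrap-around argument also correctly handles every element that is cut at least once per period. The gap is exactly the step you flag, the elements the greedy never cuts, and neither version of your fix closes it.

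\emph{Sequential cleanup fails.} Suppose the greedy occupies the first $n$ steps of the period and cleanup cuts occupy the remaining $n$. An element with, say, $g(e)=1/2$ that the greedy kept low need not lie in any cleanup set, so it reaches height about $n/2$.

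\emph{Interleaving with a real-time potential analysis fails.} Your claim that $\Phi(t+1)\le\Phi(t)+n$ holds for arbitrary cuts is false. A non-greedy step can multiply the term of every uncut element by $\exp(g(e))$, so the increase is proportional to $\Phi$ itself, not to $n$, even when heights are $O(\log n)$. Concretely, a cleanup step followed by a greedy step only yields the per-element factor $(1-g(e))\exp(2g(e))$ from your averaging bound. This factor exceeds $1$ (it equals $\exp(1)/2$ at $g(e)=1/2$), so the potential argument gives no useful bound over $n$ steps.

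\paragraph{Missing idea: run the greedy in its own clock.}
Decouple the two sequences, as the paper does.
\begin{enumerate}
\item Compute the greedy sequence $\sigma$ of length $n$ on its own: virtual heights grow by $g$ per greedy step and ignore all other cuts. This costs $n$ oracle calls.
\item Compute a round robin with one oracle call per element, using weight $1$ on $e$ and $0$ elsewhere.
\item Alternate the two sequences. This gives period $2n$ and at most $2n$ queries.
\end{enumerate}
Extra cuts only lower heights. With $\sigma$ placed on odd steps, every real-time height is at most twice the corresponding height in $\sigma^\infty$ or in the round robin run alone.
\begin{itemize}
\item Elements cut in $\sigma$ then have height $O(\log n)$ by your wrap-around argument.
\item Every element missed by $\sigma$ satisfies $\exp(ng(e))\le\Phi(n)\le n(n+1)$, i.e., $ng(e)\le\ln(n(n+1))$. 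Since the round robin cuts it every $2n$ steps, its height is at most $2\ln(n(n+1))$.
\end{itemize}

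\paragraph{Comparison with the paper.}
With this repair your route is a valid and slightly simpler variant of the paper's proof. The paper fixes the threshold $\tau_n=c\ln n/n$ in advance, restricts the potential to fast elements, and needs the contraction factor $\rho_n=1-\tau_n^3<1$. Your additive bound over all elements on a horizon of length $n$ makes the fast/slow split implicit instead.
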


In fact, our algorithm implements a type of schedule that we can also obtain through a probabilistic argument combined with a straightforward case distinction into fast and slow bamboos. This technique was already used in~\cite{cicerone2019FairHitting} for the aforementioned result. We obtain the polynomial-time algorithm by choosing weights for the maximum-weight independent-set oracle such that we greedily minimize a carefully chosen potential function at each step.

\begin{table}[t]
    \centering
    \begin{tabular}{r>{\centering}m{3cm}>{\centering}m{3cm}>{\centering\arraybackslash}m{3cm}}
    \toprule
\multirow{2}{*}{}&\multicolumn{2}{c}{upper bound}&\multirow{2}{*}{lower bound}\\\cmidrule{2-3}
&polynomial-time&existence&\\ \midrule
        general set systems & $O(\log |E|)$\hspace{3cm}{\scriptsize (\Cref{thm:compub})} & $O(\log |E|)$\hspace{3cm}{\scriptsize (\hspace{1sp}\cite{cicerone2019FairHitting}, \Cref{thm:exub})} & $\Omega(\log |E|)$\hspace{3cm}{\scriptsize (\Cref{thm:jamison-lower-bound})} \\[.5cm]

        general matroids & -- & $2$\hspace{3cm}{\scriptsize (\Cref{thm:matroidtwo})} & $2$\hspace{3cm}{\scriptsize (1-uniform example)} \\[.5cm]
        
        graphic matroids & $4$\hspace{3cm}{\scriptsize (\Cref{thm:graphic})} & $2$ \hspace{3cm}{\scriptsize ($\uparrow$)} & $2$\hspace{3cm}{\scriptsize (1-uniform example)}\\[.5cm]

        laminar matroids & $4$\hspace{3cm}{\scriptsize (\Cref{thm:laminar-FUN})} & $2$ \hspace{3cm}{\scriptsize ($\uparrow$)} & $2$\hspace{3cm}{\scriptsize (1-uniform example)}\\[.5cm]

        uniform matroids & $2$\hspace{3cm}{\scriptsize (\Cref{lem:kuniform})} & $2$ \hspace{3cm}{\scriptsize ($\leftarrow$, $\uparrow$)} & $2$\hspace{3cm}{\scriptsize (1-uniform example)}\\[.5cm]
        
        partition matroids & $2$\hspace{3cm}{\scriptsize (\Cref{cor:partition})} & $2$ \hspace{3cm}{\scriptsize ($\leftarrow$, $\uparrow$)} & $2$\hspace{3cm}{\scriptsize (1-uniform example)}\\
        \bottomrule
    \end{tabular}
    \caption{Overview of our results. When result follow from other, stronger results, this is marked with an arrow in the direction of the respective stronger result.}
    \label{tab:results}
\end{table}

We summarize our results in \Cref{tab:results}.

\subsection{Further Related Work}

We start by discussing vanilla (i.e., 1-uniform) perpetual-scheduling problems. In the paper that introduced BGT~\cite{GasieniecJKLLMR24}, it was shown by using a density bound (even a suboptimal one~\cite{FL02}) for vanilla PS that an infinite schedule achieving height at most $2$ always exists.
The same bound can also be achieved by a simpler deadline-driven strategy~\cite{Kuszmaul22}. 
Only very recently, it was disproved that the Greedy algorithm, which at any time cuts the tallest bamboo, achieves a height of at most $2$, by showing a lower bound of approximately $2.07$~\cite{GreedyLB25}; an upper bound of $4$ is known~\cite{Kuszmaul22}.

Another line of work~\cite{Croce2020,Ee21,HohneS23,Kawamura24,Mishra26} is concerned not with bounding the achievable height across all instances but with, given an instance, efficiently computing a schedule that approximates the lowest height across all schedules for that instance. 
The current state of the art in this setting is a $9/7$-approximation~\cite{Mishra26}.
For CBGT, note that, when the minimum sum of coefficients to generate $g$ is precisely $1$, the height of any schedule is at least $1$. Therefore, our bounds also imply results of the aforementioned flavor for our combinatorial variant.

Variants in which the growth rates may vary over time have also been considered. In cup games~\cite{AdlerBFGGP03,BenderFK19,BenderK21}, the growth rate is chosen online by an adversary, and it is known that Greedy achieves a discrepancy of at most $O(\log n)$ for a ground set of size $n$. Online proportional apportionment~\cite{CembranoCGV26} can be interpreted as the variant of this in which one may only cut bamboos that have just grown. In chairperson assignment~\cite{Tijdeman1980}, the varying growth rates are known ahead of time, and it is known~\cite{Tijdeman1980} that a discrepancy of less than $1$ can be achieved; a weighted version has been considered recently~\cite{LiuR26}. We note that the result for the unweighted problem together with our observation on the relationship between discrepancy and height (\Cref{lem:disc1height}) also implies that a height-$2$ schedule for vanilla BGT exists, a fact that seems to have been missed by the literature.

For vanilla PS, as introduced in the 1990s~\cite{HRTV92}, the density bound of $5/6$ was recently confirmed in a  breakthrough paper by Kawamura~\cite{Kawamura24}. A tight example (periods $2$, $3$, and $x\in\mathbb{N}$) and several weaker bounds~\cite{CC93, LL97, FL02} had previously been known. Whether deciding feasibility is in $\mathsf{P}$ remains, however, open~\cite{Kawamura2025}. Recent works consider a finite version of vanilla PS~\cite{KanellopoulosKMPP26} as well as a covering version~\cite{Mishra26}.

Combinatorial variants where the independence system is a $k$-uniform matroid have been considered for pinhweel scheduling in form of a problem called \emph{windows scheduling}~\cite{Bar-NoyL03}, where it has been shown (implicitly) that the density threshold approaches $1$ as $k\to\infty$. It has also been considered for cup games, where the achievable backlog is known to be in $O(\log n)\cap\Omega(\log (n-k))$~\cite{Kuszmaul20}. General matroids, with the same type of normalization, have also been considered for cup games~\cite{ImMZ21}, where the upper bound of $O(\log n)$ can be recovered.

In so-called \emph{polyamorous scheduling}~\cite{GasieniecSW24,BiktairovGJNSW25}, (essentially) a special case of CBGT, the ground set is the set of edges of a graph, and the set of independent sets is the set of matchings. For normalization, the authors assume that the vector of growth rates is contained in the \emph{fractional} matching polytope. This type of normalization is completely identical to our normalization on bipartite graphs (due to the integrality of said polytope) and identical up to a factor of $3/2$ on general graphs. It is known that a height of $4$ is achievable, and a height of $5.24$ in polynomial time~\cite{BiktairovGJNSW25}. The authors also show approximation hardness and consider a pinwheel version of that problem.

A generalization of the perpetual scheduling problem to set systems was previously proposed in~\cite{cicerone2019FairHitting} under the name \textit{Fair Hitting Sequence Problem}. In this problem, the input is a triple $(A,\mathcal{F},g')$ where $(A,\mathcal{F})$ is a set system as for CBGT. However, the set system is not necessarily downward closed, and the growth rates are on the sets of $\mathcal{F}$. A schedule is an infinite sequence of elements $a \in A$, and all sets $F \in \mathcal{F}$ containing $a$ have their height reset to $0$ upon cutting $a$. If we set $E = \mathcal{F}$, $\I = \{\{F \in \mathcal{F} \mid a \in F\} \mid a \in A\}$, and $g = g'$, then any schedule for $(A,\mathcal{F},g')$ corresponds to a schedule for $(E,\I,g)$ with the same height, and thus the models are equivalent (up to downward closedness and normalization). While we focus on the existence and computation of low-height schedules, the authors of \cite{cicerone2019FairHitting} focus on computing an approximation of the optimal schedule. They show that the exact computation of an optimal schedule is NP-hard, and also obtain a randomized polynomial-time $O(\log |E|)$-approximation algorithm via the rounding of an LP. It is not too difficult to see that their result actually implies the existence of an $O(\log |E|)$-height schedule with our normalization assumption.
In \Cref{ss:log-lower-bound}, we give an alternative proof of the existence of a $O(\log |E|)$-height schedule and in \Cref{ss:logboundimplement}, we give a deterministic polynomial-time algorithm which computes such a schedule. This answers an open question from \cite{cicerone2019FairHitting}.

\subsection{Overview}

In \Cref{sec:prelim}, we give definition and notation, and we also establish the relationship between CBGT and CPS. In \Cref{sec:matroid-height2}, we prove the main result. In \Cref{sec:FUN-algorithms}, we give the polynomial-time algorithms for special matroids. In \Cref{sec:general}, we consider general independence systems. We state open problems in \Cref{sec:open}. Proofs of results marked with $\star$ can be found in the appendix.

\section{Preliminaries}
\label{sec:prelim}

We give the definitions used throughout the paper, repeating some of the definitions from the introduction. We first define $\N = \{1,2,\dots\}$. We denote by $[n]$ the set
$\{1,2,\dots,n\}$ for any $n \in \N$.
We denote by $\QQ_{\geq 0}$ the set of nonnegative rational numbers.

\paragraph{Combinatorial Bamboo Garden Trimming.} 
An instance of \emph{Combinatorial Bamboo Garden Trimming} (CBGT) is a triple $(E,\I,g)$ where $(E,\I)$ is an arbitrary set system and $g \in \QQ_{\geq 0}^E$ is a growth rate vector. 
We require that $g$ lies in the convex hull of the characteristic vectors of the independent sets in $\I$. 
That is, there exists a set of non-negative coefficients $\{\lambda(I) \mid I \in \mathcal{I}\}$ such that $\sum_{I \in \I} \lambda(I) = 1$ and $g(e) = \sum_{I \in \mathcal{I} : e \in I} \lambda(I)$ for every $e \in E$.

A \emph{schedule} $\pi$ for a CBGT instance $(E,\I,g)$ is a finite or infinite sequence $\pi(1),\pi(2),\dots$ where $\pi(t) \subseteq E$ is the set of bamboos to cut on day $t$ for every $t \in [|\pi|]$. 
If $\pi$ is infinite, then we define $[|\pi|] = \N$.
A schedule is \emph{valid} if $\pi(t) \in \I$ for every $t \in [|\pi|]$ and otherwise it is \emph{invalid}. 
The following definitions apply to \emph{all} (finite/infinite valid/invalid) schedules.

For any $t \in [|\pi|]$ and $e \in E$, we denote by $h^\pi_t(e)$ the height of bamboo $e$ on day $t$ after $e$ has grown but before $\pi(t)$ is cut.
The \emph{height} $h(\pi)$ of a schedule $\pi$ is then 
\[
    h(\pi) = \max_{e \in E}\sup_{t \in [|\pi|]} h^\pi_t(e)
\] 
or, in words, the maximum height reached by any bamboo during the execution of $\pi$. 
We can, equivalently, define the height of a schedule $\pi$ as follows. 
The \emph{recurrence time} $\gamma^\pi(e)$ of a bamboo $e \in E$ under a schedule $\pi$ is the maximum number of days between consecutive schedulings of $e$ according to $\pi$. 
That is, $\gamma^\pi(e) = \sup_{i,i' \in [|\pi|]}\{i'-i+1 \mid e \notin \bigcup_{j=i}^{i'} \pi(j) \}$.
The maximum height $h^\pi(e)$ reached by a bamboo $e \in E$ under $\pi$ is then $h^\pi(e) = \gamma^\pi(e) \cdot g(e)$ and $h(\pi) = \max_{e \in E} h^\pi(e)$.

We also define the \emph{discrepancy} $d(\pi)$ of a schedule $\pi$. 
Intuitively, the discrepancy is a measure of how closely the fraction of the time that $e$ has been scheduled until day $t$ matches the ideal fraction $t\cdot g(e)$. 
Define $A(\pi,e,t) = |\{1 \leq i \leq t \mid e \in \pi(i)\}|$ for all $t \in [|\pi|]$ as the number of occurrences of $e$ among the first $t$ sets of $\pi$. 
The discrepancy $d^\pi(e)$ of a bamboo $e \in E$ is then
\[
    d^\pi(e)  = \sup_{t \in [|\pi|]} |t \cdot g(e) - A(\pi,e,t)|,
\]
and we define the discrepancy $d(\pi)$ of $\pi$ as $d(\pi) = \max_{e \in E}d^\pi(e)$.
Note that $d^\pi(e) < 1$ if and only if $\fl{tg(e)} \leq A(\pi,e,t) \leq \ce{tg(e)}$ for $t \in [|\pi|]$.

\paragraph{Graph theory.} 
Let $G = (V(G),E(G))$ be a graph. 
Given a subset $X \subseteq V(G)$, the \emph{subgraph of $G$ induced by $X$} is denoted by $G[X]$ and is obtained from $G$ by removing all vertices in $V(G) \setminus X$ along with any edges incident to them. 
The \emph{neighborhood} $N_G(v)$ of a vertex $v \in V(G)$ is the set of vertices adjacent to $v$ in $G$.

\paragraph{Independence Systems and Matroids.}
We recall some basic definitions from matroid theory and refer the reader to~\cite{oxley2011matroid} for further details.

We denote by $\R^E$ the set of $|E|$-dimensional real vectors indexed by $e \in E$. 
We denote by $\mathbf{1}^n$ the $n$-dimensional vector whose entries are all $1$. 
We omit the superscript $n$ when it is clear from context. 
The \emph{convex hull} of a set $X \subseteq \R^n$ is denoted $\conv(X)$. 
The \emph{incidence vector} of a subset $X \subseteq E$ is denoted by $\ind_X$ and is defined as the $|E|$-dimensional vector such that $\ind_X(e) = 1$ if $e \in X$ and otherwise $\ind_X(e) = 0$ for all $e \in E$. 

A set system $M$ is a tuple $(E,\I)$ where $E$ is a set and $\I$ is any family of subsets of $E$. We call $M$ an \emph{independence system} if $\I$ is downward-closed ($Y \in \I, X \subseteq Y \Rightarrow X \in \I$) and $\emptyset \in \I$. The set $E$ is called the \emph{ground set} of $M$ (also denoted $E(M)$) and $\I$ is called the set of \emph{independent sets} of $M$ (also denoted ($\I(M)$). We call $M$ a \emph{matroid} if $M$ is an independence system and for all $X \subseteq E$, all inclusion-wise maximal independents subsets of $X$ have the same size.

The \emph{rank} $r_M(X)$ of a set $X \subseteq E$ is the maximum size of an independent subset of $X$. 
We omit the subscript $M$ when $M$ is clear from context. 
The \emph{rank} of $M$ itself is $r(E)$ and is also denoted $r(M)$. 
The maximal independent subsets of $E$ are called \emph{bases}.

The \emph{matroid polytope} of a matroid $M$ is the polytope $\mathcal{M} \subseteq \R^E$ such that $x \in \mathcal{M}$ if and only if 
\[
0 \leq \sum_{e \in X} x_e \leq r_M(X)\quad \forall X \subseteq E.
\]
It is a well-known theorem of Edmonds \cite{Edmonds1970} that $\mathcal{M} = \conv(\{\ind_I \mid I \in \I(M)\})$, see e.g. \cite[Theorem 13.12]{Korte2018}. 
When $(E,\mathcal{I},g)$ is an CBGT instance where $M=(E,\mathcal{I})$ is a matroid, the requirement that $g \in \conv(\{\ind_I \mid I \in \I\})$ is thus equivalent to
\begin{equation} 
    \label{eq:growth-and-rank}
    0 \leq \sum_{e\in X}g(e) \leq r(X)\quad \forall X \subseteq E.
\end{equation}

Let $M_1=(E_1,\mathcal{I}_1),\dots,M_k=(E_k,\mathcal{I}_k)$ be matroids on disjoint ground sets and let $E=E_1\cup\dots\cup E_k$. 
The \emph{direct sum} $M_1\oplus \dots\oplus M_k$ is the matroid $M=(E,\mathcal{I})$, where $\mathcal{I}=\{I\subseteq E\mid I\cap E_i\in\mathcal{I}_i\ \text{for $i\in[k]$}\}$. 

Let $G$ be a bipartite graph with $V(G)=S \cup T$. 
A subset $I\subseteq S$ is said to be \emph{matchable} if there is a matching in $G$ covering $I$. If $\mathcal{I}$ is the family of matchable sets, then $(S,\mathcal{I})$ is called \emph{transversal matroid}. 
The \emph{$k$-uniform matroid} with ground set $E$ and rank $k$ is defined by the independent sets $\{ I \subseteq E \mid |I| \leq k \}$. 
A \emph{partition matroid} is a direct sum of uniform matroids.
The \emph{graphic matroid} $(E,\I)$ has an associated (undirected) graph $G$ with edge set $E$, and  its family of independent sets is $\mathcal{I}= \{I\subseteq E \mid I \text{ is the edge set of a forest in } G\}$. A family of subsets  $\mathcal{L} \subseteq 2^{E}$ is \textit{laminar} if for any $L,L' \in \mathcal{L}$, we have $L \subseteq L'$, $L' \subseteq L$, or $L \cap L' = \emptyset$. 
Given a laminar family $\mathcal{L}$ and a function $b\colon \mathcal{L} \rightarrow \N$, a matroid $(E,\I)$ is \emph{laminar} if its set system is described as $\mathcal{I}= \{I\subseteq E \mid |I\cap L| \leq b(L) \text{ for each } L\in\mathcal{L} \}$.

\paragraph{Algorithms.} An \emph{algorithm} $\mathcal{A}$ for a family $\mathcal{F}$ of CBGT instances takes as input an instance $(E,\I,g) \in \mathcal{F}$ and outputs a schedule $\pi$ for $(E,\I,g)$ \textit{one step at a time}. 
We measure the efficiency of $\mathcal{A}$ based on the number of operations used to compute the set of bamboos to at each step. We say that an algorithm $\mathcal{A}$ \emph{implements} a schedule $\pi$ if $\pi$ is the sequence of sets obtained by running $\mathcal{A}$ indefinitely.

In \Cref{sec:FUN-algorithms}, where we present our polynomial-time algorithms for $k$-uniform, graphic, and laminar matroids, we assume that $\I$ is represented as just the integer $k$, the associated graph, and, respectively, the laminar family $\mathcal{L}$ and the upper bound $b(L)$ for each $L \in \mathcal{L}$. It is with respect to this input representation that our algorithms are polynomial-time. In \Cref{sec:general}, where we give an algorithm implementing a height-$O(\log n)$ schedule for arbitrary independence systems\footnote{In fact, as we argue, the algorithm can be used for arbitrary (not necessarily downward-closed) set systems.}, we assume that $\I$ is given implicitly via access to a \emph{maximum-weight independent-set oracle}. Given non-negative weights $w(e)$ for $e \in E$, such an oracle can find a set $I \in \I$ with maximum total weight $\sum_{e\in I}w(e)$. Note that for several natural classes of set systems (e.g., matroids, intersections of two matroids, and matchings in general graphs) polynomial-time implementations of such an oracle are well known. Furthermore, the oracle is trivial to implement given $\I$ explicitly as, e.g., in~\cite{cicerone2019FairHitting}.

\subsection{Approximation Factor of Low-Height Schedules}
\label{sec:approxfact}

The focus of this paper is on existence of schedules whose (absolute) height is low, and on algorithms for implementing such schedules. In this subsection, we show how one can turn such algorithms into \emph{approximation algorithms}, i.e., algorithms whose output is guaranteed to be at most a factor of $c$ away from the optimal schedule in terms of height.

In fact, all that is needed for an algorithm implementing a height-$c$ schedule to be a $c$-approximation algorithm is that $g$ is ``maximally'' scaled before it is given as input. For BGT, this observation reduces to the fact that a height-$c$ schedule is a $c$-approximation schedule when $g$ is such that $\sum_{e \in E} g(e) = 1$, which has been used in the literature.
In general, consider the primal--dual pair of LPs displayed in \Cref{fig:LPprimal}. It was observed in \cite[Section 3]{cicerone2019FairHitting} that $\Lambda(g)$, is a lower bound for the height $h(\pi^\star)$ of the optimal schedule. This gives the following, a proof of which we also give in the appendix using the terminology of this paper.

\begin{restatable}[$\star$]{thm}{optlb}
    \label{thm:optlb}
    Let $(E,\I,g)$ be a CBGT instance with $\Lambda(g) = 1$. The height $h(\pi^\star)$ of the optimal schedule $\pi^\star$ is at least $1$.
\end{restatable}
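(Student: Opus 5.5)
The plan is to use LP duality with a dual-feasible weighting of the elements, and to double-count weighted cuts over a long time horizon. I take the LPs in \Cref{fig:LPprimal} to be the covering LP $\Lambda(g)=\min\{\sum_{I\in\I}\lambda(I) : \sum_{I\ni e}\lambda(I)\ge g(e)\ \forall e,\ \lambda\ge 0\}$ and its dual $\max\{\sum_{e}g(e)y_e : \sum_{e\in I}y_e\le 1\ \forall I\in\I,\ y\ge 0\}$. By strong duality and $\Lambda(g)=1$, there is a $y\ge 0$ with $\sum_{e\in I}y_e\le 1$ for every $I\in\I$ and $\sum_e g(e)y_e=1$.

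Now fix an arbitrary infinite valid schedule $\pi$. If $h(\pi)=\infty$ there is nothing to show, so let $h=h(\pi)<\infty$. For each $e$ with $g(e)>0$, the identity $h^\pi(e)=\gamma^\pi(e)g(e)$ gives $\gamma^\pi(e)\le h/g(e)$. Every window of $\gamma^\pi(e)$ consecutive days therefore contains a cut of $e$. Partitioning $[T]$ into disjoint such windows shows
\[
A(\pi,e,T)\ \ge\ \left\lfloor \frac{T}{\gamma^\pi(e)}\right\rfloor\ \ge\ \frac{Tg(e)}{h}-1 .
\]
For $g(e)=0$ the same lower bound holds trivially, since the right-hand side is $-1$.

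Next, double-count $S_T=\sum_{t=1}^{T}\sum_{e\in\pi(t)}y_e$. Each $\pi(t)\in\I$, so dual feasibility gives $S_T\le T$. Regrouping by elements gives
\[
S_T=\sum_e y_e\,A(\pi,e,T)\ \ge\ \frac{T}{h}\sum_e g(e)y_e-\sum_e y_e=\frac{T}{h}-\sum_e y_e ,
\]
using $y\ge 0$. Hence $T\ge T/h-\sum_e y_e$ for every $T$. Dividing by $T$ and letting $T\to\infty$ yields $1\ge 1/h$, i.e.\ $h\ge 1$. Since $\pi$ was arbitrary, this applies in particular to the optimal schedule $\pi^\star$.

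The only delicate step is the counting bound on $A(\pi,e,T)$ derived from the recurrence time; it holds because windows of length $\gamma^\pi(e)$ always contain a cut. The additive $-1$ error per element is harmless because it is independent of $T$ and is killed by the limit. I would also confirm that the LP in \Cref{fig:LPprimal} has exactly this covering/packing form; if instead it uses equality constraints, the dual $y$ is free in sign, and I would first argue that one may restrict to $y\ge 0$ by downward-closedness of $\I$.
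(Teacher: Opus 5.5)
Your proof is correct, but it works on the dual side of the LP, whereas the paper's proof is purely primal.

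The paper argues by contradiction. Assuming $h(\pi^\star)<1$, it uses the state-repetition observation from the appendix to replace $\pi^\star$ by a periodic schedule. It then takes the frequencies $\lambda(I)$ with which each set appears in one period as a primal solution. A recurrence time below $1/g(e)$ makes every covering constraint strictly slack, so $\lambda$ can be scaled down to an objective value below $1$, contradicting $\Lambda(g)=1$.

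You instead fix an optimal dual $y$ (by strong duality) and use it as a certificate. Packing feasibility gives $S_T\le T$, the per-element counting bound gives $S_T\ge T/h-\sum_e y_e$, and the constant error disappears as $T\to\infty$. The two arguments are dual views of the same fact: the paper's frequency vector is exactly the kind of primal solution your $y$ bounds.

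The trade-off is as follows. The paper's route avoids LP duality but needs the reduction to periodic schedules. Yours needs strong duality but applies verbatim to every infinite schedule, periodic or not. It therefore lower-bounds the infimum of the heights directly and also gives an explicit finite-horizon inequality.

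One small point of rigor concerns $\gamma^\pi(e)$. The displayed formula for it in the preliminaries measures the longest cut-free window, which is off by one from the identity $h^\pi(e)=\gamma^\pi(e)g(e)$. Your claim that every window of $\gamma^\pi(e)$ consecutive days contains a cut holds for the recurrence time in the latter sense (the distance between consecutive cuts). That is also the reading the paper's own proof uses. It is cleanest to derive the claim directly from the before-cut heights: a cut-free window of length $L$ forces a height of at least $(L+1)g(e)$ on the following day, so every window of $\lfloor h/g(e)\rfloor$ consecutive days contains a cut.
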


Note that, if $\Lambda(g) = 1$, then $g \in \conv(\{\ind_I \mid I \in \I\})$. Indeed, the vector $\lambda$ corresponding to the optimal primal solution is the set of coefficients of a convex combination $g' \in \conv(\{\ind_I \mid I \in \I\})$ (as $\sum_{I \in \I} \lambda(I) = 1$) and $g' \geq g$ (by the constraints of the LP). We can transform $\lambda$ such that all constraints (of the first type) are tight, without affecting $\sum_{I \in \I} \lambda(I)$, thus obtaining $g$ as a convex combination of $\{\ind_I \mid I \in \I\}$. Indeed, if there is an $e \in E$ with $g'(e) > g(e)$, then pick any $I \in \I$ containing $e$ with $\lambda(I) > 0$ and move some mass from $\lambda(I)$ to $\lambda(I \setminus \{e\})$. Repeat until $g' = g$.

To obtain a $c$-approximation schedule from an algorithm computing a height-$c$ schedule using \Cref{thm:optlb}, all we need to do is give $\hat{g} = g/\Lambda(g)$ to the algorithm instead of $g$. This guarantees that the obtained schedule $\pi$ is a $c$-approximation with respect to $\hat{g}$ (as $h(\pi)/h(\pi^\star) \leq c/1$ with respect to $\hat{g}$), and thus also with respect to $g$ as $h(\pi)/h(\pi^\star)$ is invariant under scaling of $g$.

Note that $\Lambda(g)$ can be computed in polynomial time with access to a maximum-weight independent-set oracle for $\I$. This follows directly from the fact that the separation problem for the dual (and thus the dual as a whole via the ellipsoid method) reduces to finding a maximum-weight independent set: Interpret a candidate solution $\{x_e\}_{e \in E}$ as weights and check if a maximum weight $I \in \I$ satisfies $\sum_{e \in I} x_e \leq 1$ using the oracle. If so, $x$ is feasible. If not, return the violated dual constraint for this $I$.

While we find the assumption that $\I$ is downward-closed to be natural, we also consider the case of non-downward-closed $\I$ in~\Cref{sec:general} as this is formally required for our results to be comparable with \cite{cicerone2019FairHitting}. An efficient $O(\log |E|)$-approximation algorithm can be obtained from our algorithm implementing a height-$O(\log |E|)$ schedule, which we present in~\Cref{sec:general}, even for non-downward-closed set systems.  For such set systems, it may not be that $g \in \conv(\{\ind_I \mid I \in \I\})$, even when $\Lambda(g) = 1$. However, the proof of \Cref{thm:optlb} does not assume that $\I$ is downward-closed, and the analysis of our height-$O(\log |E|)$ schedule only requires that there exists $g' \in \conv(\{\ind_I \mid I \in \I\})$ such that $g' \geq g$, which certainly does exist when $\Lambda(g) = 1$ as we argued above.

\begin{figure}
\centering

\begin{subfigure}{0.45\textwidth}
\centering
\[
\begin{aligned}
\Lambda(g) = \min \quad & \sum_{I\in\mathcal I} \lambda(I) \\
\text{s.t.} \quad
& \sum_{I\in\mathcal I: e \in I} \lambda(I) \geq g(e) && \forall e \in E, \\
& \lambda(I) \ge 0 && \forall I\in\mathcal I.
\end{aligned}
\]
\caption{Primal}
\label{fig:LPprimal-left}
\end{subfigure}
\hfill
\begin{subfigure}{0.45\textwidth}
\centering
\[
\begin{aligned}
\max \quad & \sum_{e\in E} x_e g(e) \\
\text{s.t.} \quad
& \sum_{e\in I} x_e \le 1 && \forall I\in\mathcal I, \\
& x_e \geq 0 && \forall e\in E.
\end{aligned}
\]
\caption{Dual}
\label{fig:LPprimal-right}
\end{subfigure}

\caption{Linear programs used to obtain approximation guarantees from height guarantees.}
\label{fig:LPprimal}
\end{figure}

\subsection{Combinatorial Pinwheel Scheduling}

An instance of \emph{Combinatorial Pinwheel Scheduling} (CPS) is a triple $(E,\I,a)$ where $(E,\I)$ is an arbitrary set system and $a \in \N^E$. 
We say that $(E,\I,a)$ is \emph{schedulable} if there exists a valid infinite schedule $\pi$ such that, for all $e \in E$, we have $e \in \bigcup_{i=t}^{t+a(e)-1} \pi(i)$ for all $t \in \N$.
The \emph{density} of $(E,\I,a)$ is the minimum $\rho$ such that there exist $\lambda(I)\geq 0$ for all $I\in\mathcal{I}$ with $\sum_{i\in\mathcal{I}}\lambda(I)=\rho$ and $\sum_{I\in\mathcal{I}\colon e\in I} \lambda(I)\geq 1/a(e)$ for all $e\in E$. 
The \emph{density threshold} $\rho^\star(\F)$ for a family $\F$ of CPS instances is the largest value $\rho$ such that any instance with density less than $\rho$ is schedulable. 

In this language, the recent confirmation of the pinwheel conjecture~\cite{Kawamura24} shows that $\rho^\star(\F) = 5/6$ when $\F = \{(E,\I,a) \mid \text{$(E,\I)$ is a $1$-uniform matroid}\}$. 
It follows from our results that $\rho^\star(\F) \geq 1/2$ for the much broader family $\F = \{(E,\I,a) \mid \text{$(E,\I)$ is a matroid}\}$. 
This is a consequence of the following straightforward observation.

\begin{restatable}{prop}{proppinwheel}
    \label{prop:pinwheel}
    Let $B = (E,\I,g)$ be a CBGT instance and let $P = (E,\I,a)$ be the CPS instance where $a(e) = \lfloor c/g(e) \rfloor$ for all $e \in E$ for some $c \in \R_{> 0}$. 
    Then $P$ is schedulable if and only if there exists an infinite valid schedule $\pi$ for $B$ with $h(\pi) \leq c$.
\end{restatable}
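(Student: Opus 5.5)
The plan is to show that the same infinite valid schedule $\pi$ works for both problems. The reason is that, for each element separately, the condition ``$e$ appears in every window of $a(e)$ consecutive days'' is equivalent to ``the height of $e$ never exceeds $c$''. Validity ($\pi(t)\in\I$) is identical in the two settings, so only this per-element equivalence needs proof. We use the recurrence-time formulation from the preliminaries: $h^\pi(e)=\gamma^\pi(e)\cdot g(e)$, where $\gamma^\pi(e)$ is the supremum of lengths $i'-i+1$ of intervals of days $[i,i']$ avoiding $e$.

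First we dispose of the degenerate case $g(e)=0$. Then $a(e)=\lfloor c/g(e)\rfloor$ is not defined as a natural number. The natural reading is $a(e)=\infty$, or equivalently dropping $e$. With that reading, $e$ imposes no pinwheel constraint and always has height $0\le c$, so the equivalence is trivial. We will note this convention explicitly.

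For $g(e)>0$, we translate the window condition. The condition $e\in\bigcup_{i=t}^{t+a(e)-1}\pi(i)$ for all $t$ says exactly that no interval of $a(e)$ consecutive days avoids $e$. Equivalently, every avoiding interval has length at most $a(e)-1$, i.e. $\gamma^\pi(e)\le a(e)-1$. We must be careful about the off-by-one here. Height $h^\pi_t(e)$ is measured after growth and before cutting, so an avoiding run of length $\ell$ followed by a cut produces height $(\ell+1)g(e)$. We therefore have to check which convention the paper's definition of $\gamma^\pi$ encodes. Since $h^\pi(e)=\gamma^\pi(e)g(e)$ is asserted, $\gamma$ must effectively count the days from one cut to the next, including the cutting day. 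Under that reading, the pinwheel condition becomes $\gamma^\pi(e)\le a(e)$. I will fix this convention by a direct computation of $h^\pi_t(e)$: it equals $g(e)$ times the number of days since the last cut, counting day $t$ and the initial period. This is the one step where care is needed, and I expect it to be the main (if minor) obstacle. A related subtlety is the initial segment before the first cut, where the height starts at $0$; this segment is treated exactly like a gap between cuts.

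Finally, since $\gamma^\pi(e)$ is an integer (or $\infty$), $\gamma^\pi(e)\le a(e)=\lfloor c/g(e)\rfloor$ holds iff $\gamma^\pi(e)\le c/g(e)$. This is iff $\gamma^\pi(e)g(e)\le c$, i.e. $h^\pi(e)\le c$. Taking the conjunction over all $e\in E$ gives: $\pi$ witnesses schedulability of $P$ iff $\pi$ is a valid schedule for $B$ with $h(\pi)\le c$. This proves both directions simultaneously.
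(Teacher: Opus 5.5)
Your proposal is correct and follows the same chain of equivalences as the paper. For every $e\in E$: $h^\pi(e)\le c$ iff $\gamma^\pi(e)g(e)\le c$ iff $\gamma^\pi(e)\le\lfloor c/g(e)\rfloor=a(e)$ iff every window of $a(e)$ consecutive days contains $e$.

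Your extra care about the off-by-one is well placed. Taken literally, the paper defines $\gamma^\pi(e)$ as the length of the longest $e$-avoiding run, which is one less than what $h^\pi(e)=\gamma^\pi(e)g(e)$ requires. Counting days since the last cut, including the cutting day and treating the initial segment as a gap after a virtual cut at day $0$, is exactly the reading under which both the paper's argument and yours go through.
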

\begin{proof}
    Let $\pi$ be a schedule for $B$. 
    Then $h(\pi) \leq c$ iff $h^\pi(e) = \gamma_\pi(e)g(e) \leq c$ for all $e\in E$ iff $\gamma_\pi(e) \leq \lfloor  c/g(e)\rfloor = a(e)$ for all $e \in E$ iff $e$ occurs at least once in any period of $a(e)$ consecutive time steps in $\pi$ for all $e \in E$.
\end{proof}

This, together with \Cref{thm:matroidtwo} (\Cref{sec:matroid-height2}), implies a lower bound on the density threshold of $\{(E,\I,a) \mid \text{$(E,\I)$ is a matroid}\}$.

\begin{restatable}[$\star$]{corollary}{cordensity}
    Let $P = (E,\I,a)$ be a CPS instance where $(E,\I)$ is a matroid. 
    If $P$ has density at most $1/2$, then $P$ is schedulable.
\end{restatable}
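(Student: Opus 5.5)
Given a CPS instance $P=(E,\I,a)$ on a matroid with density $\rho\le 1/2$, I would build a CBGT instance whose growth rates are twice the frequencies, apply \Cref{thm:matroidtwo} to it, and convert the resulting height bound into the pinwheel constraints.

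First I fix the growth vector. By the definition of density there are coefficients $\lambda(I)\ge 0$ with $\sum_I\lambda(I)=\rho\le 1/2$ and $\sum_{I\ni e}\lambda(I)\ge 1/a(e)$ for every $e$. Set $g(e)=2/a(e)$, which is rational. I need $g$ to be a convex combination of incidence vectors of independent sets. The vector $g'=\sum_I 2\lambda(I)\ind_I$ uses total mass $2\rho\le 1$, so I put the leftover mass $1-2\rho$ on $\emptyset\in\I$. Then $g'$ lies in $\conv(\{\ind_I\mid I\in\I\})$ and $g'\ge g$.

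Next I lower $g'$ to exactly $g$. Here I use the downward-closedness trick from \Cref{sec:approxfact}: move mass from $\lambda(I)$ to $\lambda(I\setminus\{e\})$ while $g'(e)>g(e)$. Alternatively, the matroid polytope description \eqref{eq:growth-and-rank} is downward closed in the nonnegative orthant, so $0\le g\le g'$ already gives $g$ in the polytope. Either way, $(E,\I,g)$ is a valid CBGT instance.

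Now I apply \Cref{thm:matroidtwo}. It gives an infinite valid schedule $\pi$ with $h(\pi)<2$, so in particular $h(\pi)\le 2$. By \Cref{prop:pinwheel} with $c=2$, the CPS instance with periods $\lfloor 2/g(e)\rfloor=\lfloor a(e)\rfloor=a(e)$ is schedulable, and that instance is exactly $P$. One could also argue directly: $\gamma^\pi(e)\,g(e)\le 2$ gives $\gamma^\pi(e)\le a(e)$.

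The only step with any subtlety is the normalization. I must pad the convex combination with $\emptyset$ to bring the total mass to exactly $1$, and then reduce the inequality $g'\ge g$ to equality using downward closure. Both steps are routine for matroids.
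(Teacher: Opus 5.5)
Your proposal is correct and follows the paper's proof. Like the paper, you set $g(e)=2/a(e)$, double the density coefficients to certify that $g$ lies in the matroid polytope, apply \Cref{thm:matroidtwo}, and convert the height bound via \Cref{prop:pinwheel} with $c=2$. The paper only asserts the normalization; you fill it in by padding the convex combination with $\emptyset$ and using downward closure of the matroid polytope to pass from $g'\ge g$ to $g$.
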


Together with \Cref{thm:lower} (\Cref{ss:log-lower-bound}), \Cref{prop:pinwheel} also implies an upper bound on the density threshold of $\{(E,\I,a) \mid \text{$(E,\I)$ is an arbitrary set system}\}$. 

\begin{restatable}[$\star$]{corollary}{corddensityup}
    For any $n$, there exists a CPS instance $(E,\I,a)$ with $|E| \leq n$ and density $O(1/\log n)$ that is not schedulable.
\end{restatable}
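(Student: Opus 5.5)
The statement to prove is the second corollary: for every $n$ there is a non-schedulable CPS instance with $|E|\le n$ and density $O(1/\log n)$. I will combine the lower-bound instance of \Cref{thm:lower} (from \Cref{ss:log-lower-bound}) with \Cref{prop:pinwheel}, using the fact that density is computed by the same LP as $\Lambda$.

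\textbf{Step 1: the base instance.} Take the instance $(E,\I,g)$ from \Cref{thm:lower} with $|E|\le n$. It has $g(e)=1/2$ for all $e$, with $g$ a convex combination of incidence vectors of $\I$. Every infinite valid schedule has height at least $c_0\log n$ for some absolute constant $c_0>0$. (Alternatively one can use \Cref{thm:jamison-lower-bound}, rounding $n$ down to the form $2^k-1$, which only costs constant factors.)

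\textbf{Step 2: build the CPS instance.} Fix $c$ slightly below $c_0\log n$, say $c=c_0\log n-\varepsilon$. Define $P=(E,\I,a)$ by $a(e)=\lfloor c/g(e)\rfloor=\lfloor 2c\rfloor$. By \Cref{prop:pinwheel}, if $P$ were schedulable there would be an infinite valid schedule of height at most $c<c_0\log n$. This contradicts Step 1, so $P$ is not schedulable.

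\textbf{Step 3: bound the density.} Let $\lambda$ be the convex-combination coefficients for $g$, so $\sum_I\lambda(I)=1$ and $\sum_{I\ni e}\lambda(I)=1/2$. Scale them to $\lambda'=\lambda\cdot\frac{2}{a(e)}$; since all $a(e)$ are equal, this is a single scalar. Then $\sum_{I\ni e}\lambda'(I)=1/a(e)$, which meets the density constraint. The total mass is $\sum_I\lambda'(I)=2/\lfloor 2c\rfloor=O(1/\log n)$, since $a(e)\ge 2c-1=\Omega(\log n)$ for $n$ large. For small $n$ the claim holds trivially by adjusting the constant.

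\textbf{Main obstacle.} The only delicate point is the floor and the non-strict inequality in \Cref{prop:pinwheel}. Choosing $c$ strictly below the lower bound resolves it. Everything else is bookkeeping, provided \Cref{thm:lower} indeed supplies uniform growth rates $1/2$; otherwise I would use \Cref{thm:jamison-lower-bound} and scale each $a(e)=\lfloor c/g(e)\rfloor$ individually. In that case the density is at most $\sum_I\lambda(I)\cdot\max_e \frac{g(e)}{c-g(e)}=O(1/\log n)$, using $g(e)\le 1$.
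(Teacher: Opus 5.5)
Your proposal is correct and follows essentially the paper's argument: take the \Cref{thm:lower} instance on $n'=\binom{2k}{k}\le n$ elements with $k$ maximal, set $a(e)=\lfloor c/g(e)\rfloor$ with $c=\Theta(\log n)$, apply \Cref{prop:pinwheel}, and scale $\lambda$ by $\Theta(1/\log n)$. The paper makes explicit that $n'\ge n/4$, which is what your bound $c_0\log n$ silently needs. Your choice of $c$ strictly below the lower bound, and your exact scaling by $2/\lfloor 2c\rfloor$ via $g\equiv 1/2$, are slightly more careful than the paper's $c=\tfrac14\log_2 n'$ together with $\lfloor c/g(e)\rfloor\ge c/(2g(e))$. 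In your unneeded fallback, the scaling factor should be $\max_e 1/(c-g(e))$ rather than $\max_e g(e)/(c-g(e))$, though this does not affect the $O(1/\log n)$ conclusion.
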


\section{Existence of Height-2 Schedules on Matroids}
\label{sec:matroid-height2}

In this section, we show~\Cref{thm:matroidtwo}, which we restate here for convenience.

\matroidtwo

Throughout this section, fix a CBGT instance $(E,\I,g)$ where $M = (E,\I)$ is a matroid. We assume that no $e \in E$ has $g(e) = 0$. This is without loss of generality as such $e$ can simply be removed and the resulting submatroid considered instead. 
By \eqref{eq:growth-and-rank}, we can assume that
\begin{equation}\label{eq:full_rank}
    \sum_{e \in E} g(e) = r(M).
\end{equation}
Indeed, if this does not hold, we modify $g$ so that it does hold: 
Let $\lambda \colon \I(M) \rightarrow [0,1]$ be such that $\sum_{I \in \I(M)} \lambda(M) =1$ and $\sum_{I \in \I(M) : e \in I} \lambda(I) = g(e)$ for all $e \in E$. 
If $\lambda(I) = 0$ for all non-bases $I \in \I(M)$, then 
\[
    \sum_{e \in E} g(e) = \sum_{e \in E} \sum_{I \in \I(M) \colon e \in I} \lambda(I) = \sum_{I \in \I(M)} \lambda(I)|I| = r(M)\sum_{I \in \I(M)} \lambda(I) = r(M).
\]
If $\lambda(I) > 0$ for some non-basis $I \in \I(M)$, then we can simply pick basis $I' \supset I$, increase $\lambda(I')$ by $\lambda(I)$ and set $\lambda(I) = 0$, iteratively until~\eqref{eq:full_rank} holds.
Clearly, since we did not decrease $g(e)$ for any $e \in E$, it suffices to show that the modified instance has a schedule of height less than $2$.

In fact, we will show the stronger statement: $(E,\I,g)$ has a schedule $\pi$ with $d(\pi) < 1$.
We start by showing that this is indeed stronger.

\begin{restatable}{lemma}{disc1height}
    \label{lem:disc1height}
    If $\pi$ is a schedule with $d(\pi) < 1$, then $h(\pi) < 2$.
\end{restatable}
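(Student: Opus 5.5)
We argue bamboo by bamboo: we fix $e \in E$ and show $h^\pi(e) = \gamma^\pi(e)\, g(e) < 2$. Equivalently, any window of consecutive days in which $e$ is never cut has length $\ell$ with $\ell\, g(e) < 2$. If $g(e)=0$ there is nothing to show, so we assume $g(e)>0$.

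First, we fix a gap. Suppose $e \notin \bigcup_{j=i}^{i'} \pi(j)$ and set $\ell = i'-i+1$. Since $e$ is not cut on days $i,\dots,i'$, the counts agree: $A(\pi,e,i') = A(\pi,e,i-1)$, where we use the convention $A(\pi,e,0)=0$.

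Next, we apply the discrepancy bound at the two endpoints $t=i'$ and $t=i-1$. At $t=i'$ it gives $i'g(e) - A(\pi,e,i') < 1$. At $t=i-1$ it gives $A(\pi,e,i-1) - (i-1)g(e) < 1$. When $i=1$ the second inequality holds trivially, because $t=0$ contributes $0 < 1$; the supremum in the definition of $d^\pi(e)$ ranges only over $t \ge 1$, so this boundary case has to be noted separately. Adding the two inequalities and using that the two counts are equal yields
\[
  \ell\, g(e) = \bigl(i' - (i-1)\bigr) g(e) < 2 .
\]

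Finally, we pass to the supremum. Since $d(\pi)<1$ is a strict bound, we can write $d(\pi) = \delta < 1$. The argument then gives the uniform bound $\ell\, g(e) \le 2\delta$ for every gap, where the $i=1$ case gives even $\ell g(e) \le \delta$. Taking the supremum over gaps gives $\gamma^\pi(e)\, g(e) \le 2\delta < 2$. Maximizing over the finitely many $e$ then yields $h(\pi) < 2$.

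The main subtlety is exactly this strictness. A bound of $<2$ for each individual window does not by itself survive taking suprema, so we need the uniform $\delta$ rather than only the pointwise inequality.
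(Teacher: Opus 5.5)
Your proof is correct and is essentially the paper's argument: both compare $A(\pi,e,i')$ with $A(\pi,e,i-1)$ across a cut-free window. You add the two raw discrepancy inequalities to get $\ell\, g(e)\le 2d(\pi)$, where the paper works with the floor/ceiling form; your uniform $\delta$ is valid but not needed, because $\ell$ is an integer, so $\ell<2/g(e)$ for every window already makes the supremum strictly below $2/g(e)$.

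Be aware that, like the paper, you rely on the stated identity $h^\pi(e)=\gamma^\pi(e)g(e)$ with $\gamma^\pi(e)$ the longest cut-free run, whereas the height measured after growth and before the cut is $(\ell+1)g(e)$. Under that literal reading the lemma fails: for $g(e)=3/7$ with cuts on days $1,3,5,10,12,14$, repeated with period $14$, the discrepancy is $6/7$ but the height on day $10$ is $15/7$.
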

\begin{proof}
    Fix any $e \in E$. 
    As $d(\pi) < 1$, we have \[
        A(\pi,e,t) \in \{\lfloor g(e)t \rfloor,\lceil g(e)t\rceil\}
    \]
    for all $t \in [|\pi|]$. Let $1 \leq i < i'$. If $i'-i+1 \geq
    2/g(e)$, then \[
        A(\pi,e,i') \geq \left\lfloor g(e)\left(i+\frac{2}{g(e)}-1\right)
            \right\rfloor = \lfloor
        g(e)(i-1) \rfloor + 2 > \lceil g(e)(i-1) \rceil \geq A(\pi,e,i-1)
    \]
    and thus there must be some $j \in [i,i']$ such that $e \in \pi(j)$.
    Therefore, 
    \[
        h^\pi(e) = g(e)\gamma_\pi(e) = g(e)\sup_{1 \leq i < i'} \left\{i'-i+1 \;\middle\vert\; e \notin
        \cup_{j=i}^{i'} \pi(j)\right\} < g(e)\frac{2}{g(e)} = 2.
    \]
    As this holds for all $e \in E$, we have $h(\pi) = \max_{e \in E} h^\pi(e) < 2$, as desired.
\end{proof}
We remark that the converse of \Cref{lem:disc1height} does not hold. For example, for the BGT instance $(\{a,b\},g)$ with $g(a) = 0.9$ and $g(b) = 0.1$, the schedule $\pi = ababab...$ has $h(\pi) = 1.8$, but $d(\pi)$ is unbounded.

We continue by showing that, for an appropriately chosen $T$, a finite schedule $\pi$ with $|\pi| = T$ and $d(\pi) <1$ can be repeated indefinitely to obtain an infinite schedule $\pi^\infty$ with $d(\pi^\infty) < 1$.

\begin{restatable}{lemma}{finiteinfinite}
    \label{lem:finiteinfinite}
    Let $(E,\I,g)$ be a CBGT instance. 
    Let $T$ be such that $Tg(e)$ is an integer for all $e \in E$ and let $\pi$ be a schedule with $d(\pi) < 1$ and $|\pi| = T$.
    Then the infinite repetition $\pi^\infty$ of $\pi$ satisfies $d(\pi^\infty) < 1$. 
\end{restatable}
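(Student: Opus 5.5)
We write an arbitrary time $s \in \N$ as $s = kT + t$ with $k \geq 0$ and $t \in [T]$ (so $t \geq 1$), and reduce the discrepancy of $\pi^\infty$ at time $s$ to that of $\pi$ at time $t$. The first step is to compute the count at the end of one full period. At time $T$ the condition $d(\pi)<1$ gives $|Tg(e) - A(\pi,e,T)| < 1$. Since $Tg(e)$ and $A(\pi,e,T)$ are both integers, this forces
\[
A(\pi,e,T) = Tg(e)
\]
for every $e \in E$. In words, each bamboo is cut exactly as often in one period as it grows in total.

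Next, by the definition of the infinite repetition, $\pi^\infty(kT+i) = \pi(i)$ for all $i \in [T]$. Hence
\[
A(\pi^\infty,e,kT+t) = k\,A(\pi,e,T) + A(\pi,e,t) = kTg(e) + A(\pi,e,t).
\]
Subtracting this from $(kT+t)g(e)$ gives
\[
(kT+t)g(e) - A(\pi^\infty,e,kT+t) = t\,g(e) - A(\pi,e,t),
\]
whose absolute value is at most $d^\pi(e)$. Therefore $d^{\pi^\infty}(e)$ is a supremum over only the finitely many values $|tg(e) - A(\pi,e,t)|$ with $t \in [T]$, and it equals $d^\pi(e) < 1$. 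Taking the maximum over the finite set $E$ gives $d(\pi^\infty) = d(\pi) < 1$.

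The main obstacle is the integrality step: without the assumption that $Tg(e)$ is an integer, the per-period drift $Tg(e) - A(\pi,e,T)$ could be nonzero and would accumulate over repetitions. The only other point to watch is that the supremum is strict. It is, because after the reduction it ranges over a finite set of values, each strictly less than $1$.
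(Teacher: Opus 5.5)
Your proposal is correct and follows essentially the same route as the paper: integrality of $Tg(e)$ together with $d(\pi)<1$ forces $A(\pi,e,T)=Tg(e)$, so the discrepancy of $\pi^\infty$ at time $kT+t$ equals that of $\pi$ at time $t$. The paper reduces via $T' \bmod T$ instead of writing $T' = kT+t$ with $t\in[T]$, which makes no real difference. Your explicit note that the supremum is over finitely many values, and hence strictly below $1$, is a small detail the paper leaves implicit.
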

\begin{proof}
    As $d(\pi) < 1$ and $Tg(e)$ is an integer, we have
    \[
        |Tg(e)-A(\pi,e,T)| < 1 \Leftrightarrow Tg(e)-A(\pi,e,T) = 0.
    \]
    Thus $|T'g(e)-A(\pi,e,T')| = |(T' \bmod T)g(e)-A(\pi,e,T' \bmod T)|$
    for any $T' \in \N$, which implies $d(\pi^\infty) = d(\pi) < 1$.

\end{proof}

From now on, we fix the time horizon $T$ as the least common multiple of the denominators of $\{g(e) \mid e \in E\}$.
Note that for that choice of $T$, we have that $Tg(e)$ is an integer for all $e \in E$.

We aim to construct a finite schedule $\pi$ for $(E,\cI,g)$ with length $T$ and discrepancy $d(\pi)<1$.
Our approach identifies schedules of length $T$ with subsets of of the ground set $E\times [T]$.
Specifically, any subset $S\subseteq E \times [T]$ corresponds to a schedule $\pi$ where $e \in \pi(t)$ if and only if $(e,t) \in S$. 
Since an arbitrary subset $S\subseteq E\times [T]$ does not necessarily yield a valid schedule with low discrepancy, we enforce these constraints by defining two matroids, $M_T$ and $M_E$, over the common ground set $E\times [T]$.
We enforce validity by ensuring that the selection at every time step is an independent set in $M$; formally, we consider the the direct sum $M_T = \bigoplus_{t=1}^T M_t$ where each $M_t$ is a copy of $M$ on the ground set $E \times \{t\}$. 
The second matroid $M_E$ is constructed to enforce the requirement that $d(\pi)<1$.

The following lemma follows immediately from the definition of $M_T$.

\begin{restatable}{lemma}{lemMTvalid}
    \label{lem:lemMTvalid}
    Any basis $I \in \I(M_T)$ corresponds to a valid finite schedule $\pi$ (possibly with $d(\pi) \geq 1$).
\end{restatable}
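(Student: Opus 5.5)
The claim follows by unwinding the definitions of the direct sum and of the correspondence between subsets of $E\times[T]$ and schedules; I do not expect any real obstacle. I will in fact show the slightly stronger statement that every \emph{independent} set of $M_T$, and so in particular every basis, corresponds to a valid finite schedule.

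Fix $I \in \I(M_T)$ and let $\pi$ be the schedule of length $T$ it determines, so $\pi(t) = \{e \in E \mid (e,t) \in I\}$ for each $t \in [T]$. The only thing to check is that $\pi(t) \in \I$ for every $t \in [T]$.

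\begin{itemize}
\item By the definition of the direct sum $M_T = \bigoplus_{t=1}^T M_t$, the set $I$ is independent in $M_T$ exactly when $I \cap (E \times \{t\}) \in \I(M_t)$ for every $t \in [T]$.
\item Each $M_t$ is the copy of $M$ on $E \times \{t\}$ obtained via the bijection $e \mapsto (e,t)$. Under this bijection, $I \cap (E\times\{t\})$ corresponds to $\pi(t)$.
\item Hence $\pi(t) \in \I(M) = \I$ for each $t$, which is exactly the definition of validity.
\end{itemize}

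No discrepancy bound is claimed, so nothing more is needed. The only thing to be careful about is making the identification between $M_t$ and $M$ explicit, so that "independent in the copy" visibly translates to "independent in $M$."
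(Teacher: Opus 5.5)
Your proposal is correct and matches the paper's approach: the paper gives no separate proof and simply notes that the lemma follows immediately from the definition of $M_T=\bigoplus_{t=1}^T M_t$ as a direct sum of copies of $M$. Your argument spells out that unwinding, and your observation that it already holds for every independent set of $M_T$ is a harmless strengthening.
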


The (more involved) definition of matroid $M_E$ is given in the next subsection. 
It will be defined such that it satisfies the following lemma, which is also proven in the next subsection.

\begin{restatable}{lemma}{basisdisc}
    \label{lem:basisdisc}
    Any basis $I \in \I(M_E)$ corresponds to a (possibly invalid) finite schedule $\pi$ with $d(\pi) < 1$.
\end{restatable}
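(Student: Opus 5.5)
We define $M_E$ as a direct sum $M_E=\bigoplus_{e\in E} M_e$, where $M_e$ lives on the ground set $\{e\}\times[T]$. Since $d(\pi)=\max_e d^\pi(e)$ and the discrepancy of $e$ depends only on which time steps contain $e$, it suffices to build each $M_e$ so that its bases are exactly the sets $B\subseteq[T]$ satisfying
\[
\lfloor t g(e)\rfloor \le |B\cap[t]| \le \lceil t g(e)\rceil \quad\text{for all } t\in[T].
\]
A basis of the direct sum restricts to a basis of every summand. The characterisation $d^\pi(e)<1 \iff \lfloor tg(e)\rfloor\le A(\pi,e,t)\le\lceil tg(e)\rceil$ noted in the preliminaries then gives \Cref{lem:basisdisc} at once.

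To define $M_e$, write $k=Tg(e)\in\mathbb{N}$ and $g=g(e)\in(0,1]$. We partition $[T]$ into consecutive blocks $P_j=\{t\in[T] : \lceil tg\rceil = j\}$ for $j=1,\dots,k$; note that $t\in[T]$ forces $\lceil tg\rceil\ge 1$. Take $M_e$ to be the partition matroid (a direct sum of $1$-uniform matroids) whose independent sets contain at most one element of each $P_j$. Each $P_j$ is nonempty, because $g\le 1$ means $\lceil tg\rceil$ increases by at most one per step and reaches $k$ at $t=T$. Hence the bases are exactly the sets $B$ with exactly one element in every block.

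We verify that these are precisely the sets described above. Fix $t$ and let $j=\lceil tg\rceil$. If $tg$ is an integer, $t$ is the last element of $P_j$. Otherwise $\lfloor tg\rfloor=j-1$, and blocks $P_1,\dots,P_{j-1}$ lie entirely in $[t]$ while $P_j$ lies partly in $[t]$. In both cases a set hitting each block exactly once has $|B\cap[t]|$ equal to $j-1$ or $j$, and when $tg$ is an integer it is forced to equal $j=tg$. So every basis satisfies the bounds.

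The converse is not needed for \Cref{lem:basisdisc}, but we will want it for the polytope step in the next stage. Suppose $B$ satisfies the bounds. At the last element $t$ of $P_j$ we have $tg\le j<(t+1)g$, and at the last element $t'$ of $P_{j-1}$ similarly. Considering the integer points, the counts are forced to equal $j$ and $j-1$ there: the lower bound $\lfloor t g\rfloor$ equals $j$ when $tg=j$, and otherwise we use $\lceil\cdot\rceil$ together with the next step. This forces exactly one element of $B$ in $P_j$.

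The main obstacle is exactly this boundary bookkeeping, namely whether the last element of a block always has $tg$ an integer. It need not, so the cleaner fallback is to define $M_e$ directly by the two-sided prefix constraints and prove the exchange axiom by a ``shift the earliest differing element'' argument. I would first try the partition-block definition, since it makes matroidness free, and then check the interior-$t$ inequalities carefully.
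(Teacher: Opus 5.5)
\textbf{Gap: your $M_E$ satisfies this lemma but cannot serve as the paper's $M_E$.}

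Your block-counting argument is correct as far as it goes. If $B$ meets every block $P_j=\{t\in[T] : \lceil tg\rceil=j\}$ exactly once (writing $g=g(e)$ as you do), then $\lfloor tg\rfloor\le|B\cap[t]|\le\lceil tg\rceil$. So the statement holds for your partition matroid, and your reduction to a single $e$ via the direct sum is the same as the paper's. The problem is that this partition matroid is not the paper's $M_e$, and it cannot play that role. Taken in isolation, the lemma is satisfied by any matroid of rank $Tg(e)$ whose bases are among the low-discrepancy sets, even one with a single basis. What $M_E$ must also do is share a basis with $M_T$ (\Cref{lem:common}), which the paper obtains from $g(e)\mathbf{1}$ lying in the polytope of $M_e$. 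Your blocks are too small for this:
\begin{itemize}
\item For $g(e)=2/3$ and $T=3$ you get $P_1=\{1\}$ and $P_2=\{2,3\}$. The set $\{2,3\}$ has discrepancy below $1$ but is not a basis, so the converse you sketch is false.
\item In the same example, $\sum_{t\in P_2}g(e)=4/3>1$, so $g(e)\mathbf{1}$ lies outside the polytope.
\item Concretely, take the $2$-uniform matroid on $E=\{a,b,c\}$ with $g\equiv 2/3$. Every basis of your $M_e$ contains time $1$ for each of the three elements. Hence your $M_E$ and $M_T$ have no common basis at all, even though $\{a,b\},\{a,c\},\{b,c\}$ is a valid schedule of discrepancy below $1$.
\end{itemize}

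\textbf{The missing idea: enlarge each block by one point.} The paper uses $C_i=[\lfloor (i-1)/g\rfloor+1,\lceil i/g\rceil]\supseteq P_i$. Consecutive intervals then share an endpoint whenever $i/g\notin\mathbb{Z}$ (\Cref{lem:increasing}). The paper takes $M_e$ to be the transversal matroid of the bipartite graph joining $c_i$ to $C_i$. This is a matroid for free, and its bases are exactly the subsets of $[T]$ with discrepancy below $1$. That is the matroid your fallback (prefix constraints plus an exchange argument you did not carry out) was aiming at; the transversal representation is what supplies the exchange property.

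The price is that the lemma is no longer a disjoint-block count. In \Cref{lem:Me} the paper first shows that any matching covering a basis $t_1<\dots<t_k$ must match $t_i$ to $c_i$. Indeed, if $t_j$ were matched to $c_i$ with $i<j$, then $t_j\le\max C_i\le\min C_{i+1}$, which leaves $t_1,\dots,t_{j-1}$ only the $i-1$ vertices $c_1,\dots,c_{i-1}$. The paper then reads off $t_{\lfloor tg\rfloor}\le\max C_{\lfloor tg\rfloor}\le t$ and $t_{\lceil tg\rceil+1}\ge\min C_{\lceil tg\rceil+1}>t$, which give the two bounds.
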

In the last subsection of this section, we show the following.

\begin{restatable}{lemma}{common}
    \label{lem:common}
    $M_E$ and $M_T$ have a common basis.
\end{restatable}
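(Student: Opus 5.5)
The plan is to show that the vector $y^\star \in \R^{E\times[T]}$ with $y^\star_{(e,t)} = g(e)$ lies in both matroid polytopes and has total weight equal to the common rank. Edmonds' matroid-intersection theorem then gives a common basis.

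\textbf{Step 1 (polytope of $M_T$).} Since $M_T$ is a direct sum, $r_{M_T}(X) = \sum_t r_M(X_t)$, where $X_t = \{e : (e,t) \in X\}$. Applying \eqref{eq:growth-and-rank} to each slice gives $\sum_{(e,t)\in X} y^\star_{(e,t)} = \sum_t \sum_{e \in X_t} g(e) \le \sum_t r_M(X_t)$. So $y^\star$ lies in the polytope of $M_T$. By \eqref{eq:full_rank}, its total weight is $T\,r(M) = r(M_T)$, so $y^\star$ lies on the base face of $M_T$.

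\textbf{Step 2 (polytope of $M_E$).} $M_E = \bigoplus_e N_e$, where $N_e$ lives on $\{e\}\times[T]$ and its bases are exactly the sets $B\subseteq[T]$ with $\lfloor tg(e)\rfloor \le |B\cap[t]| \le \lceil tg(e)\rceil$ for all $t$, hence $|B| = Tg(e)$. Again by the direct-sum property, it suffices to show that the constant vector $g(e)\mathbf{1}^T$ lies in the base polytope of $N_e$, i.e.\ is a convex combination of such sets $B$.

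I would take the $q$ cyclic shifts of the Sturmian/Bresenham set $B_\theta = \{t : \lfloor tg+\theta\rfloor > \lfloor (t-1)g+\theta\rfloor\}$ for $\theta\in[0,1)$. Concretely, write $g=p/q$ and use $\theta = j/q$ for $j=0,\dots,q-1$. Each $B_\theta$ satisfies $|B_\theta\cap[t]| = \lfloor tg+\theta\rfloor - \lfloor\theta\rfloor \in \{\lfloor tg\rfloor,\lceil tg\rceil\}$, so every $B_\theta$ is a basis of $N_e$. For each fixed $t$, the fraction of $\theta$ with $t\in B_\theta$ equals $g$, because as $\theta$ varies uniformly, $\lfloor tg+\theta\rfloor - \lfloor (t-1)g+\theta\rfloor$ has average $g$. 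Hence the average of the incidence vectors $\ind_{B_\theta}$ is $g\mathbf{1}^T$.

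\textbf{Step 3 (conclusion).} The point $y^\star$ lies in the base polytope of $M_T$ and in the base polytope of $M_E$, and both matroids have rank $\sum_e Tg(e) = T\,r(M)$. By Edmonds' theorem, the intersection of the two matroid polytopes is integral, and so is its face defined by $\sum x = T\,r(M)$. This face contains $y^\star$, so it is nonempty and has an integral vertex. That vertex is the incidence vector of a set that is independent in both matroids and has size equal to both ranks, i.e.\ a common basis.

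\textbf{Main obstacle.} The main obstacle is defining $N_e$ and verifying that it is a matroid. One could avoid proving this directly by arguing via Step 2 alone, but the matroid structure is needed to invoke intersection integrality. The defining constraints are laminar-like: prefix upper bounds $\lceil tg\rceil$ and prefix lower bounds $\lfloor tg\rfloor$. I would define the independent sets as subsets of bases and check the exchange axiom directly. Alternatively, I would realize $N_e$ as a transversal matroid: the $k$-th occurrence must lie in the window $[\lceil (k-1)/g\rceil',\dots]$, i.e.\ occurrence $k$ is matched to a time slot in its allowed interval. Intervals with monotone endpoints give a transversal (gammoid) matroid whose bases are exactly the valid sets.
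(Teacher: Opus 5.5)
Your argument is correct. Steps 1 and 3 coincide with the paper's: the paper maximizes $\sum y$ over the intersection polytope instead of passing to the face $\sum y = T\,r(M)$, which amounts to the same thing.

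The genuine difference is Step 2, your replacement for \Cref{lem:fractionalsolution}.
\begin{itemize}
\item The paper writes down a fractional matching $x$ in $G_e$: weight $g(e)$ on interior edges of each window $C_i$, and adjusted weights at its endpoints. It checks that each $t$ receives $g(e)$ and each $c_i$ receives $1$. It then appeals to integrality of the bipartite matching polytope to obtain, abstractly, a convex combination of independent sets of $M_e$ equal to $g(e)\mathbf{1}$.
\item You give the decomposition explicitly: $q$ Sturmian sets of weight $1/q$ each, each visibly a basis.
\end{itemize}
The two are in fact the same object. Match the $i$-th element $\lceil (i-\theta)/g(e)\rceil$ of $B_\theta$ to $c_i$ and average over $\theta$; this reproduces exactly the paper's $x$. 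So your construction is an explicit witness for the decomposition the paper extracts from polytope integrality, and it avoids bipartite matching integrality altogether.

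The price is that you must know each $B_\theta$ is independent in the paper's $M_e$, i.e.\ matchable in $G_e$. This is the converse of \Cref{lem:Me}, which the paper's route never needs. It is immediate: $(i-\theta)/g(e) \in ((i-1)/g(e), i/g(e)]$ gives $\lceil (i-\theta)/g(e)\rceil \in C_i = [\lfloor (i-1)/g(e)\rfloor+1, \lceil i/g(e)\rceil]$. These are also the correct windows for the transversal realization in your last paragraph, which is exactly the paper's $G_e$.

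Two small justifications should be made explicit.
\begin{itemize}
\item For the discrete choice $\theta = j/q$, the averaging step is Hermite's identity $\sum_{j=0}^{q-1}\lfloor x + j/q\rfloor = \lfloor qx\rfloor$. Apply it with $qx = tp$ and with $qx = (t-1)p$, both integers.
\item The telescoping count $|B_\theta\cap[t]| = \lfloor tg+\theta\rfloor$ uses $g(e)\le 1$, so that consecutive differences lie in $\{0,1\}$.
\end{itemize}
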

Assuming \Cref{lem:basisdisc,lem:common}, we now give a proof of the main theorem of this section.

\begin{proof}[Proof of \Cref{thm:matroidtwo}.]
    By \Cref{lem:common}, let $I \in \I(M_E) \cap \I(M_T)$ be a common basis of $M_E$ and $M_T$ and let $\pi$ be the corresponding schedule. 
    \Cref{lem:lemMTvalid,lem:basisdisc} imply that $\pi$ is a valid schedule and that $d(\pi) < 1$. 
    By \Cref{lem:finiteinfinite}, $d(\pi^\infty) < 1$, where $\pi^\infty$ is the infinite repetition of $\pi$. 
    Finally, \Cref{lem:disc1height} allows us to conclude that $h(\pi^\infty) < 2$.
\end{proof}

\subsection{Construction of \texorpdfstring{$M_E$}{the discrepancy matroid} and Proof of \texorpdfstring{\Cref{lem:basisdisc}}{Lemma~\ref{lem:basisdisc}}}

For each $e \in E$, we define a matroid $M_e$ with ground set $\{e\} \times [T]$. 
This choice of taking $\{e\} \times [T]$ rather than just $[T]$ allow us to take a direct sum. 
We then define $M_E = \bigoplus_{e \in E} M_e$.

For the remainder of this subsection (except the proof of~\Cref{lem:basisdisc} at the end), let $e \in E$ be fixed. As $e$ is fixed, we may simply consider $[T]$ to be the ground set of $M_e$ instead of $\{e\} \times [T]$.

The intuition 
is as follows. We want a basis in $M_e$ to correspond to a schedule $\pi$ with $d^\pi(e) < 1$. 
If $d^\pi(e) < 1$, then the $i$-th cut of $e$ in $\pi$ must occur within a specific interval of time steps~$t$:
If the cut happens too early, then $A(\pi,e,t)$ exceeds $\lceil tg(e) \rceil$; if it happens too late, then $A(\pi,e,t)$ falls behind $\lfloor tg(e) \rfloor$.
We show that the interval $C_i$ as defined below is exactly the interval of time steps in which the $i$-th cut of $e$ must occur in order for $d^\pi(e) < 1$ to hold.

Define
\begin{equation}
    \label{eq:C} C_i = [\min\{t\in \N \mid \ce{tg(e)} = i\},\min\{t \in \N \mid
    \fl{tg(e)} = i\}] \cap \N
\end{equation}
for $i \in \N$. Note that $C_i$ is well defined as $0 < g(e) \leq 1$, and
thus $\fl{tg(e)} = i$ and $\ce{tg(e)} = i$ have integer solutions for $t$
for any $i \in \N$. 
Equivalently, we could define 
\[
    \label{eq:Cex} C_i = \left[\left\lfloor \frac{i-1}{g(e)}\right\rfloor+1, \left\lceil\frac{i}{g(e)}\right\rceil\right] \cap \N
\]
for all $i \in \N$, which is straightforward to derive from \eqref{eq:C}.

We define $G_e$ as the bipartite graph with $V(G_e) = [T] \cup \{c_1,c_2,\dots,c_{Tg(e)}\}$ and $N_{G_e}(c_i) = C_i$ for $i \in [Tg(e)]$. The graph $G_e$ for $T = 22$ and $g(e) = 4/11$ is illustrated in \Cref{fig:ge}. 
The set of independent sets $\I(M_e)$ of $M_e$ is \[
\I(M_e)=\{ I \subseteq [T] \mid \text{$I$ is covered by some matching in $G_e$}\}.
\]
It is well-known that $M_e = ([T],\I(M_e))$ is a (transversal) matroid. To show that the bases of $M_e$ correspond to schedules $\pi$ with $d^\pi(e) < 1$, we first need an auxiliary lemma about the $C_i$.

\begin{figure}
\centering
\begin{tikzpicture}[
  every node/.style={font=\scriptsize},
  tnode/.style={circle,draw,minimum size=5.5mm,inner sep=0pt},
  cnode/.style={circle,draw,minimum size=5.5mm,inner sep=0pt},
  edge/.style={draw, line width=0.3pt, opacity=1}
]
\node[tnode] (t1) at (0.000,1.900) {$ 1 $};
\node[tnode] (t2) at (0.650,1.900) {$ 2 $};
\node[tnode] (t3) at (1.300,1.900) {$ 3 $};
s\node[tnode] (t4) at (1.950,1.900) {$ 4 $};
\node[tnode] (t5) at (2.600,1.900) {$ 5 $};
\node[tnode] (t6) at (3.250,1.900) {$ 6 $};
\node[tnode] (t7) at (3.900,1.900) {$ 7 $};
\node[tnode] (t8) at (4.550,1.900) {$ 8 $};
\node[tnode] (t9) at (5.200,1.900) {$ 9 $};
\node[tnode] (t10) at (5.850,1.900) {$ 10 $};
\node[tnode] (t11) at (6.500,1.900) {$ 11 $};
\node[tnode] (t12) at (7.150,1.900) {$ 12 $};
\node[tnode] (t13) at (7.800,1.900) {$ 13 $};
\node[tnode] (t14) at (8.450,1.900) {$ 14 $};
\node[tnode] (t15) at (9.100,1.900) {$ 15 $};
\node[tnode] (t16) at (9.750,1.900) {$ 16 $};
\node[tnode] (t17) at (10.400,1.900) {$ 17 $};
\node[tnode] (t18) at (11.050,1.900) {$ 18 $};
\node[tnode] (t19) at (11.700,1.900) {$ 19 $};
\node[tnode] (t20) at (12.350,1.900) {$ 20 $};
\node[tnode] (t21) at (13.000,1.900) {$ 21 $};
\node[tnode] (t22) at (13.650,1.900) {$ 22 $};

\node[cnode] (c1) at (0.650,0.000) {$ c_{1} $};
\node[cnode] (c2) at (2.275,0.000) {$ c_{2} $};
\node[cnode] (c3) at (4.225,0.000) {$ c_{3} $};
\node[cnode] (c4) at (5.850,0.000) {$ c_{4} $};
\node[cnode] (c5) at (7.800,0.000) {$ c_{5} $};
\node[cnode] (c6) at (9.425,0.000) {$ c_{6} $};
\node[cnode] (c7) at (11.375,0.000) {$ c_{7} $};
\node[cnode] (c8) at (13.000,0.000) {$ c_{8} $};

\draw[edge] (c1) -- (t1);
\draw[edge] (c1) -- (t2);
\draw[edge] (c1) -- (t3);
\draw[edge] (c2) -- (t3);
\draw[edge] (c2) -- (t4);
\draw[edge] (c2) -- (t5);
\draw[edge] (c2) -- (t6);
\draw[edge] (c3) -- (t6);
\draw[edge] (c3) -- (t7);
\draw[edge] (c3) -- (t8);
\draw[edge] (c3) -- (t9);
\draw[edge] (c4) -- (t9);
\draw[edge] (c4) -- (t10);
\draw[edge] (c4) -- (t11);
\draw[edge] (c5) -- (t12);
\draw[edge] (c5) -- (t13);
\draw[edge] (c5) -- (t14);
\draw[edge] (c6) -- (t14);
\draw[edge] (c6) -- (t15);
\draw[edge] (c6) -- (t16);
\draw[edge] (c6) -- (t17);
\draw[edge] (c7) -- (t17);
\draw[edge] (c7) -- (t18);
\draw[edge] (c7) -- (t19);
\draw[edge] (c7) -- (t20);
\draw[edge] (c8) -- (t20);
\draw[edge] (c8) -- (t21);
\draw[edge] (c8) -- (t22);
\end{tikzpicture}
\caption{$G_e$ for $g(e) = 4/11$ and $T=22$}
\label{fig:ge}
\end{figure}
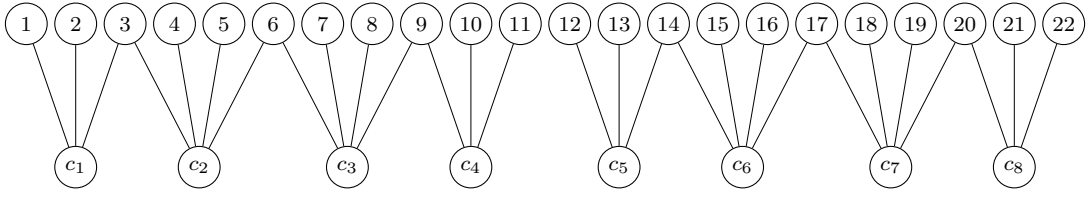

\begin{restatable}{lemma}{increasing}
    \label{lem:increasing}
    If $i \in \N$ is such that $i/g(e)$ is not an integer, then \[
        \min C_i < \max C_i = \min C_{i+1},
    \]
    and if $i/g(e)$ is an integer, then $\max C_i + 1 = \min C_{i+1}$.
\end{restatable}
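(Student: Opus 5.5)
We will work throughout with the explicit description
\[
\min C_i = \left\lfloor \tfrac{i-1}{g(e)}\right\rfloor + 1,\qquad \max C_i = \left\lceil \tfrac{i}{g(e)}\right\rceil ,
\]
together with the standing assumption $0 < g(e) \le 1$, which gives $1/g(e) \ge 1$. Write $x = i/g(e)$, so that $\min C_{i+1} = \lfloor x \rfloor + 1$ and $\max C_i = \lceil x \rceil$. The whole statement then reduces to elementary facts about floors and ceilings of the single real number $x$.

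Suppose first that $x$ is not an integer. Then $\lceil x\rceil = \lfloor x\rfloor + 1$, so $\max C_i = \lceil x \rceil = \lfloor x\rfloor + 1 = \min C_{i+1}$, which is the claimed equality. For the strict inequality $\min C_i < \max C_i$ we must show $\lfloor (i-1)/g(e)\rfloor + 1 < \lceil i/g(e)\rceil$. Since $(i-1)/g(e) = x - 1/g(e) \le x - 1$, we get $\lfloor (i-1)/g(e)\rfloor \le \lfloor x - 1\rfloor = \lfloor x\rfloor - 1$. Hence $\min C_i \le \lfloor x \rfloor < \lceil x\rceil = \max C_i$, where the last inequality is strict precisely because $x \notin \mathbb{Z}$.

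Suppose next that $x$ is an integer. Then $\lceil x \rceil = \lfloor x\rfloor = x$, so $\max C_i + 1 = x + 1 = \lfloor x\rfloor + 1 = \min C_{i+1}$, as claimed.

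There is no real obstacle in this argument. The only point requiring care is the step that uses $g(e) \le 1$ to obtain $(i-1)/g(e) \le i/g(e) - 1$, since this is exactly what makes $\min C_i$ lie at or below $\lfloor x\rfloor$. As a sanity check, one can verify both cases against \Cref{fig:ge}: there $C_1 = [1,3]$ and $C_2 = [3,6]$ share an endpoint since $1/g(e) = 11/4$ is not an integer, while $C_4 = [9,11]$ and $C_5 = [12,14]$ are adjacent since $4/g(e) = 11$ is an integer.
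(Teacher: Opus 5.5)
Your proof is correct and follows essentially the same route as the paper's. Both use the explicit formulas for $\min C_i$ and $\max C_i$, use $g(e)\le 1$ to get $\min C_i \le \lfloor i/g(e)\rfloor < \lceil i/g(e)\rceil = \max C_i$ in the non-integer case, and compare $\lceil i/g(e)\rceil$ with $\lfloor i/g(e)\rfloor + 1 = \min C_{i+1}$ in both cases.
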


\begin{proof}
    Let $i \in \N$. If $i/g(e)$ is an not an integer, then, as $g(e) \leq 1$, \[
        \min C_i = \left\lfloor \frac{i-1}{g(e)} \right\rfloor + 1 \leq \left\lfloor \frac{i-1}{g(e)} + \frac{1}{g(e)}-1 \right\rfloor + 1 = \left\lfloor \frac{i}{g(e)} \right\rfloor < \left\lceil \frac{i}{g(e)} \right\rceil = \max C_i
    \]
    and \[
        \max C_i = \left\lceil \frac{i}{g(e)} \right\rceil = \left\lfloor \frac{i+1-1}{g(e)} \right\rfloor+1 = \min C_{i+1}.
    \]
    Now, if $i/g(e)$ is an integer, then \[
        \max C_i +1= \frac{i}{g(e)} +1 = \frac{i+1-1}{g(e)} +1= \min C_{i+1},
    \]
    which completes the proof.
\end{proof}

We next show two lemmas about the matroid $M_e$.

\begin{restatable}{lemma}{lemrank}
    \label{lem:rank}
    $r(M_e) = Tg(e)$.
\end{restatable}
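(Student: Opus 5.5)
The rank of the transversal matroid $M_e$ is the maximum size of a matching in $G_e$. Since the side $\{c_1,\dots,c_{Tg(e)}\}$ has exactly $Tg(e)$ vertices, $r(M_e)\le Tg(e)$ is immediate. It remains to exhibit a matching saturating every $c_i$, and I will construct one explicitly.

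For each $i\in[Tg(e)]$, match $c_i$ to $t_i := \max C_i = \lceil i/g(e)\rceil$. This is a neighbor of $c_i$ by definition of $G_e$ (since $N_{G_e}(c_i)=C_i$), provided $C_i \subseteq [T]$. I must check two things.

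\emph{Validity.} Each $t_i$ must lie in $[T]$. Since $i\le Tg(e)$, we have $i/g(e)\le T$, and because $T$ is an integer, $\lceil i/g(e)\rceil \le T$. Also $\min C_i\ge 1$. So every $C_i$ with $i\in[Tg(e)]$ is contained in $[T]$, and the edges $\{c_i,t_i\}$ exist.

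\emph{Injectivity.} The $t_i$ must be pairwise distinct. By \Cref{lem:increasing}, for every $i$ we have $\max C_i \le \min C_{i+1} \le \max C_{i+1}$. When $i/g(e)$ is not an integer, $\min C_{i+1}=\max C_i$ and $\min C_{i+1}<\max C_{i+1}$ (apply the lemma to $i+1$ if $(i+1)/g(e)$ is non-integral; otherwise note $C_{i+1}$ has at least two elements because $g(e)\le 1$). Simpler: $t_{i+1}-t_i = \lceil (i+1)/g(e)\rceil - \lceil i/g(e)\rceil \ge 1$, since $1/g(e)\ge 1$. Thus $t_1<t_2<\dots<t_{Tg(e)}$ strictly, giving a matching of size $Tg(e)$ and hence $r(M_e)=Tg(e)$.

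The only delicate point is confirming that the neighborhoods stay within $[T]$; this uses that $Tg(e)$ is an integer, which holds by the choice of $T$. The strict monotonicity step is routine given $g(e)\le 1$, which follows from \eqref{eq:growth-and-rank} with $X=\{e\}$.
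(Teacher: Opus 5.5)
Your proof is correct and follows essentially the paper's argument: the upper bound comes from there being only $Tg(e)$ vertices $c_i$, and the lower bound from an explicit matching that pairs each $c_i$ with an endpoint of $C_i$. The paper uses the left endpoints $\min C_i$ and gets their distinctness from \Cref{lem:increasing}, whereas you use the right endpoints $\lceil i/g(e)\rceil$ with a direct ceiling estimate, and you also check $C_i\subseteq[T]$, which the paper leaves implicit.
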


\begin{proof}
    Let $I \in \I(M_e)$. Note that $|I| > Tg(e)$ is not possible
    as $I$ must be covered by some matching in $G_e$, and there are only $Tg(e)$ vertices
    $c_1,c_2,\dots,c_{Tg(e)}$ to match $I$ to. Furthermore, $\{\{\min C_I,c_i\} \mid i \in [Tg(e)]\}$ is a matching in $G_e$ of size $Tg(e)$ as $\min C_i$ and $\min C_j$ are distinct for distinct $i,j$ by \Cref{lem:increasing}.
\end{proof}

\begin{restatable}{lemma}{lemMe}
    \label{lem:Me}
     If $I \in \I(M_e)$ is a basis and $\pi$ is any schedule such that $e \in \pi(t)$ exactly when $t \in I$, then $d^\pi(e) < 1$.
\end{restatable}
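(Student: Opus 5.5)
The plan is to turn a basis into a perfect matching of the $c$-vertices, show that this matching can be taken order-preserving (the $i$-th smallest element of $I$ is matched to $c_i$), and then read off the discrepancy bound from the definition \eqref{eq:C} of $C_i$.

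\textbf{Step 1: a perfect matching.} By \Cref{lem:rank}, a basis $I$ has $|I| = Tg(e)$. Hence any matching covering $I$ matches all of $c_1,\dots,c_{Tg(e)}$ bijectively to $I$.

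\textbf{Step 2: uncrossing.} Write $I=\{t_1<t_2<\dots<t_{Tg(e)}\}$. I claim that $t_i \in C_i$ for every $i$. Each $C_i$ is an interval. By \Cref{lem:increasing}, both $\min C_i$ and $\max C_i$ are nondecreasing in $i$. So whenever a matching pairs $c_i$ with $t$ and $c_j$ with $t'$, where $i<j$ and $t>t'$, we have
\[
\min C_i \le \min C_j \le t' < t \le \max C_i \le \max C_j .
\]
This gives $t' \in C_i$ and $t \in C_j$, so we may swap the two partners. Repeating the swap, as in sorting, yields the order-preserving matching $c_i \mapsto t_i$. This uncrossing is the only genuinely structural step, and I expect it to be the main point to get right.

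\textbf{Step 3: the discrepancy bound.} Now fix $t \in [T]$. Then $A(\pi,e,t)=|\{i : t_i \le t\}|$. We need $\lfloor tg(e)\rfloor \le A(\pi,e,t) \le \lceil tg(e)\rceil$, which by the remark after the definition of discrepancy is equivalent to $d^\pi(e)<1$ on $[T]$. The statement is about schedules of length $T$, matching the ground set $[T]$.

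For the lower bound, let $i=\lfloor tg(e)\rfloor$. If $i\ge 1$, then $i \le Tg(e)$. Moreover $t \ge \min\{s : \lfloor sg(e)\rfloor = i\} = \max C_i \ge t_i$, so $t_1<\dots<t_i \le t$ and therefore $A \ge i$.

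For the upper bound, let $i = \lceil tg(e)\rceil + 1$. If $i \le Tg(e)$, then $t_i \ge \min C_i = \min\{s : \lceil sg(e)\rceil = i\}$. Since $\lceil sg(e)\rceil$ is nondecreasing in $s$ and $\lceil tg(e)\rceil = i-1 < i$, this minimum exceeds $t$. Hence $t_i>t$, so at most $i-1 = \lceil tg(e)\rceil$ of the $t_j$ are at most $t$. If $i > Tg(e)$, the bound $A \le |I| = Tg(e) \le \lceil tg(e)\rceil$ holds trivially.

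The only care needed here is to use the monotonicity of $\lfloor sg(e)\rfloor$ and $\lceil sg(e)\rceil$ in $s$ when passing from the minima in \eqref{eq:C} to comparisons with $t$.
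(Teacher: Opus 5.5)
Your proof is correct and follows essentially the same route as the paper: by \Cref{lem:rank} the covering matching is a bijection between $I$ and $\{c_1,\dots,c_{Tg(e)}\}$ with $t_i$ matched to $c_i$, and both bounds are then read off from the definition of $C_i$ exactly as you do. You even handle the case $\lfloor tg(e)\rfloor = 0$, which the paper leaves implicit.

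The only local difference is how you establish $t_i \in C_i$. The paper uses a pigeonhole argument to show that every covering matching is already order-preserving, relying on $\min C_{i+1} \ge \max C_i$. That inequality in fact makes your crossing configuration impossible, since it would force $t' < t \le \max C_i \le \min C_j \le t'$. Your swap argument needs only monotone endpoints, and both routes are valid.
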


\begin{proof}
    Let $k = Tg(e)$ and let $I \in \I(M_e)$ be a basis of $M_e$, which by \Cref{lem:rank} satisfies $|I| = k$. Let $Y$ be a matching in $G_e$ covering $I$, and let 
    $\pi$ be any schedule such that $e \in \pi(t)$ exactly when $t \in I$.

    To show that $d^\pi(e) < 1$, we must show that
    \[
        \fl{tg(e)} \leq A(\pi,e,t) \leq \ce{tg(e)}.
    \]
    for all $t \in [T]$. To this end, fix any $t \in [T]$.

    Let $I = \{t_1,t_2,\dots,t_k\}$ be such that $t_1 < t_2 < \dots < t_k$.
    For every $i \in [k]$, there must be some $j \in [k]$ such that $t_jc_i \in Y$ as $|I| = k$. We claim that $j=i$. Indeed, if this is not the case, then there is  pair $i,j \in [k]$ with $i < j$ such that $t_jc_i \in Y$. But by \Cref{lem:increasing}, the intervals $C_{i+1},C_{i+2},\dots,C_{k}$
     do not contain any element less than $t_j$. This implies that
    $t_1,t_2,\dots,t_j$ must be matched to $c_1,c_2,\dots,c_{i}$, which is
    impossible for $i < j$. Hence, $t_ic_i\in Y$ holds for all $i\in [k]$.
    
    We now show $A(\pi,e,t) \geq \fl{tg(e)}$. Let $i = \fl{tg(e)}$. As $t_ic_i \in Y$, we have $t_i \in N(c_i)=C_i$. Thus,
    the $i$-th
    occurrence of $e$ in $\pi$ is at index $t_i \leq \max C_i = \min\{t' \mid
    \fl{t'g(e)} = i\} \leq t$, and thus $A(\pi,e,t) \geq i = \fl{tg(e)}$.

    We now show $A(\pi,e,t) \leq \ce{tg(e)}$. Let $i = \ce{tg(e)}$. If $i = k$, then we are done as
    $\pi$ contains $k$ occurrences of $e$ in total. 
    Thus, we assume $i \leq k-1$. 
    As $t_{i+1}c_{i+1} \in Y$, we have that $t_{i+1} \in N(c_{i+1})=C_{i+1}$. 
    Therefore, the $(i+1)$-th occurrence of $e$ in $\pi$ is at index $t_{i+1} \geq \min C_{i+1} = \min\{t' \mid \ce{t'g(e)} = i+1\} > t$, and thus $A(\pi,e,t) < i+1$, which is equivalent to $A(\pi,e,t) \leq i = \ce{tg(e)}$.
\end{proof}

We now give a proof of \Cref{lem:basisdisc}.

\begin{proof}[Proof of \Cref{lem:basisdisc}]
    Let $I \in \I(M_E)$ be a basis of $M_E$ and let $\pi$ be the corresponding schedule. 
    As $M_E = \bigoplus_{e \in E} M_e$, the set $I \cap E(M_e)$ is a basis in $M_e$ for all $e \in E$. 
    As $e \in \pi(t)$ exactly when $(e,t) \in I \cap E(M_e)$, we have $d^\pi(e) < 1$ for all $e \in E$ by \Cref{lem:Me}. 
    Thus, $d(\pi) = \max_{e \in E} d^\pi(e) < 1$.
\end{proof}

\subsection{Proof of~\texorpdfstring{\Cref{lem:common}}{Lemma~\ref{lem:common}}}

We start by observing that $M_E$ and $M_T$ have the same rank as 
\[
r(M_E) = \sum_{e \in E} r(M_e) = \sum_{e \in E} Tg(e) = T \cdot r(M) = \sum_{t=1}^T r(M_t) = r(M_T),
\]
where we used the definition of $M_E$ and $M_T$ as direct sums for the first and last equalities, \Cref{lem:rank} for the second one, and our assumption that $\sum_{e \in E} g(e) = r(M)$ for the third.

Now, consider the \textit{matroid intersection polytope} $\mathcal{P} \subset \R^{E \times [T]}$ of $M_E$ and $M_T$ where $y \in \mathcal{P}$ if and only if
\begin{alignat}{2}
    \label{eq:first} 0 \leq \sum_{(e,t) \in X} y_{e,t} &\leq r_{M_T}(X) \ \ \ \ &&\forall X \subseteq E
    \times [T] \\
    \label{eq:second} 0 \leq \sum_{(e,t) \in X} y_{e,t} &\leq r_{M_E}(X) &&\forall X
    \subseteq E \times [T].
\end{alignat}
It is a well-known theorem of Edmonds \cite{Edmonds1970} that $\mathcal{P}$ has integer vertices. 
Let $y^\star$ be such that \[
    y^\star_{e,t}= g(e) 
\] for all $e \in E,t \in [T]$.
We show that $y^\star \in \mathcal{P}$.

\begin{restatable}{lemma}{yfirst}
\label{lem:yfirst}
$y^\star$ satisfies \eqref{eq:first}.
\end{restatable}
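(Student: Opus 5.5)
We need to show that for every $X \subseteq E\times[T]$ we have $0 \le \sum_{(e,t)\in X} y^\star_{e,t} \le r_{M_T}(X)$. The lower bound is immediate, since $g \ge 0$. For the upper bound, the plan is to reduce to a single copy of $M$ using the direct-sum structure, and then invoke the normalization \eqref{eq:growth-and-rank}.

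\emph{Step 1: split by time.} For each $t \in [T]$, let $X_t = \{e \in E \mid (e,t)\in X\}$, so that $X = \bigcup_{t=1}^T X_t\times\{t\}$. Since $M_T = \bigoplus_{t=1}^T M_t$, rank is additive over the summands:
\[
r_{M_T}(X) = \sum_{t=1}^T r_{M_t}(X_t\times\{t\}) = \sum_{t=1}^T r_M(X_t).
\]
Here $M_t$ is a copy of $M$ on $E\times\{t\}$. Additivity holds because a subset of $X$ is independent in $M_T$ exactly when each of its slices is independent in the corresponding $M_t$. Hence a maximum independent subset of $X$ is a union of maximum independent subsets of the slices.

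\emph{Step 2: bound each slice.} Rewrite the sum as
\[
\sum_{(e,t)\in X} y^\star_{e,t} = \sum_{t=1}^T \sum_{e\in X_t} g(e).
\]
By \eqref{eq:growth-and-rank}, which holds because $g$ lies in the matroid polytope of $M$ by Edmonds' theorem, we have $\sum_{e\in X_t} g(e) \le r_M(X_t)$ for every $t$. We must check that this survives the earlier modification of $g$ to achieve \eqref{eq:full_rank}. It does: the modified $g$ is still a convex combination of incidence vectors of bases, so it remains in the matroid polytope. Summing over $t$ and applying Step 1 gives the upper bound.

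\emph{Main obstacle.} There is essentially no difficulty here. The only point requiring care is the additivity of rank over direct sums, which follows directly from the definition of the direct sum. The substantive work for $y^\star \in \mathcal{P}$ lies instead in constraint \eqref{eq:second} for the transversal matroids $M_e$.
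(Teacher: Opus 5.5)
Your proposal is correct and follows essentially the same route as the paper. Both arguments slice $X$ by time, use additivity of rank over the direct sum $M_T=\bigoplus_{t} M_t$, and bound each slice by $r_M(X_t)$ because $g$ lies in the matroid polytope of $M$. Your explicit check that the modified $g$ (now a convex combination of bases) still satisfies \eqref{eq:growth-and-rank} makes precise a point the paper uses only implicitly.
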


\begin{proof}
    For each $t \in [T]$, the set of points $y \in \R^{E \times \{t\}}$ such that
    \[
        0 \leq \sum_{e \in Y} y_{e,t} \leq r_{M_t}(Y \times \{t\})\ \ \ \forall Y \subseteq E
    \]
    is exactly $\conv(\{\ind_I \mid I \in \I(M_t)\})$ as $M_t$ is a matroid. Thus, the above holds for
    $y^\star_{\cdot,t} = g$ for any $t \in [T]$ by the assumption that $g \in \conv(\{\ind_I \mid I \in \I\})$. Thus, for any $X \subseteq E \times [T]$ we have that \[
        0 \leq \sum_{(e,t) \in X} y^\star_{e,t} = \sum_{t = 1}^T \sum_{(e,t) \in X} y^\star_{e,t} \leq \sum_{t =1}^T r_{M_t}(X \cap E(M_t)) = r_{M_T}(X).
    \]This completes the proof.
\end{proof}

\begin{restatable}{lemma}{ysecond}
    \label{lem:fractionalsolution}
    $y^\star$ satisfies \eqref{eq:second}.
\end{restatable}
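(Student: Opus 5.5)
The plan is to reduce \eqref{eq:second} to one inequality per element $e$ and then verify it using the interval structure of the sets $C_i$. Since $M_E=\bigoplus_{e\in E}M_e$, we have $r_{M_E}(X)=\sum_{e\in E} r_{M_e}(X\cap(\{e\}\times[T]))$. Nonnegativity is trivial because $g\geq 0$. It therefore suffices to fix $e$ and show that for every $S\subseteq[T]$ we have $g(e)|S|\leq r_{M_e}(S)$; summing over $e$ then gives \eqref{eq:second}.

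For the per-element inequality, we use that $M_e$ is the transversal matroid of $G_e$. By K\H{o}nig's theorem (equivalently, the deficiency form of Hall's theorem), $r_{M_e}(S)=\min_{S'\subseteq S}\bigl(|S\setminus S'|+|N_{G_e}(S')|\bigr)$, where $N_{G_e}(S')$ is the set of indices $i\in[Tg(e)]$ with $C_i\cap S'\neq\emptyset$. Since $g(e)\leq 1$, the term $|S\setminus S'|$ is at least $g(e)|S\setminus S'|$. Hence it is enough to prove the Hall-type bound $g(e)|S'|\leq |N(S')|$ for all $S'\subseteq[T]$.

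Equivalently, we can exhibit $g(e)\mathbf{1}$ as a fractional matching that saturates each $t$ to exactly $g(e)$ and gives each $c_i$ total load at most $1$. The natural choice is the fractional assignment that splits the unit time interval $(t-1,t]$ according to the value intervals: time $t$ sends weight $\bigl|\,[(t-1)g(e),tg(e)]\cap[i-1,i]\,\bigr|$ to $c_i$. This weight is positive only if $\lfloor (t-1)g(e)\rfloor+1\leq i$ and $i\leq\lceil tg(e)\rceil$, that is, exactly when $t\in C_i$ by the explicit formula for $C_i$. Each $t$ sends total weight $g(e)$, and each $c_i$ receives exactly the length of $[i-1,i]$, namely $1$. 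Because $T g(e)$ is an integer, the indices $i$ range precisely over $[Tg(e)]$. This fractional matching yields $g(e)|S'|\leq |N(S')|$ directly, and it also places $g(e)\mathbf{1}$ in the matroid polytope of the transversal matroid $M_e$. Indeed, a vector that is the left-marginal of a fractional matching lies in the transversal matroid polytope, since the bipartite matching polytope is integral and its projection onto $[T]$ is the convex hull of matchable sets. This bypasses the Hall computation entirely.

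The main obstacle is checking that the support condition matches $C_i$ exactly, including the boundary cases where $(t-1)g(e)$ or $tg(e)$ is an integer, so that a zero-length overlap is not mistakenly treated as an edge. We only need that every positive weight lies on an edge of $G_e$, and extra edges are harmless. So the proof needs only the implication from positive overlap to $t\in C_i$. That implication follows from the strict inequalities $(t-1)g(e)<i$ and $tg(e)>i-1$, which give $t<\frac{i}{g(e)}+1$, hence $t\leq\lceil i/g(e)\rceil$, and $t>\frac{i-1}{g(e)}$, hence $t\geq\lfloor (i-1)/g(e)\rfloor+1$.
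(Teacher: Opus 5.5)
Your proposal is correct and follows essentially the same route as the paper. Your overlap weights $\bigl|[(t-1)g(e),tg(e)]\cap[i-1,i]\bigr|$ coincide exactly with the paper's fractional matching $x_{c_it}$ (namely $tg(e)-i+1$ at $t=\min C_i$, $i-(t-1)g(e)$ at $t=\max C_i$, and $g(e)$ otherwise), and you then use integrality of the bipartite matching polytope to project onto matchable sets, just as the paper does. The interval-tiling viewpoint gives a slightly cleaner check of both marginals than the paper's four-case Claim, and your optional Hall/K\H{o}nig detour is valid but unnecessary.
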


\begin{proof}
    We show that $y^\star_{e,\cdot} = \textbf{1}g(e)$ is a convex combination of $\{\ind_I \mid \I(M_e)\}$ for any $e \in E$. If this holds, then
    as $M_e$ is a matroid, it follows that $y^\star$ satisfies \[
        0 \leq \sum_{t \in Z} y_{e,t}^\star \leq r_{M_e}(\{e\} \times Z)\ \ \ \forall Z \subseteq [T]
    \]
    for all $e \in E$, from which we get \[
         0 \leq \sum_{(e,t) \in X} y^\star_{e,t} = \sum_{e' \in E} \sum_{(e',t) \in X} y^\star_{e',t} \leq \sum_{e \in E} r_{M_e}(X \cap E(M_e)) = r_{M_E}(X).
     \]

    Fix any $e \in E$. We now proceed to show that $y^\star_{e,\cdot} = \textbf{1}g(e) \in \conv(\{\ind_I \mid \I(M_e)\})$
    by showing that there exists a $\lambda : \I(M_e) \rightarrow [0,1]$ with $\sum_{I \in \I(M_e)} \lambda(I) = 1$ and $\sum_{I \in \I(M_e)} \lambda(I) \cdot \ind_I = \textbf{1}g(e)$.
    
    Towards this, consider the \emph{fractional matching polytope} $\mathcal{P}' \subset \R^{E(G_e)}$ of $G_e$ (the graph through which we defined $M_e$) such that $x \in \mathcal{P}'$ iff
    \begin{alignat*}{2}
        \sum_{uv \in E(G_e)} x_{uv} &\leq 1\ \ \ &&\forall v \in V(G_e) \\
        x_{uv} &\geq 0 &&\forall uv \in E(G_e).
    \end{alignat*}
     Define \[
    x_{c_it} = \begin{cases}
         tg(e)-i+1 & \text{if } t = \min C_i \\
         i-(t-1)g(e) & \text{if } t = \max C_i \\
         g(e) & \text{otherwise}
    \end{cases}
    \]
    for all $c_it \in E(G_e)$.

    \begin{restatable}{claim}{claimmatching}
        \label{claim:matching} We have $
            \sum_{c_it \in E(G_e)} x_{c_it} = g(e)
        $
        for all $t \in [T]$, and $x \in \mathcal{P}'$.
    \end{restatable}
    \begin{proof}[Proof of \Cref{claim:matching}]
    \renewcommand{\qedsymbol}{$\diamond$}
    Let $t \in [T]$ be arbitrary. We consider $4$ simple cases.

    \smallskip
    Case 1: $\min C_i < t < \max C_i$ for some $i \in [Tg(e)]$.
    In this case \[
        \sum_{ut \in E(G_e)} x_{ut} = x_{c_it} = g(e).
    \]

    \smallskip
    Case 2: $t = \max C_i = \min C_{i+1}$ for some $i \in [Tg(e)-1]$.
    In this case \[
        \sum_{ut \in E(G_e)} x_{ut} = x_{c_it}+x_{c_{i+1}t} = i-(t-1)g(e)+tg(e)-(i+1)+1 = g(e).
    \]

    \smallskip
    Case 3: $t = \max C_i = \min C_{i+1}-1$ for some $i \in [Tg(e)-1]$ or $t=T$. In this case, we have $i/g(e) = t$ by \Cref{lem:increasing} (or as $t=T$) and so \[
        \sum_{ut \in E(G_e)} x_{ut} = x_{c_it} = i-(t-1)g(e) = i-(i/g(e)-1)g(e) = g(e)
    \]

    \smallskip
    Case 4: $t = \min C_i = \max C_{i-1}+1$ for some $i \in [Tg(e)] \setminus \{1\}$ or $t=1$. In this case, we have $(i-1)/g(e) = t-1$ by \Cref{lem:increasing} (or as $t=1$) and so \[
        \sum_{ut \in E(G_e)} x_{ut} = x_{c_it} = x_{c_it} = tg(e)-i+1 = ((i-1)/g(e)+1)g(e)-i+1 = g(e).
    \]

    \smallskip

    Lastly, observe that $x \in \mathcal{P}'$, as for any $i \in [Tg(e)]$, we have \[
        \sum_{c_it \in E(G_e)} x_{c_it} = (\min C_i)g(e)-i+1 + (\max C_i - \min C_i -1)g(e) + i-(\max C_i -1)g(e) = 1.
    \]
    This completes the proof of the claim.
    \end{proof}

    Let $\M$ denote the set of all matchings in $G_e$. It is well-known that $\mathcal{P}'$ has integer vertices as $G_e$ is bipartite (e.g., \cite[Theorem 11.4]{Korte2018}). Thus, every vertex of $\mathcal{P}'$ is the incidence vector $\ind_A$ of some $A \in \M$. 
    By this and as $x \in \mathcal{P}'$ (by \Cref{claim:matching}), $x$ is a convex combination of $\{\ind_A \mid A \in \M\}$. Let $\lambda': \M \rightarrow [0,1]$ with $\sum_{A \in \M} \lambda'(M) = 1$ such that \begin{equation}
        \label{eq:xge} \sum_{A \in \M} \lambda'(A) \cdot \ind_A = x.
    \end{equation}
    
    Now, for all $A \in \M$, define $T(A) = \{t \in [T] \mid \text{$t$ covered by $A$}\}$ as the set of endpoints of $A$ in $[T]$. Define
    $\lambda: \I(M_e) \rightarrow [0,1]$ by setting  \[
        \lambda(I) = \sum_{A \in \M\colon T(A) = I} \lambda'(A)
    \]
    for all $I \in \I(M_e)$.
    Then,
    \[
        \sum_{I \in \I(M_e)} \lambda(I)\ind_{I} = \sum_{I \in \I(M_e)} \left(\sum_{A \in \M: T(A) = I} \lambda'(A)\right)\ind_{I} = \sum_{A \in \M} \lambda'(A)\ind_{T(A)} = g(e)\textbf{1},
    \]
    where the first equality follow from the definition of $\lambda(I)$, the second from the fact $T(A) \in \I(M_e)$ for any matching $A \in \M$, and the last by \Cref{claim:matching} and \eqref{eq:xge}. Thus, we have expressed $g(e)\textbf{1}^T$ as a convex combination of independent sets of $M_e$, which is what we wanted.
 \end{proof}

We are now ready to prove \Cref{lem:common}.

\begin{proof}[Proof of \Cref{lem:common}]
    It follows from \Cref{lem:yfirst} and \Cref{lem:fractionalsolution} that any $y \in
    \mathcal{P}$ which maximizes $\sum_{e \in E, t \in [T]} y_{e,t}$ has a sum of entries at
    least $\sum_{(e,t) \in E \times [T]} y^\star_{e,t} = T \cdot r(M) = r(M_T) = r(M_E)$. As $\mathcal{P}$ has integer vertices, which are binary by definition of $\mathcal{P}$, there exists
    a binary vector $y_{\text{bin}}\in\mathcal{P}$ with at least $r(M) \cdot T$ entries that are $1$. By the definition of $\mathcal{P}$ and the fact that $r(M_E) = r(M_T) =
    r(M) \cdot T$, $y_{\text{bin}}$ is the incidence vector of a common basis of $M_E$ and $M_T$.
\end{proof}

\section{Efficient Implementations of Low-Height Schedules on Special Matroids}
\label{sec:FUN-algorithms}
In this section, we give polynomial-time algorithms for special classes of CBGT instances on matroids.

In \Cref{subsec:uniformFUN}, we give a polynomial-time algorithm which implements a schedule $\pi$ with $h(\pi) < 2$ for any CBGT instance $(E,\I,g)$ where $(E,\I)$ is a uniform matroid. Here, we assume that $\I$ is represented simply by the integer $k$ such that $\I = \{X \subseteq E \mid |X| \leq k\}$. As BGT is equivalent to CBGT for $1$-uniform matroids, we obtain a polynomial-time algorithm for BGT with the same height guarantee as a corollary.

In \Cref{subsec:Graphic_laminar}, we give polynomial-time algorithms which implement height-$4$ schedules for graphic matroids and laminar matroids. The algorithms use our algorithm for $1$-uniform matroids as a subroutine. Although \Cref{thm:matroidtwo} guarantees the existence of a schedule with height at most~$2$ for all matroids, we deliberately sacrifice height in favor of a more efficient implementation.

\subsection{Height-2 Schedules for Uniform matroids}
\label{subsec:uniformFUN}
Our algorithm is based on so-called \emph{tree schedules} introduced by Bar-Noy, Nisgav, and Patt-Shamir~\cite{barNoy2001Nearly} and later used, for example, in~\cite{bar-noyEfficientPeriodicScheduling2002}. We note that for $1$-uniform matroids, our algorithm essentially reduces to the algorithm given in \cite[Algorithm \texttt{binMax}]{bar-noyEfficientPeriodicScheduling2002}. It was, however, not shown in \cite{bar-noyEfficientPeriodicScheduling2002} that the implemented schedule has height below $2$.

We now define what we call the \textbf{Fuse--Unfuse} schedule.
Given two bamboos $e$ and $f$ in $E$, to \emph{fuse} $e$ and $f$ means to replace them with a new bamboo $e \circ f$ whose growth rate is $g(e \circ f) =
2\max\{g(e),g(f)\}$. 
Given a schedule $\pi$ containing $e \circ f$, to \emph{unfuse} $e \circ f$ in $\pi$ means to replace the occurrences of $e \circ f$ in $\pi$ alternately by $e$ and $f$.
For simplicity, we refer to the bamboo with minimum growth rate as the
\emph{slowest} bamboo in $E$, by breaking ties arbitrarily.

\begin{restatable}{defn}{FUN}
    \label{def:FUN} The \emph{Fuse--Unfuse schedule} $\pi$ of a BGT instance $(E,g)$ is
    defined inductively. If $E = \{e\}$, then $\pi$ is the infinite repetition of $e$. If $|E| \geq 2$, fuse the two slowest bamboos $e,f \in E$ to obtain $E'$ and let $\pi$ be the Fuse--Unfuse schedule of $E'$ with $e \circ 
    f$ unfused.
\end{restatable}

We state an important property for fusing-unfusing bamboos.
\begin{restatable}{lemma}{superbamboo}
    \label{lem:superbamboo}
    Let $\pi$ be a schedule, let $e \circ f \in \pi$, and let $\pi'$ be
    obtained from $\pi$ by unfusing $e$ and $f$. Then $h(\pi') \leq
    h(\pi)$.
\end{restatable}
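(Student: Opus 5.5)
We argue directly from the definition of height via recurrence times, $h(\pi) = \max_{x} \gamma^\pi(x) g(x)$. Unfusing $e\circ f$ in $\pi$ changes only the occurrences of $e\circ f$. Every other bamboo $x \neq e,f$ occupies exactly the same time steps in $\pi'$ as in $\pi$. Hence $\gamma^{\pi'}(x) = \gamma^{\pi}(x)$ and $h^{\pi'}(x) = h^\pi(x)$. It therefore suffices to show $h^{\pi'}(e) \leq h^\pi(e\circ f)$, and symmetrically for $f$.

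Let the occurrences of $e\circ f$ in $\pi$ be at times $s_1 < s_2 < \dots$. We may assume $e$ receives the odd-indexed occurrences and $f$ the even-indexed ones; the other alternation is symmetric. Consider any maximal window $[i,i']$ avoiding $e$ in $\pi'$. It cannot contain two consecutive occurrences $s_j, s_{j+1}$ of $e\circ f$, because one of them is assigned to $e$. So it contains at most one occurrence $s_j$ of $e\circ f$. Then $[i,s_j-1]$ and $[s_j+1,i']$ are each windows avoiding $e\circ f$ in $\pi$, so each has length at most $\gamma^\pi(e\circ f)$. This gives
\[
\gamma^{\pi'}(e) \leq 2\gamma^\pi(e\circ f)+1.
\]
The same holds for $f$.

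The main obstacle is the additive $+1$, which must be eliminated. A sharper count is needed. The window of interest runs between two consecutive $e$-occurrences $s_{j-1}$ and $s_{j+1}$ (or starts before $s_1$). Its interior length is $(s_{j+1}-s_{j-1})-1 = (s_{j+1}-s_j) + (s_j - s_{j-1}) - 1$. Each gap $s_{j+1}-s_j$ is at most $\gamma^\pi(e\circ f)+1$, since the strictly-between days number at most $\gamma^\pi(e\circ f)$. So the interior length is at most $2\gamma^\pi(e\circ f)+1$. This still carries the $+1$ and looks too weak.

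The way out is to use the paper's convention for heights: a bamboo grows first on day $t$ and is measured before the cut. Under this convention, the height reached by $e$ is $g(e)$ times the number of days from just after a cut up to and including the next cut day. In terms of the cut times, this is $h^\pi(e\circ f) = g(e\circ f)\cdot \max_j (s_{j+1}-s_j)$, together with the initial segment $s_1$. Let $D$ denote this maximum gap in $\pi$. Gaps between consecutive $e$-cuts in $\pi'$ are then sums of two consecutive gaps of $e\circ f$, so they are at most $2D$. The initial segment for $e$ is $s_1 \leq D$, and for $f$ it is $s_2 \leq 2D$. Hence
\[
h^{\pi'}(e) \leq g(e)\cdot 2D \leq 2\max\{g(e),g(f)\}\cdot D = g(e\circ f)\,D \leq h(\pi),
\]
and identically for $f$. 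Interpreting $\gamma$ as this cut-to-cut distance is the step I would check most carefully against the definition of $\gamma^\pi$. The exact off-by-one convention is precisely what makes the factor $2$ in $g(e\circ f)=2\max\{g(e),g(f)\}$ suffice.
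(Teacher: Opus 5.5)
Your proof is correct and follows the same route as the paper. Other bamboos are untouched, and between consecutive cuts of $e$ (or of $f$) in $\pi'$ there are at most two consecutive cut-to-cut gaps of $e\circ f$, which gives $h^{\pi'}(e)\le 2D\,g(e)\le D\,g(e\circ f)\le h(\pi)$. Measuring $D$ as the cut-to-cut distance, including the initial segment $s_1$, is the right choice: the paper's formula for $\gamma^\pi$ as the longest $e$-free window is off by one, and under that formula the asserted bound $\gamma^{\pi'}(e)\le 2\gamma^{\pi}(e\circ f)$ fails (cutting $e\circ f$ on all odd days gives $\gamma^{\pi}(e\circ f)=1$ but $\gamma^{\pi'}(e)=3$). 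For a finite $\pi$ you should also bound the segment after the last cut, which the same two-gap argument handles.
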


\begin{proof}
    The maximum height reached by any bamboo which is not $e \circ f$ is
    unaffected by the unfusion of $e \circ f$. Thus, it suffices to show that
    the maximum height reached by $e$ or $f$ in $\pi'$ is at most the
    maximum height of $e \circ f$ in $\pi$. 

    The maximum height reached by $e \circ f$ in $\pi$ is $\gamma_\pi(e \circ f)
    \cdot g(e \circ f)$. Since $e \circ f$ is replaced by $e$ and $f$ in
    alternation, the recurrence time of both $e$ and $f$ in $\pi'$ is at
    most $2\gamma_\pi(e \circ f)$. Thus, the maximum height reached by $e$ or $f$ in
    $\pi'$ is at most \[
        \max\{\gamma_{\pi'}(e) \cdot g(e), \gamma_{\pi'}(f) \cdot
        g(f)\} \leq 2\gamma_\pi(e \circ f) \cdot \max\{g(e),g(f)\} = \gamma_\pi(e\circ
        f)\cdot g(e \circ f) = h^\pi(e \circ f),
    \]
    which completes the proof.
\end{proof}

We are ready to give the algorithm.
The high-level idea of \Cref{alg:fun} is as follows. The algorithm builds $k$ binary trees $T_1, \dots, T_k$ arising from the recursive fusion of bamboos in $E$. The tree $T_i$ can then be used to output the $i$-th bamboo to cut on each day. Hence, the algorithm returns a periodic schedule. See \Cref{example:FUN-uniform}.

\begin{algorithm}

\caption{Fuse--Unfuse Scheduling Algorithm.}
\label{alg:fun}
 \hspace*{\algorithmicindent} \textbf{Input:} A CBGT instance $(E,\I,g)$ where $(E,\I)$ is a $k$-uniform matroid,  $|E| \geq k$ and $g(e) > 0$ for all $e \in E$. \\
 \hspace*{\algorithmicindent} \textbf{Output:} $\pi = \pi(1),\pi(2),\ldots$. 
\begin{algorithmic}[1]
    \Statex \textbf{Preprocessing:}
\State Initialize a min-heap $H$ over $E$ keyed by growth rate $g$.
\While{$|H| > k$}
    \State Extract the two elements $T_L,T_R$ of minimum growth rate from $H$.
    \State Create a new tree $T$ with children $T_L$ and $T_R$, and assign $g(T) = 2 \cdot \max\{g(T_L), g(T_R)\}$.
    \State Insert $T$ into $H$.
\EndWhile
\State Let $T_1,T_2,\dots,T_k$ be the remaining trees in $H$.
\State Initialize a status bit $\sigma(v) \gets 0$ for every internal vertex $v \in V(T_i)$ and $i \in [k]$.

\Statex
\Statex \textbf{Execution (at time $t$):}
\State $I \gets \emptyset$.
\For{$i=1,2,\dots,k$}
    \State $v \gets$ root of $T_i$.
    \While{$v$ is not a leaf}
        \State $\sigma(v) \gets 1 - \sigma(v)$.
            \State If $\sigma(v) = 0$, then $v \gets$ left child of $v$. Otherwise  $v \gets$ right child of $v$.
    \EndWhile
    \State $I \gets I \cup \{v\}$
\EndFor
\State Cut $I$.
\end{algorithmic}
\end{algorithm}

\begin{example}
\label{example:FUN-uniform}
    Consider the instance of a $2$-uniform matroid $M$ on ground set $\{a,b,c,d,e\}$ with  growth rates $g(a)=0.1, g(b)=0.2, g(c)=0.5,g(d)=0.5, g(e)=0.3$.
    The 2 trees generated by \Cref{alg:fun} are shown in \Cref{fig:FUNuniform}, from which we obtain the schedules $\pi_1=(b,e,a,e,\dots)$ and $\pi_2=(d,c,\dots)$.
    The schedule for $M$ given by \Cref{alg:fun} is $\pi=(\{b,d\},\{e,c\},\{a,d\}, \{e,c\} ,\dots)$.
\end{example}
\begin{figure}
\centering
\begin{tikzpicture}[
  every node/.style={font=\footnotesize},
  tnode/.style={circle,draw,minimum size=2em,inner sep=0pt},
  cnode/.style={circle,draw,minimum size=2em,inner sep=0pt},
  edge/.style={draw, line width=1.2pt}
]
\node[tnode] (Teab) at (0,0) {$T_{eab}$};
\node[below = 0.2em of Teab] {\textcolor{gray}{0.8}};
\node[tnode, below left = 2em and 2em of Teab] (e) {$ e $};
\node[below = 0.2em of e] {\textcolor{gray}{0.3}};
\node[tnode, below right = 2em and 2em of Teab] (Tab) {$ T_{ab}$};
\node[below = 0.2em of Tab] {\textcolor{gray}{0.4}};
\node[tnode, below left = 2em and 2em of Tab] (a) {$ a $};
\node[below = 0.2em of a] {\textcolor{gray}{0.1}};
\node[tnode, below right = 2em and 2em of Tab] (b) {$b$};
\node[below = 0.2em of b] {\textcolor{gray}{0.2}};

\draw[edge] (Teab) -- (e);
\draw[edge] (Teab) -- (Tab);
\draw[edge] (Tab) -- (a);
\draw[edge] (Tab) -- (b);

\node[tnode] (Tcd) at (7,0) {$T_{cd}$};
\node[below = 0.2em of Tcd] {\textcolor{gray}{1}};
\node[tnode, below left = 2em and 2em of Tcd] (c) {$ c $};
\node[below = 0.2em of c] {\textcolor{gray}{0.5}};
\node[tnode, below right = 2em and 2em of Tcd] (d) {$ d$};
\node[below = 0.2em of d] {\textcolor{gray}{0.5}};

\draw[edge] (Tcd) -- (c);
\draw[edge] (Tcd) -- (d);
\end{tikzpicture}
\caption{Binary trees obtained in the preprocessing phase of \Cref{alg:fun} for the 2-uniform matroid with ground set $\{a,b,c,d,e\}$.}
\label{fig:FUNuniform}
\end{figure}
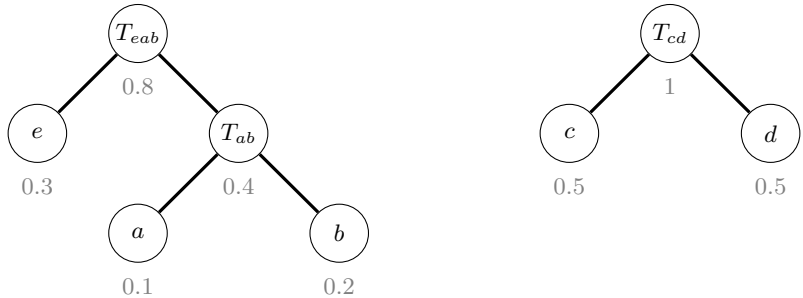

\begin{restatable}{thm}{kuniform}
    \label{lem:kuniform}
    \Cref{alg:fun} implements a height-$2$ schedule for any CBGT instance $(E,\I,g)$, where $(E,\I)$ is a $k$-uniform matroid in polynomial-time.
\end{restatable}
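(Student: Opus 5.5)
The plan is to reduce the claim to a statement about the $k$ binary trees built in the preprocessing phase. There are three steps: an exact description of the schedule each tree produces, a reduction via \Cref{lem:superbamboo} to bounding the root rates, and a bound on those root rates, which is the hard step.

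\textbf{Step 1: the schedule produced by one tree.} For a fixed tree $T_i$, the $i$-th iteration of the execution loop is exactly the Fuse--Unfuse unfolding of \Cref{def:FUN} applied to $T_i$. At every internal vertex $v$, the status bit $\sigma(v)$ flips on each visit, so the visits to $v$ alternate between its two children. Hence the occurrences of $v$ are unfused in alternation, and the root of $T_i$ is visited every day. In the schedule where every tree is replaced by its root, each root therefore has recurrence time $1$ and height $g(T_i)$.

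\textbf{Step 2: reduction to the roots.} Applying \Cref{lem:superbamboo} repeatedly, top-down along $T_i$, shows that every leaf $e$ of $T_i$ reaches height at most $g(T_i)$. More precisely, a leaf at depth $d(e)$ is cut exactly every $2^{d(e)}$ days, so its height is $g(e)2^{d(e)} \le g(T_i)$. Since the $k$ trees have disjoint leaf sets, the set $I$ cut on each day has size exactly $k$ and is valid. The running time is clearly polynomial: preprocessing uses $O(n\log n)$ heap operations, and each day requires walks of total length $O(n)$. It therefore remains to show $g(T_i) < 2$, or at least $\le 2$, for every $i$.

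\textbf{Step 3: bounding the root rates (main obstacle).} Normalization in the $k$-uniform matroid polytope gives $g(e)\le 1$ for all $e$ and $\sum_e g(e)\le k$. A fused rate $2\max\{x,y\}$ is not controlled by $x+y$, so the sum of rates is not monotone under fusion. The argument must instead use that the two \emph{slowest} trees are fused, and that fusion happens only while more than $k$ trees remain. I would try the following invariant: at every moment, every tree $T$ in the heap satisfies $g(T)\le 2$, and every fused tree satisfies $g(T) \le 2$ strictly unless it is the only tree ever to reach rate $2$.

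The intended inductive step is as follows. When the two slowest trees, with rates $x\le y$, are fused while $m>k$ trees are present, every tree in the heap has rate at least $y$. The goal is then to show $y\le 1$, or $y<1$, from a weighted count in which each tree's rate is charged to its leaves via $g(e)2^{d(e)}\le g(T)$ together with the Kraft equality $\sum_{e\in T}2^{-d(e)}=1$. Combining the lower bound $m\,y$ on the heap's total rate with $\sum_e g(e)\le k$ and $m\ge k+1$ should give $y<1$ whenever $y$ is the rate of an unfused leaf. The delicate case is when $y$ is itself the rate of an earlier fused tree, so that its rate overstates its leaf mass by up to a factor of $2$.

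For this case, my first attempt would be to pass to the rounded instance. Replace each $g(e)$ by $\hat g(e)=2^{-\lfloor\log_2(1/g(e))\rfloor}$, which satisfies $\hat g(e)<2g(e)$. On $\hat g$, fusion of two equal powers of two is exactly additive, and the greedy choice pairs equal rates whenever possible. The total $\hat g$-mass is then preserved up to the unpaired smallest element, which reduces the problem to the powers-of-two case. If this works, it gives $\hat g(T_i)\le 1$, and comparing $g$ with $\hat g$ then yields $g(T_i)<2$. This comparison is the step I expect to need the most care.
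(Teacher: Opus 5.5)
Your Steps 1 and 2 agree with the paper's reduction: a leaf of $T_i$ at depth $d$ is cut every $2^{d}$ days, and $g(T_i)=\max_{e\in T_i}2^{d(e)}g(e)$. Step 3, however, carries the whole content of the theorem, and it has a genuine gap.

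The Kraft charging runs in the wrong direction. From $g(e)2^{d(e)}\le g(T)$ and $\sum_{e\in T}2^{-d(e)}=1$ you only obtain $g(T)\ge\sum_{e\in T}g(e)$. That is a second \emph{lower} bound on the heap's total rate, and it cannot be combined with the lower bound $x+(m-1)y$ to contradict $\sum_e g(e)\le k$. What you need is an \emph{upper} bound on the heap's total in terms of $\sum_e g(e)$. You need it even when $y$ is an unfused leaf, because other trees in the heap may already be fused.

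The rounding does not supply this bound. As you define it, $\hat g$ rounds \emph{up} ($g\le\hat g<2g$), so $\sum_e\hat g(e)$ can approach $2k$. The intended conclusion $\hat g(T_i)\le 1$ is then false: for $k=1$ and $g=(0.51,0.49)$ one gets $g(T_1)=1.02$ and $\hat g(T_1)=2$. Rounding \emph{down} instead would give $\sum_e\hat g(e)\le k$ and $g(T_i)<2\hat g(T_i)$, and the $g$-greedy would also be a $\hat g$-greedy. But the powers-of-two case would still require bounding the mass created by unequal fusions, and that mass is not a single unpaired element. With $k=2$ and rates $(1/8,1/4,1/4,1/2)$, the two fusions add $1/8$ and then $1/4$.

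The missing idea is an exact accounting of the heap's total rate $g(H)$.
\begin{itemize}
\item Fusing $T_L,T_R$ with $g(T_L)\le g(T_R)$ changes $g(H)$ by exactly $g(T_R)-g(T_L)$.
\item The new tree has rate $2g(T_R)\ge g(T_R)$, and every other tree already had rate at least $g(T_R)$. So the heap minimum never drops, i.e.\ $g(T_L^{i+1})\ge g(T_R^{i})$.
\item Hence the increments telescope. After the last fusion $m$, $g(H^m)\le k+g(T_R^m)-g(T_L^1)<k+g(T_R^m)$, using $g>0$.
\item Suppose $g(T_R^m)\ge 1$. The other $k-1$ final trees were in the heap alongside the two slowest trees, so they have rate at least $g(T_R^m)\ge 1$. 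Then $g(H^m)\ge (k-1)+2g(T_R^m)\ge k+g(T_R^m)$, a contradiction.
\item Therefore $g(T_R^m)<1$. Since $g(T_R^{i+1})\ge g(T_L^{i+1})\ge g(T_R^i)$, the values $g(T_R^i)$ are nondecreasing in $i$. This is exactly your ``$y<1$ at every fusion''.
\end{itemize}
It follows that every fused root has rate $<2$, and unfused roots have rate at most $1$ by normalization. No rounding or Kraft argument is needed.
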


\begin{proof}
    First of all, note that the assumption that $|E| \geq k$ and that $g(e) > 0$ for all $e \in E$ is only for convenience as if $k > |E|$, then we can just run our algorithm with $k = |E|$ instead and bamboos $e \in E$ with $g(e) = 0$ can be removed. 
    
    We start by observing that we can view $\pi$ as $k$ schedules $\pi_1,\pi_2,\dots,\pi_k$ executed in parallel, one for each tree $T_1,T_2,\dots,T_k$. 
    Note that at each day, for each $i \in [k]$ the algorithm follows a unique root-to-leaf path in $T_i$ and cuts exactly one leaf of $T_i$. The bit-flipping rule guarantees that the traversal is well defined for all days in each tree. Hence, the algorithm returns a valid schedule $\pi$ as only $k$ bamboos are cut every day. 
    Furthermore, for each $i \in [k]$, $\pi_i$ is exactly the Fuse--Unfuse schedule for the leaves of $T_i$
    
    We show that $h(\pi) < 2$. By the above argument, we have to show that $g(T_i) < 2$ for each $i \in [k]$ and $h(\pi) = \max_{ i \in [k]} h(\pi_i) < 2$ follows from \Cref{lem:superbamboo}.

    Towards this, let $m$ be the number of times that the While-loop on lines $2$ to $6$ is executed. We henceforth refer to this as simply ``the loop''. Note that the claim is trivial if $m=0$ as $g(e) \leq 1$ for all $e \in E$ by \Cref{eq:growth-and-rank}. Denote by $H^0,H^1,H^2,\dots,H^m$ the set of trees in $H$ after the $i$-th iteration of the loop. This way, $H^m = \{T_1,T_2,\dots,T_k\}$. Also denote by $T_L^i$, $T_R^i$, and $T^i$ the trees $T_L$, $T_R$ and, $T$, respectively, in the $i$-the iteration of the loop for $i \in [m]$. We assume without loss of generality that $g(T_R^i) \geq g(T_L^i)$ such that $g(T^i) = 2g(T^i_R)$ for all $i \in [m]$. Define $g(H^i) = \sum_{T \in H^i} g(T)$ for $i \in [m]$ and note that $g(H^0) = \sum_{e \in E} g(e) \leq k$ by \eqref{eq:growth-and-rank}.

    Consider some iteration $i \in [m]$ of the loop. We create $H^i$ from $H^{i-1}$ by removing $T_L^i$ and $T_R^i$ and adding the new tree $T^i$ with $g(T^i) = 2g(T_R^i)$. Therefore, $g(H^{i}) = g(H^{i-1}) + g(T_R^i)-g(T_L^i)$. Now, notice that the tree $T_L^{i+1}$ removed in iteration $i+1$ has $g(T^{i+1}_L) \geq g(T^{i}_R)$ as $T_L$ and $T_R$ are chosen to be the slowest bamboos in $H^{i-1}$ in each iteration $i$.
    Therefore \begin{equation}
        \label{eq:contradictme}
        g(H^m) = g(H^0) + \sum_{i=1}^m (g(T_R^i)-g(T_L^i)) \leq k + g(T_R^m)-g(T_L^1) < k + g(T_R^m).
    \end{equation}

    We now notice that $g(T^m) \geq g(T^{m-1}) \geq \dots \geq g(T^1)$. This follows from the fact that $g(T^i) = 2g(T_R^i)$ and that the growth rate of the second-smallest element $T_R^i$ in $H_i$ only increases with $i$.
    
    We claim that $g(T^m) < 2$. Indeed, if this holds, then no tree with growth rate $2$ or more is ever added to $H$; in particular, $g(T_j) < 2$ for all $j \in [k]$. Assume for the sake of contradiction that $g(T^m) = 2g(T_R^m) \geq 2$. By renumbering the trees $T_1,T_2,\dots,T_k$, we may assume without loss of generality that $T^m = T_k$. As the $m$-th iteration is the last, we must have $H^{m-1} = \{T_1,T_2,\dots,T_{k-1},T_R^i,T_L^i\}$. As $T_R^m$ and $T_L^m$ are the two slowest trees in $H^{m-1}$ and $g(T_R^i) = (1/2)g(T_k) \geq 1$, we must have \[
        \sum_{j=1}^{k-1} g(T_j) + g(T_R^{m}) \geq (k-1)\cdot 1 + 1 = k,
    \]
    which implies \[
        g(H^m) = \sum_{j=1}^{k-1} g(T_j) + 2g(T_R^m) \geq k+g(T_R^m).
    \]
    This is a contradiction to \Cref{eq:contradictme}.

    Regarding the running time, it is clear that the preprocessing runs in time $O(|E| \log |E|)$ with any implementation of a heap $H$ that can do push and pop operations in $O(\log |H|)$ time. Furthermore, $v$ does not take the same value twice on line $13$, and thus the time used to compute the set to cut each day is linear in $|E|$.
\end{proof}

The following is an immediate corollary of \Cref{lem:kuniform}.
\begin{corollary}
\label{cor:bgt2}
\Cref{alg:fun} implements a height-$2$ schedule for any BGT instance.
\end{corollary}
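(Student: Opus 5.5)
The plan is to specialize \Cref{lem:kuniform} to $k=1$. The first step is to observe that a BGT instance $(E,g)$ with the standard normalization $\sum_{e\in E} g(e)=1$ is exactly the CBGT instance $(E,\I,g)$ with $\I=\{X\subseteq E \mid |X|\le 1\}$, the $1$-uniform matroid.

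The second step is to check that the normalizations agree. By \eqref{eq:growth-and-rank}, $g$ lies in the convex hull of $\{\ind_I \mid I\in\I\}$ if and only if $0\le \sum_{e\in X} g(e)\le r(X)$ for all $X\subseteq E$. For the $1$-uniform matroid, $r(X)=1$ whenever $X\neq\emptyset$. The condition therefore reduces to $\sum_{e\in E} g(e)\le 1$, which holds with equality for a normalized BGT instance. So every normalized BGT instance is a valid CBGT instance on a $1$-uniform matroid.

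The third step is to match schedules. A valid schedule for this CBGT instance cuts at most one bamboo per day, which is precisely a BGT schedule. Heights are defined identically in both settings. Running \Cref{alg:fun} with $k=1$ builds a single tree $T_1$, and \Cref{lem:kuniform} gives height $<2$ and polynomial time.

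The only point that needs care is handling the degenerate preconditions of \Cref{alg:fun}. Bamboos with $g(e)=0$ are removed and never cut; they stay at height $0$, so they do not affect the bound. If $|E|=0$, the statement is trivial. If $|E|\ge 1=k$, the requirement $|E|\ge k$ is satisfied. None of these steps is a real obstacle, since the corollary is a direct instantiation.
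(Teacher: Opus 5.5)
Your proposal is correct and matches the paper, which states the corollary as an immediate consequence of \Cref{lem:kuniform} with $k=1$, since BGT is exactly CBGT on the $1$-uniform matroid. You also check explicitly that the matroid-polytope normalization reduces to $\sum_{e\in E} g(e)\le 1$ and handle the degenerate preconditions; these are fine additions that the paper leaves implicit.
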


Furthermore, since any partition matroid is the direct sum of uniform matroids, \Cref{lem:kuniform} also implies an efficient algorithm for partition matroids by running \Cref{alg:fun} on every uniform matroid in parallel.

\begin{corollary}
    \label{cor:partition}
    There is a polynomial-time algorithm implementing a height-2 schedule for any CBGT instance $(E,\cI,g)$, where $(E,\cI)$ is a partition matroid.
\end{corollary}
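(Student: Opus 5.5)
The plan is to run \Cref{alg:fun} independently on each block of the partition matroid and take the union of the sets cut on each day. Write the partition matroid as $(E,\I) = \bigoplus_{j=1}^m (E_j,\I_j)$, where $(E_j,\I_j)$ is the $k_j$-uniform matroid on $E_j$ and the $E_j$ are disjoint. Since the input specifies the partition $E_1,\dots,E_m$ and the capacities $k_j$, this decomposition is available at no cost.

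\textbf{Step 1: each restricted instance is a valid CBGT instance.} For each $j$, consider $(E_j,\I_j,g|_{E_j})$. We need $g|_{E_j}$ to lie in the convex hull of independent sets of the $k_j$-uniform matroid. There are two routes.
\begin{itemize}
\item By \eqref{eq:growth-and-rank} for the partition matroid, every $X \subseteq E_j$ satisfies $\sum_{e\in X} g(e) \le r(X)$. Because the rank of the direct sum restricted to $E_j$ equals $r_j$, this is exactly the matroid-polytope condition for $(E_j,\I_j)$.
\item Alternatively, take the convex combination $g=\sum_I \lambda(I)\ind_I$ and restrict it. Each $I\cap E_j$ lies in $\I_j$, so $g|_{E_j}=\sum_I \lambda(I)\ind_{I\cap E_j}$ is a convex combination of independent sets of $\I_j$.
\end{itemize}
Elements with $g(e)=0$ are removed. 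If $|E_j|<k_j$, we replace $k_j$ by $|E_j|$, as in the proof of \Cref{lem:kuniform}.

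\textbf{Step 2: run the blocks in parallel.} For each $j$, \Cref{lem:kuniform} yields a schedule $\pi_j$ with $h(\pi_j)<2$, computed in polynomial time per step. Define $\pi(t)=\bigcup_j \pi_j(t)$.
\begin{itemize}
\item \emph{Validity.} $\pi(t)\cap E_j=\pi_j(t)\in\I_j$ for every $j$, so $\pi(t)\in\I$ by the definition of the direct sum.
\item \emph{Height.} Each bamboo $e\in E_j$ is cut in $\pi$ on exactly the days it is cut in $\pi_j$. Hence $\gamma^\pi(e)=\gamma^{\pi_j}(e)$, so $h^\pi(e)=h^{\pi_j}(e)<2$, and therefore $h(\pi)=\max_j h(\pi_j)<2$.
\item \emph{Running time.} Preprocessing is $O(\sum_j |E_j|\log|E_j|)$, and each day costs $O(|E|)$.
\end{itemize}

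The only point requiring care is Step 1, namely that restricting the growth vector to a block keeps it normalized. That follows directly from restricting the convex combination, so I do not expect a real obstacle.
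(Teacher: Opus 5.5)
Your proposal is correct and matches the paper's argument: write the partition matroid as a direct sum of uniform matroids, run \Cref{alg:fun} on each block in parallel, and cut the union of the blocks' sets each day. You also carry out the routine checks that the paper leaves implicit, namely that the restricted growth vector stays normalized, that validity follows from the direct-sum definition, and that recurrence times are unchanged.
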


Observe that if we simply scale the heights of a BGT instance $B$ by some constant $c$, then we also scale the height of any schedule $\pi$ for $B$ by $c$. Furthermore, \Cref{alg:fun} is oblivious to the absolute growth rates of bamboos and only depends on their ratios. This gives the following corollary, which is useful in the next subsections.

\begin{corollary}\label{cor:bgt4}
    For any BGT instance $(E,g)$ where $\sum_{e\in E}g(e)\leq 2$,
    \Cref{alg:fun} implements a schedule $\pi$ with $h(\pi) < 4$.
\end{corollary}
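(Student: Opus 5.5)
The plan is a reduction to \Cref{cor:bgt2} by rescaling. Let $(E,g)$ be a BGT instance with $s := \sum_{e\in E} g(e) \le 2$. If $s=0$ the statement is trivial, since no bamboo ever grows. Otherwise we consider the rescaled instance $(E,g')$ with $g' = g/s$. This is a valid (normalized) BGT instance: $\sum_e g'(e) = 1$, so $g'$ is a convex combination of incidence vectors of the $1$-uniform matroid. Bamboos with $g(e)=0$ have $g'(e)=0$ and can be removed exactly as in the proof of \Cref{lem:kuniform}.

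The key observation is that \Cref{alg:fun} run on $(E,g)$ and on $(E,g')$ produces literally the same schedule. The preprocessing only compares growth rates via the min-heap. The fusion rule $g(T)=2\max\{g(T_L),g(T_R)\}$ commutes with multiplication by the positive scalar $1/s$, so by induction over the loop iterations every tree has its $g'$-value equal to its $g$-value divided by $s$. Hence all heap comparisons, and therefore all fusions (with ties broken identically), coincide. The execution phase does not look at growth rates at all. So both instances yield the same trees and the same output sequence $\pi$.

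Next we compare heights. By definition, $h^\pi(e) = \gamma^\pi(e)\, g(e)$. The recurrence times $\gamma^\pi(e)$ depend only on $\pi$, so the heights under $g$ are exactly $s$ times the heights under $g'$. By \Cref{cor:bgt2} applied to $(E,g')$, and indeed by the strict inequality $h(\pi)<2$ shown in the proof of \Cref{lem:kuniform} with $k=1$, we obtain
\[
h_g(\pi) = s\cdot h_{g'}(\pi) < 2s \le 4.
\]

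The only point requiring care is the invariance of the algorithm under scaling, specifically that ties are broken consistently. This is handled by fixing the same deterministic tie-breaking rule in the heap for both runs. Everything else is immediate.
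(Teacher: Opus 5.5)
Your proof is correct and is essentially the paper's argument: the paper justifies the corollary by noting that \Cref{alg:fun} depends only on the ratios of the growth rates and that heights scale linearly with $g$, so the strict height-$2$ guarantee for the rescaled instance transfers directly. The only cosmetic difference is that you normalize by $s$ instead of by $2$, which gives the slightly sharper bound $h(\pi) < 2s \le 4$.
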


\subsection{Height-4 Schedules for Graphic and Laminar Matroids in Polynomial-Time}
\label{subsec:Graphic_laminar}

In this section, we give polynomial-time algorithms for CBGT on graphic and laminar matroids.
The idea behind both algorithms is to partition the ground set (by means of a coloring) such that each part can be scheduled independently. Define a coloring $c : E \rightarrow [k]$ as \textit{rainbow-circuit-free}\footnote{Also known as rainbow-cycle-forbidding  on graphs~\cite{fujita2010rainbow,Gouge10,hoffman2019rainbow,jendrol200rainbow}.} if $c$ is such that any subset $X \subseteq E$ which contains at most one element of each color is independent. An illustration is shown in \Cref{fig:FUNgraphic}.
It was shown in \cite{berczi2021ListColoring} that one can obtain such colorings where each color class is not too large for several classes of matroids, including graphic and laminar matroids. We are interested in rainbow-circuit-free coloring where, in addition, the total sum of growth-rates within each class is small, and we show that such a coloring can be obtained using the same approach as in \cite[Theorem 1.5]{berczi2021ListColoring}.

Clearly, if we pick at most one element of each color to schedule on every day, we obtain a valid schedule. 
We show that, for any graphic or laminar matroid, one can obtain a rainbow-circuit-free coloring $c$ such that, for every color class $i \in [k]$, we have $\sum_{e \in c^{-1}(i)} g(e) \leq 2$ where $c^{-1}(i) = \{e \in E \mid c(e) = i\}$. 
By \Cref{cor:bgt4}, running a Fuse--Unfuse schedule for each color class in parallel gives a schedule with height less than $4$. We note that the schedule for each part can be any schedule for which \Cref{cor:bgt4} holds. 

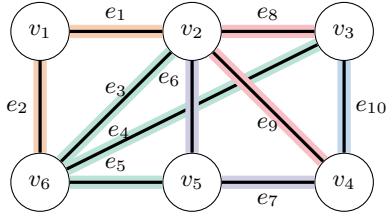
\begin{figure}
\centering
\begin{tikzpicture}[
  every node/.style={font=\footnotesize},
  tnode/.style={circle,draw,minimum size=2em,inner sep=0pt},
  cnode/.style={circle,draw,minimum size=2em,inner sep=0pt},
  edge/.style={draw, line width=1.2pt}
]
\node[tnode] (v1) at (0,2) {$v_1$};
\node[tnode] (v2) at (2,2) {$ v_2 $};
\node[tnode] (v3) at (4,2) {$ v_3 $};
\node[tnode] (v4) at (4,0) {$ v_4 $};
\node[tnode] (v5) at (2,0) {$ v_5 $};
\node[tnode] (v6) at (0,0) {$ v_6 $};

\draw [MyOrange!30, line width = 5pt] (v1) -- (v2) ;
\draw[edge] (v1) to node[midway, above] {$e_1$} (v2);
\draw [MyOrange!30, line width = 5pt] (v1) -- (v6) ;
\draw[edge] (v1) to node[midway, left] {$e_2$} (v6);
\draw [greenish!30, line width = 5pt] (v6) -- (v2) ;
\draw[edge] (v6) to node[midway, above] {$e_3$} (v2);
\draw [greenish!30, line width = 5pt] (v6) -- (v3) ;
\draw[edge] (v6) to node[pos=0.3, left] {$e_4$} (v3);
\draw [greenish!30, line width = 5pt] (v6) -- (v5) ;
\draw[edge] (v6) to node[midway, above] {$e_5$} (v5);
\draw [MyPurple!35, line width = 5pt] (v5) -- (v2) ;
\draw[edge] (v5) to node[pos=0.8, left] {$e_6$} (v2);
\draw [MyPurple!35, line width = 5pt] (v5) -- (v4) ;
\draw[edge] (v5) to node[midway, below] {$e_7$} (v4);
\draw [myred!30, line width = 5pt] (v2) -- (v3) ;
\draw[edge] (v2) to node[midway, above] {$e_8$} (v3);
\draw [myred!30, line width = 5pt] (v2) -- (v4) ;
\draw[edge] (v2) to node[midway, below] {$e_9$} (v4);
\draw [myblue!35, line width = 5pt] (v3) -- (v4) ;
\draw[edge] (v3) to node[midway, right] {$e_{10}$} (v4);
\end{tikzpicture}
\caption{Rainbow-circuit-free coloring of a graphic matroid on edge set $\{e_1, e_2, \dots, e_{10}\}$.}
\label{fig:FUNgraphic}
\end{figure}

\subsubsection{Graphic Matroids}
\label{sec:graphic_Matroid}

Let $(E,\mathcal{I},g)$ be a CBGT instance such that $(E,\mathcal{I})$ is a graphic matroid, and let $G$ be the graph associated with $(E,\mathcal{I})$, with edge set $E(G)$ and vertex set $V(G)$. Let $n = |V(G)|$.

Our goal is to find a rainbow-circuit-free coloring $c$ of $E$ such that no color class has sum of growth rates more than $2$. For a graphic matroid, a rainbow-circuit-free coloring is an edge coloring where no cycle is rainbow, where a cycle $C$ in $G$ is rainbow if all edges on $C$ have distinct colors. That is, in a rainbow-circuit-free coloring, no set of edges with distinct colors forms a cycle. We use the term \emph{circuit} to remain consistent with the matroid terminology.

Colorings that forbid certain rainbow subgraphs have been studied in the literature \cite{fujita2010rainbow,Gouge10,hoffman2019rainbow,jendrol200rainbow}. It is also known that $n-1$ colors suffice to obtain a rainbow-circuit-free coloring. We use the following simple and well-known observation (see, e.g.,~\cite{Gouge10}) and include a proof for completeness.

\begin{restatable}{lemma}{goodcoloring}
    \label{lem:goodcoloring}
    Let $c$ be a coloring of $G$ obtained as follows. Take any partition
    $(A,B)$ of $V(G)$. Color the edges between $A$ and $B$ with color $1$,
    color $G[A]$ with any rainbow-circuit-free coloring using colors
    $\{2,3,\dots,|A|\}$, and color $G[B]$ with any rainbow-circuit-free coloring
    using colors $\{|A|+1,\dots,n-1\}$. Then, $c$ is rainbow-circuit-free.
\end{restatable}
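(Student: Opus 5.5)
We argue directly from the definition: let $C$ be any cycle of $G$ whose edges all receive distinct colors under $c$, and derive a contradiction. The idea is to split into cases according to how $C$ interacts with the cut $(A,B)$. The color sets $\{1\}$, $\{2,\dots,|A|\}$ and $\{|A|+1,\dots,n-1\}$ are pairwise disjoint, so the colors of different parts never interfere.

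\emph{Case 1: $C$ uses no edge between $A$ and $B$.} Then all vertices of $C$ lie on one side, since a cycle is connected and any edge joining the two sides would be a crossing edge. So $C$ is a cycle of $G[A]$ or of $G[B]$. Every edge of $C$ then has its color from the rainbow-circuit-free coloring of that induced subgraph. Hence $C$ is a rainbow cycle in $G[A]$ or $G[B]$, contradicting the choice of those colorings.

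\emph{Case 2: $C$ uses at least one crossing edge.} Traversing $C$ starts and ends at the same vertex. Each crossing edge switches sides, so the number of crossing edges on $C$ is even, and therefore at least $2$. All crossing edges have color $1$, so $C$ contains two edges of the same color and is not rainbow, again a contradiction.

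Since both cases are impossible, no cycle is rainbow and $c$ is rainbow-circuit-free. It also helps to note that the color ranges fit inside $[n-1]$: $G[A]$ uses $|A|-1$ colors and $G[B]$ uses $|B|-1$ colors. Together with color $1$ this gives $1+(|A|-1)+(|B|-1)=n-1$ colors, consistent with the stated count, and the recursion can bottom out at single vertices, which have no edges. The only step needing any care is the parity argument in Case 2, and it is routine.
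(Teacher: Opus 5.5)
Your proposal is correct and follows essentially the same approach as the paper. Both split into the case where the cycle lies inside $G[A]$ or $G[B]$ (handled by the rainbow-circuit-free colorings there) and the case where it contains at least two crossing edges of color $1$. Your parity argument just makes explicit the paper's unproved remark that a cycle leaving one side must cross back.
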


\begin{proof}
    Let $C$ be any cycle in $G$. If $C$ is contained in $G[A]$ or $G[B]$,
    then $C$ is not rainbow since $G[A]$ and $G[B]$ are colored with
    rainbow-circuit-free colorings. Otherwise, $C$ contains at least $2$ edges
    between $A$ and $B$ and is thus not rainbow as all such edges have such
    color $1$.
\end{proof}

An algorithm computing a rainbow-circuit-free coloring in which every color class has sum of growth rates at most $2$ is given by \Cref{alg:graphic}. 
The high-level idea of the algorithm is as follows. 
At every step, pick a vertex with smallest ``vertex growth'' $g_{H}(v)$ (line 4) in the current subgraph $H$ of $G$ induced by the vertices that have not been picked yet and then color all the adjacent edges with a new color. By \Cref{lem:goodcoloring}, the resulting coloring is rainbow-circuit-free. Furthermore, using \eqref{eq:growth-and-rank}, we show that each color class satisfies the growth-rate bound.

\begin{algorithm}
\caption{Rainbow-circuit-free Coloring of a Graphic Matroid}
\label{alg:graphic}
 \hspace*{\algorithmicindent} \textbf{Input:} A CBGT instance $(E,\I,g)$  where $(E,\I)$ is the graphic matroid. \\
 \hspace*{\algorithmicindent} \textbf{Output:} A rainbow-circuit-free coloring $c:E\to[n-1]$.
\begin{algorithmic}[1]
\State $H_{n-1} \gets G$.
\State $i \gets n-1$.
\While{$i \ge 1$}
    \State Compute the vertex growth rates $g_{H_i}=\sum_{u: vu \in E(H_i)} g(vu)$.
    \State Select a vertex $v_i \in V(H_i)$ minimizing $g_{H_i}(v)$.
    \State Assign color $i$ to all edges incident to $v_i$.
    \State $H_{i-1} \gets H_i \setminus \{v_i\}$.
    \State $i \gets i-1$.
\EndWhile
\end{algorithmic}
\end{algorithm}

\begin{restatable}[$\star$]{thm}{THMgraphic}
    \label{thm:graphic}
    Let $(E,\I,g)$ be an CBGT instance where $(E,\I)$ is the graphic matroid. Then there exists a rainbow-circuit-free coloring $c:E\to[n-1]$ of $G$
    such that $\sum_{e \in E_i} g(e) \leq 2-\frac{2}{n}$ for every color
    class $E_i$ of $c$. Furthermore, $c$ can be obtained in polynomial time.
\end{restatable}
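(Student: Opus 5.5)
We will show that the coloring produced by \Cref{alg:graphic} has the required properties. Polynomial running time is immediate, since there are $n-1$ iterations and each computes at most $n$ vertex growth rates. The argument has two parts: rainbow-circuit-freeness, proved inductively via \Cref{lem:goodcoloring}, and the bound on each color class, proved by averaging together with the rank constraint \eqref{eq:growth-and-rank}. We assume $G$ is connected; otherwise we either work componentwise, or note that the rank bound used below only becomes stronger.

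\textbf{Rainbow-circuit-freeness.} We argue by induction on $i$ that the coloring restricted to $H_i$ is rainbow-circuit-free using only colors $\{1,\dots,i\}$. Note that $H_i$ has $i+1$ vertices and the algorithm uses colors $1,\dots,i$ on the remaining edges. For the inductive step, apply \Cref{lem:goodcoloring} to $H_i$ with the partition $A=\{v_i\}$, $B=V(H_i)\setminus\{v_i\}$, after relabeling colors.
\begin{itemize}
\item The edges between $A$ and $B$ all receive the single color $i$.
\item $G[A]$ has no edges.
\item $G[B]=H_{i-1}$ is colored with $\{1,\dots,i-1\}$, which is rainbow-circuit-free by induction.
\end{itemize}
More directly: any cycle through $v_i$ uses two edges of color $i$, so it is not rainbow, and any cycle avoiding $v_i$ lies in $H_{i-1}$. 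Hence no cycle is rainbow.

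\textbf{Growth bound.} The color class $E_i$ consists of the edges of $H_i$ incident to $v_i$, so $\sum_{e\in E_i} g(e)=g_{H_i}(v_i)$. Because $v_i$ minimizes $g_{H_i}$, this value is at most the average vertex growth. Each edge is counted twice in the sum of vertex growths, so
\[
g_{H_i}(v_i)\le \frac{1}{i+1}\sum_{v\in V(H_i)} g_{H_i}(v)=\frac{2}{i+1}\sum_{e\in E(H_i)} g(e).
\]
Now apply \eqref{eq:growth-and-rank} with $X=E(H_i)$. The edge set of a graph on $i+1$ vertices has graphic rank at most $i$, so $\sum_{e\in E(H_i)}g(e)\le i$. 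This gives
\[
\sum_{e\in E_i} g(e)\le \frac{2i}{i+1}=2-\frac{2}{i+1}\le 2-\frac{2}{n},
\]
where the last step uses $i+1\le n$.

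\textbf{Main obstacle.} The only delicate step is applying the rank bound to the induced subgraph $H_i$ rather than to $G$ itself. This works because \eqref{eq:growth-and-rank} holds for every subset $X\subseteq E$, and $r(E(H_i))\le |V(H_i)|-1$ holds regardless of whether $H_i$ is connected. The rest of the argument is bookkeeping of indices, in particular making sure that $H_i$ indeed has exactly $i+1$ vertices.
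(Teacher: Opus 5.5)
Your proof is correct and follows the paper's argument. It analyzes the same greedy minimum-vertex-growth algorithm, proves rainbow-circuit-freeness by the same inductive use of \Cref{lem:goodcoloring} that peels off $v_i$, and bounds each color class by the same averaging argument, $g_{H_i}(v_i)\le 2(|V(H_i)|-1)/|V(H_i)|$, obtained from \eqref{eq:growth-and-rank} applied to $E(H_i)$.
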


\Cref{thm:graphic} and \Cref{cor:bgt4} immediately imply a polynomial-time algorithm implementing a schedule of height less than $4$ for graphic matroids.
See \Cref{example:FUN-graphic}.
\begin{restatable}[$\star$]{thm}{THMgraphicFUN}
\label{thm:graphic-FUN}
    For any CBGT instance $(E,\I,g)$, where $(E,\I)$ is a graphic matroid, there is a schedule $\pi$ with $h(\pi)< 4$. Furthermore, $\pi$ can be implemented in polynomial time.
\end{restatable}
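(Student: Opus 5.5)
The plan is to combine \Cref{thm:graphic} with \Cref{cor:bgt4}. First I compute, using \Cref{alg:graphic}, a rainbow-circuit-free coloring $c\colon E\to[n-1]$ of $G$ whose color classes $E_1,\dots,E_{n-1}$ satisfy $\sum_{e\in E_i} g(e)\le 2-\frac{2}{n}\le 2$; by \Cref{thm:graphic} this takes polynomial time. Bamboos with $g(e)=0$ can be discarded beforehand, and empty color classes are simply ignored.

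Next, for each nonempty class $E_i$ I consider the BGT instance $(E_i, g|_{E_i})$. Its total growth is at most $2$, so \Cref{cor:bgt4} says that \Cref{alg:fun} with $k=1$, run on $E_i$, implements a schedule $\pi_i$ with $h(\pi_i)<4$. The combined schedule is $\pi(t)=\bigcup_i \pi_i(t)$: on each day it cuts the single bamboo that each $\pi_i$ chooses.

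It remains to check three things.
\begin{enumerate}
\item \emph{Validity.} Each $\pi(t)$ contains at most one edge of each color. By the definition of rainbow-circuit-free, such a set contains no cycle, so it is a forest and therefore lies in $\I$. Since $\I$ is downward closed, this holds even when some $\pi_i$ cut nothing.
\item \emph{Height.} Each bamboo $e\in E_i$ is cut in $\pi$ exactly when it is cut in $\pi_i$. Hence $\gamma^\pi(e)=\gamma^{\pi_i}(e)$ and $h^\pi(e)=h^{\pi_i}(e)<4$. Taking the maximum over the finitely many bamboos gives $h(\pi)<4$.
\item \emph{Running time.} The preprocessing consists of the coloring and at most $n-1$ heap constructions, each polynomial. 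Each day, every tree is traversed once, which costs time linear in $|E|$ overall, as in the proof of \Cref{lem:kuniform}.
\end{enumerate}

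The main (though mild) obstacle is that \Cref{cor:bgt4} is stated for instances with total growth at most $2$ and no normalization beyond that. I therefore have to make sure that applying \Cref{alg:fun} to a single color class needs no convex-combination assumption. This holds because the bound of \Cref{lem:kuniform} for $k=1$ only uses $\sum_{e} g(e)\le 1$ and $g(e)\le 1$. After scaling by $2$, this is exactly the hypothesis available for each class, since $g(e)\le \sum_{e'\in E_i} g(e')\le 2$.
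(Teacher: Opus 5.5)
Your proposal is correct and is essentially the paper's proof: you take the coloring from \Cref{thm:graphic}, run a Fuse--Unfuse schedule on each color class via \Cref{cor:bgt4}, and cut the union each day. Validity follows from rainbow-circuit-freeness, and the height bound from $h(\pi)=\max_i h(\pi_i)<4$. Your extra remarks are harmless refinements of the same argument: discarding zero-growth edges (which also removes loops), and checking that \Cref{cor:bgt4} only needs $\sum_{e\in E_i} g(e)\le 2$.
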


\begin{example}
    \label{example:FUN-graphic}
    Consider the instance of a graphic matroid $M$ on edge set $\{e_1, e_2, \dots, e_{10 }\}$ with $g(e)=1/2$ for every edge. A rainbow-circuit-free coloring obtained by \Cref{alg:graphic} is shown in \Cref{fig:FUNgraphic}, from which we obtain the $n-1=5$ schedules $\pi_1=(e_1,e_2,\dots)$, $\pi_2=(e_3,e_4,e_3,e_5,\dots)$, $\pi_3=(e_6,e_7, \dots)$, $\pi_4=(e_8,e_9,\dots)$, and $\pi_5=(e_{10}, \dots)$. The schedule for $M$ given by \Cref{thm:graphic-FUN} is $\pi=(\{e_1,e_3,e_6,e_8,e_{10}\},\{e_2,e_4,e_7,e_9,e_{10}\},\{e_1,e_3,e_6,e_8,e_{10}\}, \{e_2,e_5,e_7,e_9,e_{10}\} ,\dots)$. 
\end{example}

\subsubsection{Laminar Matroids}
\label{sec:laminar_Matroid}

Through this section, let $M=(E,\mathcal{I})$ be a laminar matroid defined by the laminar family $\mathcal{L}$ and the function $b:\mathcal{L} \rightarrow \N$. We start with a useful characterization of the rainbow-circuit-free colorings of $M$.

\begin{restatable}{lemma}{rainbowlaminar}
    \label{lem:rainbowlaminar}
    Let $M$ be a laminar matroid. Let $c:E\to \mathbb{N}$ be any coloring
     such that $|c(L)| \leq b(L)$ holds for all $L \in \mathcal{L}$. Then
    $c$ is a rainbow-circuit-free coloring of $M$.
\end{restatable}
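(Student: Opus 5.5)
To prove \Cref{lem:rainbowlaminar}, we show directly that every rainbow set is independent. Fix a coloring $c$ with $|c(L)| \leq b(L)$ for all $L \in \mathcal{L}$, where $c(L) = \{c(e) \mid e \in L\}$ denotes the set of colors used on $L$. Let $X \subseteq E$ contain at most one element of each color. By the definition of a laminar matroid, it suffices to check that $|X \cap L| \leq b(L)$ for every $L \in \mathcal{L}$.

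Fix $L \in \mathcal{L}$. Since the elements of $X$ have pairwise distinct colors, the restriction of $c$ to $X \cap L$ is injective, and its image is contained in $c(L)$. Hence
\[
|X \cap L| = |c(X \cap L)| \leq |c(L)| \leq b(L),
\]
using the hypothesis in the last inequality. As $L$ was arbitrary, $X \in \I$. Therefore every subset with at most one element of each color is independent, which is exactly the definition of a rainbow-circuit-free coloring.

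There is no real obstacle in this argument. Laminarity is not used here; it only matters later, when such a coloring with small growth-rate sums per class has to be constructed. The only point to take care with is interpreting $c(L)$ as the image set of $L$ under $c$, so that injectivity on rainbow sets gives the counting bound.
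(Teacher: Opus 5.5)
Your proof is correct and is essentially the paper's pigeonhole argument. The paper phrases it contrapositively via circuits: a circuit $C$ must violate some constraint $|C\cap L|\le b(L)$, and since $L$ carries at most $b(L)$ colors, $C$ repeats a color. You instead verify the paper's definition of rainbow-circuit-free directly, via $|X\cap L| = |c(X\cap L)| \le |c(L)| \le b(L)$, which avoids the standard step that a set is independent iff it contains no circuit.
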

\begin{proof}
    Let $C$ be any circuit in $M$. Then there exists $L \in
    \mathcal{L}$ such that $|C \cap L| > b(L)$. Since at most $b(L)$ distinct
    colors appear on the elements of $L$, $C$ contains two
    elements with the same color. 
\end{proof}

As in the last subsection, we show that we can construct a rainbow-circuit-free coloring of the ground set such that each color class has growth rate at most $2$. For this, we need the following definition. To \emph{set-fuse} of two elements $e,f \in E$ is to replace $e$ and $g$ by a new bamboo $e \hatcirc f$ with $g(e \hatcirc f) = g(e)+g(f)$. 

Our algorithm is given by \Cref{alg:laminar}. 
The high-level idea is as follows. 
If every set $L \in \mathcal{L}$ satisfies $|L| \leq b(L)$, then we may assign distinct colors to all elements $e \in E$. 
By \Cref{lem:rainbowlaminar}, this coloring is rainbow-circuit-free.    
Otherwise, the algorithm repeatedly set-fuses elements so as to reduce the size of sets in $\mathcal{L}$ until every set $L \in \mathcal{L}$ has size at most $b(L)$. 
After performing all set-fusion operations, we assign distinct colors to the resulting elements. 
To avoid ambiguity, we denote by $S$ the set of elements obtained through set-fusion.

    \begin{algorithm}
    \caption{Rainbow-circuit-free Coloring of a Laminar Matroid}
    \label{alg:laminar}
     \hspace*{\algorithmicindent} \textbf{Input:} A CBGT instance $(E,\I,g)$  where $(E,\I)$ is the laminar matroid. \\
     \hspace*{\algorithmicindent} \textbf{Output:} A rainbow-circuit-free coloring $c:E\to \mathbb{N}$.
    \begin{algorithmic}[1]
        \State $S\gets E$.
        \While{$\mathcal{L} \neq \emptyset$}
            \State Select a set $L \in \mathcal{L}$ that is inclusion-wise minimal.
            \If{$|L| \leq b(L)$}
                \State Remove $L$ from $\mathcal{L}$.
            \Else
            \State Choose distinct elements $s_1,s_2 \in L$ minimizing $g(s_1)+g(s_2)$.
            \State Replace $s_1$ and $s_2$ by $s_1 \hatcirc s_2$ in every set $X \in \mathcal{L} \cup \{S\}$ such that $\{s_1,s_2\} \subseteq X$.
            \EndIf
        \EndWhile
        \For{$s \in S$}
            \State Assign a new color to $s$.
            \If{$s = s_1 \hatcirc s_2$}
                \State Iteratively set-unfuse $s$ and assign the same color to all elements of $E$ obtained.
            \EndIf
        \EndFor
    \end{algorithmic}
    \end{algorithm}

\begin{restatable}[$\star$]{thm}{rainbowfreelaminar}
    \label{thm:rainbowfreelaminar} 
    Let $(E,\I,g)$ be an CBGT instance where $(E,\I)$ is a laminar matroid. 
    Then, there exists a rainbow-circuit-free coloring $c:E\to \mathbb{N}$ of $M$ such that $\sum_{e \in E_i} g(e) \leq 2$ holds for every color class $E_i$ of $c$. Furthermore, $c$ can be obtained in polynomial time.
\end{restatable}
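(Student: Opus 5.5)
The plan is to analyze \Cref{alg:laminar} directly. Rainbow-circuit-freeness and the growth bound will be handled separately, and both will rest on one invariant maintained throughout the while-loop.

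\textbf{Invariant.} At every moment, each current (possibly fused) element $s \in S$ has $g(s) \le 2$. Moreover, each $L$ already removed from $\mathcal{L}$ satisfies $|L| \le b(L)$, where $|L|$ counts fused elements.

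\textbf{Rainbow-circuit-freeness.} Fusions only merge elements that lie in a common set. By laminarity and the choice of an inclusion-wise minimal $L$, every later set either contains both parts of a fused element or neither. Hence after the loop each original $L \in \mathcal{L}$ is represented by at most $b(L)$ fused elements, i.e.\ $|c(L)| \le b(L)$. \Cref{lem:rainbowlaminar} then shows that $c$ is rainbow-circuit-free.

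To justify ``contains both or neither'' I will check that the replacement on line 8 keeps $\mathcal{L}$ laminar. I also need that a set removed earlier, which is a subset of the current minimal $L$ or disjoint from it, never shrinks its count above $b$. Fusions inside a superset can only decrease its number of elements, so this holds.

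\textbf{Growth bound.} Fix the moment when $L$ is minimal and $|L| > b(L)$, and let $k = |L| \ge b(L)+1 \ge 2$. Every original element of $E$ inside $L$ is unchanged except through fusions, so the total growth is preserved: $\sum_{s\in L} g(s) = \sum_{e \in L \cap E} g(e) \le r(L) \le b(L) \le k-1$ by \eqref{eq:growth-and-rank}. Here $r(L) \le b(L)$ holds directly from the laminar constraint on $L$.

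Now choose $s_1, s_2$ as the two smallest elements. Averaging over the pair gives
\[
g(s_1)+g(s_2) \le \frac{2}{k}\sum_{s\in L} g(s) \le \frac{2(k-1)}{k} < 2 ,
\]
so the invariant is preserved. Since the final color classes are exactly the fused elements, each color class has $\sum_{e\in E_i} g(e) \le 2$.

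\textbf{Main obstacle.} The delicate step is justifying that the current $L$ still has total growth at most $b(L)$ even though the bound is stated for the original sets. Fusions change elements but not sums, and fused elements never straddle the boundary of $L$ thanks to laminarity and minimality. I will state this carefully as a lemma on the structure of fused elements: each fused element corresponds to a set of original elements contained in some $L' \in \mathcal{L}$ and disjoint from all sets not containing $L'$.

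\textbf{Polynomial time.} Each fusion reduces $|S|$ by one, and each removal reduces $|\mathcal{L}|$. The loop therefore runs at most $|E| + |\mathcal{L}|$ times, with polynomial work per iteration.
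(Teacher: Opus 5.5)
Your proposal is correct and follows essentially the same route as the paper: analyze \Cref{alg:laminar}, obtain $|c(L)|\le b(L)$ for every $L\in\mathcal{L}$ so that \Cref{lem:rainbowlaminar} applies, and bound each fusion using $\sum_{s\in L} g(s)\le b(L)\le |L|-1$ with $s_1,s_2$ the two lightest elements of $L$. Your averaging bound $g(s_1)+g(s_2)\le 2(k-1)/k$ just replaces the paper's separate bounds $g(s_1)<1$ and $g(s_2)\le 1$. One wording fix: the structural lemma as you phrase it fails for sets removed earlier, since a fusion in $L'$ can merge an element of an already removed $L''\subsetneq L'$ with an element outside $L''$. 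What you actually use, namely that when $L$ is selected every current element lies inside $L$ or is disjoint from it, does hold, because every earlier fusion took place in a subset of $L$ or in a set disjoint from $L$.
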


\Cref{thm:rainbowfreelaminar} immediately implies a simple schedule of height at most $4$. 

\begin{restatable}[$\star$]{thm}{thmLaminarFUN}
\label{thm:laminar-FUN}
    For any CBGT instance $(E,\I,g)$, where $(E,\I)$ is the laminar matroid, there is a schedule $\pi$ with $h(\pi)< 4$. Furthermore, $\pi$ can be implemented in polynomial time.
\end{restatable}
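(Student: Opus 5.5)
Given \Cref{thm:rainbowfreelaminar}, the plan is to mirror \Cref{thm:graphic-FUN}. First, I would compute in polynomial time (via \Cref{alg:laminar}) a rainbow-circuit-free coloring $c\colon E\to\N$ in which every color class $E_i = c^{-1}(i)$ satisfies $\sum_{e\in E_i} g(e)\le 2$. The number of colors is at most $|E|$, since each color corresponds to one element of the final set $S$ and set-fusions only decrease $|S|$.

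Next, I would view each color class as a separate vanilla BGT instance $(E_i, g|_{E_i})$. Its total growth is at most $2$, so \Cref{cor:bgt4} applies: running \Cref{alg:fun} (with $k=1$) on $E_i$ gives a schedule $\pi_i$, cutting exactly one element of $E_i$ per day, with $h(\pi_i)<4$. If some $g(e)=0$, I would drop $e$ beforehand, as in the proof of \Cref{lem:kuniform}. Then I would define $\pi(t)=\bigcup_i \pi_i(t)$, running all color classes in parallel.

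It remains to check three things.
\begin{itemize}
\item \emph{Validity.} $\pi(t)$ contains at most one element of each color. Because $c$ is rainbow-circuit-free, $\pi(t)\in\I$.
\item \emph{Height.} Each bamboo $e\in E_i$ is cut in $\pi$ exactly when it is cut in $\pi_i$. So $\gamma^\pi(e)=\gamma^{\pi_i}(e)$, and hence $h^\pi(e)=h^{\pi_i}(e)<4$. Taking the maximum over the finitely many $e$ gives $h(\pi)<4$.
\item \emph{Running time.} Preprocessing consists of the coloring plus building one Fuse--Unfuse tree per class, each in $O(|E|\log|E|)$ time. Each subsequent day costs a root-to-leaf walk per tree, which is linear in $|E|$ overall. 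Both are polynomial in the input representation ($\mathcal{L}$ and $b$).
\end{itemize}

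The only genuinely nontrivial ingredient is the existence of the coloring, i.e.\ \Cref{thm:rainbowfreelaminar}, which I am allowed to assume. So I expect no real obstacle. The one point to be careful about is that \Cref{cor:bgt4} is stated only under the hypothesis $\sum g\le 2$, without requiring each individual $g(e)\le 1$. That is exactly what we have here, since each individual $g(e)\le b$-constrained rank $\le 1$ by \eqref{eq:growth-and-rank} anyway.
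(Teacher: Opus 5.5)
Your proposal is correct and matches the paper's proof. The paper likewise takes the coloring from \Cref{thm:rainbowfreelaminar}, runs a Fuse--Unfuse schedule on each color class in parallel, cuts the union each day, and gets validity from rainbow-circuit-freeness and height $<4$ from \Cref{cor:bgt4}. Your additional details (dropping zero-growth elements, at most $|E|$ colors, and $\gamma^\pi(e)=\gamma^{\pi_i}(e)$) only spell out points the paper leaves implicit.
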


\section{General Set Systems}
\label{sec:general}

In this section, we give an efficient implementation of a height-$O(\log n)$ schedule for any set system, even those that are not independence systems. This improves upon a result from \cite{cicerone2019FairHitting}, which gives a randomized algorithm achieving the same guarantee (but only with high probability) and with running time polynomially dependent on $|\I|$\footnote{Recall that an $O(\log n)$-approximation algorithm, which is the focus of \cite{cicerone2019FairHitting}, can be obtained from our algorithm by scaling $g$ appropriately as described in~\Cref{sec:approxfact}.}. Our algorithm assumes that $\I$ is given implicitly via access to a maximum-weight independent-set oracle, and makes at most $2|E|$ calls to this oracle. We remind the reader that a maximum-weight independent-set oracle takes weights $w_e \geq 0$ for $e \in E$ and returns a set $I \in \I$ with maximum $\sum_{e \in I} w_e$.

Throughout this section, we use $n=|E|$. 

\subsection{Upper Bound}
\label{ss:log-upper-bound}
In this subsection, we show that a height-$O(\log n)$ schedule exists for any instance. We give a proof here (even though it was already shown in \cite{cicerone2019FairHitting}) as warm-up for our efficient algorithm. 
We begin by introducing some notation.
Let $c> 2$. 
We partition the elements into \emph{slow} and \emph{fast} sets according to the threshold $\tau_n = c\frac{\ln n}{n}$. Formally:
\begin{align*}
    E_{\mathrm{slow}} & = \{ e\in E \mid g(e) \leq \tau_n\}, \\
    E_{\mathrm{fast}} & = \{ e \in E \mid g(e) > \tau_n \}.
\end{align*}

The main idea is to devise two separate schedules, serving fast and slow elements separately.
For the fast elements, we use the probabilistic method to show that a schedule generated by the coefficients of the convex combination suffices.

\begin{lemma}
    \label{lem:fast-schedule}
    There exists an infinite valid schedule $\pi_{\mathrm{fast}}$ such that the height of every fast element is bounded by $2c \ln n$ for all $t \ge 1$. 
    That is,
    \[ \max\limits_{t \ge 1, e \in E_{\mathrm{fast}}}h^{\pi_{\mathrm{fast}}}_t(e) \leq 2c \ln n .\]
\end{lemma}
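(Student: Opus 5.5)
We will use the probabilistic method with the convex-combination coefficients as sampling probabilities. Fix coefficients $\lambda(I)\geq 0$ with $\sum_{I\in\I}\lambda(I)=1$ and $g(e)=\sum_{I\in\I: e\in I}\lambda(I)$ for all $e\in E$. Since the sampling distribution is over independent sets, any sampled sequence is a valid schedule. We then bound, for each fast element, the probability of a long gap, take a union bound over elements and start times within a period, and finally periodize the resulting finite schedule.

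First, consider a random finite sequence $\pi(1),\dots,\pi(N)$ whose entries are i.i.d.\ draws from $\I$ with $\Pr[\pi(t)=I]=\lambda(I)$. For a fixed fast element $e$ and a fixed window of $L_e=\lceil 2c\ln n/g(e)\rceil$ consecutive days, the probability that $e$ is never cut is
\[
(1-g(e))^{L_e}\leq e^{-g(e)L_e}\leq n^{-2c}.
\]
Choose $N=n$, a period of length $n$, and look at its cyclic repetition $\pi^\infty$. The gaps of $\pi^\infty$ are determined by cyclic windows of the period. There are at most $N$ cyclic windows per element, so a union bound over at most $n$ fast elements and $N$ windows gives failure probability at most $nN\cdot n^{-2c}=n^{2-2c}<1$, since $c>2$.

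There is a subtlety: if $L_e>N$, a cyclic window wraps around and its days are no longer independent. This is harmless. Since $g(e)>\tau_n=c\ln n/n$, we have $L_e\leq 2n+1$. Hence I would instead take the period $N=2n+2$ or larger (say $N=4n$), so that every relevant window of length $L_e\leq N$ consists of distinct days, which are independent. The union bound then gives failure probability at most $4n^{2-2c}<1$ for $n\geq 2$; the case $n=1$ is trivial. So some realization of the period has every cyclic window of length $L_e$ containing a cut of $e$, for every fast $e$.

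Fixing that realization and repeating it forever gives an infinite valid schedule $\pi_{\mathrm{fast}}$. In it, each fast $e$ has recurrence time $\gamma(e)<L_e$. Therefore its height is at most
\[
g(e)\,(L_e-1)\leq g(e)\cdot\frac{2c\ln n}{g(e)}=2c\ln n.
\]
The main obstacle is exactly this wrap-around and independence bookkeeping, namely choosing the period long enough relative to $L_e$, which is where the threshold $\tau_n$ enters. The constants are otherwise routine.
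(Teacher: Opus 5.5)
\textbf{Overall.} Your proposal uses the paper's probabilistic-method argument: i.i.d.\ draws from $\lambda$, a union bound, and periodic repetition of a good block. The periodization bookkeeping differs, and there is one off-by-one in the last step, which is easy to fix.

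\textbf{Comparison with the paper.} The paper shows that a random block of length $n$ keeps every height at most $H=c\ln n$ when started from zero. It then uses $g(e)>\tau_n$ to ensure that each copy of the block contains a cut of every fast $e$, so a gap straddling two copies has length at most $2H/g(e)$. You instead build the factor $2$ into the window length $L_e\approx 2c\ln n/g(e)$. You use $g(e)>\tau_n$ to keep $L_e$ below the period length, so every window of $\pi^\infty$ is a cyclic window of distinct, hence independent, days. This controls all gaps of $\pi^\infty$ directly with no boundary argument. Your union bound ($4n^{2-2c}$) also has more slack than the paper's ($n^{2-c}$).

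\textbf{The final step.} The statement bounds $h^{\pi}_t(e)$, the height after growth and \emph{before} the cut. If every window of $L_e$ consecutive days contains a cut, two consecutive cuts can still be exactly $L_e$ days apart. On the day of the second cut the height is then $L_e\,g(e)=\lceil 2c\ln n/g(e)\rceil\, g(e)$. So your argument only yields this bound, which exceeds $2c\ln n$ (by less than $g(e)$) whenever $2c\ln n/g(e)\notin\N$.

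Your inference ``$\gamma(e)<L_e$, hence height at most $g(e)(L_e-1)$'' does match the paper's displayed identity $h^\pi(e)=\gamma^\pi(e)\,g(e)$. However, with $\gamma^\pi$ defined as the longest cut-free run, that identity is itself off by one relative to the definition of $h^\pi_t$.

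Taking $L_e=\lfloor 2c\ln n/g(e)\rfloor$ repairs this. Then $L_e g(e)\le 2c\ln n$, $L_e\le 2n-1$, and $g(e)L_e>2c\ln n-1$. Each window therefore fails with probability at most $\exp(1)\,n^{-2c}$. With period $2n$ the union bound is below $2\exp(1)\,n^{2-2c}<1$ for $n\ge 3$, and $n=2$ is a direct check.

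\textbf{The case $n=1$.} Under the same before-cut convention, your remark that $n=1$ is trivial is not right. There $\tau_1=0$, so every bamboo with $g(e)>0$ is fast, and the bound $2c\ln 1=0$ already fails on day $1$. The lemma, like the paper's union bound (which gives $n^{2-c}=1$ there), implicitly needs $n\ge 2$.
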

\begin{proof}
    Let $\lambda \colon \cI \to [0,1]$ be such that $\sum_{I\in\cI} \lambda(I)=1$ and $g(e) \leq \sum_{I\in\cI \colon e \in I} \lambda(I).$
    
    First, we prove the existence of a finite schedule $\sigma$ of length $n$ where the maximum height is bounded by $    
    H = c \ln n$.
    We select a sequence of $n$ independent sets $\sigma = (I_1, \dots, I_n)$ randomly, where each $I_t$ is chosen independently from $\cI$ according to the probability distribution $\lambda$.

    Consider a single element $e \in E$ and a time step $t \in [n]$.
    The height $h^\sigma_t(e)$ exceeds $H$ only if $e$ was not cut in the preceding $H/g(e)$ steps. 
    Thus,
    \begin{align*}
        \PP_\sigma[h^\sigma_t(e) > H] & \leq (1 - g(e))^{H/g(e)} \\
        & \leq e^{-H} = e^{-c \ln n} = n^{-c}.
    \end{align*}
    By applying a union bound over all elements $e \in E$ and all time steps $t \in [n]$, the probability that any element exceeds height $H$ at any timestep is
    \[ \PP_\sigma\left[\max_{e \in E, t\in [n]} h^S_t(e) > H\right] \leq \sum_{e \in E} \sum_{t=1}^n \PP_\sigma[h^S_t(e) > H] \leq n^2 \cdot n^{-c} = n^{2-c}. \]
    Since $c > 2$, this probability is strictly less than 1. 
    Therefore, there exists a sequence $\sigma = (I_1, \dots, I_n)$ such that the height of any element within the schedule is at most $H$.

    We construct the infinite schedule $\pi_{\mathrm{fast}}$ by concatenating infinite copies of $\sigma$. 
    Consider a fast element $e \in E_{\mathrm{fast}}$. 
    By definition, $g(e) > \frac{c \ln n}{n} = H/n$, which implies $H/g(e) < n$. 
    Since the height of $e$ inside $\sigma$ never exceeds $H$, the maximum gap between consecutive cuts of $e$ in $\sigma$ is at most $H/g(e) < n$. 
    This guarantees that $e$ is cut at least once in every copy of $\sigma$.

    In the concatenated schedule $\pi_{\mathrm{fast}}$, the maximum gap between two cuts is bounded by the gap across the boundary of two copies of $\sigma$, which is at most $2 \cdot \frac{H}{g(e)}$. 
    Consequently, the maximum height of a fast element is bounded by
    \[ \max_{t\in \NN, e\in E_{\mathrm{fast}}} h^{\pi_{\mathrm{fast}}}_t(e) \le g(e) \cdot \left( \frac{2H}{g(e)} \right) = 2H = 2c \ln n. \qedhere\]
\end{proof}

For the slow elements, we serve them using a Round-Robin strategy. 
Since their growth rate is low, we only need to service them every $O(n)$ timesteps to keep the height below $O(\log n)$. 
We obtain the main result by interleaving the two schedules.

\begin{thm}\label{thm:exub}
    For every $c>2$, there exists a schedule $\pi$ such that $h(\pi)\leq 3c \ln n$.
\end{thm}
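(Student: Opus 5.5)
The plan is to interleave the fast schedule of \Cref{lem:fast-schedule} with a round-robin schedule for the slow elements, alternating between the two on odd and even days. Concretely, let $\pi_{\mathrm{fast}}$ be the schedule from \Cref{lem:fast-schedule}. Let $\pi_{\mathrm{slow}}$ cycle through the slow elements one at a time, cutting $\{e\}$ for a single $e \in E_{\mathrm{slow}}$ each day in a fixed order of period $|E_{\mathrm{slow}}| \le n$. Then define $\pi(2t-1) = \pi_{\mathrm{fast}}(t)$ and $\pi(2t) = \pi_{\mathrm{slow}}(t)$.

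The singletons $\{e\}$ must be valid. If $\I$ is an independence system, downward-closedness gives $\{e\} \in \I$ whenever $e$ lies in some set of positive $\lambda$-weight, which holds since $g(e) > 0$ (elements with $g(e)=0$ are dropped). If $\I$ is not downward-closed, as allowed in this section, I would instead cut any $I \in \I$ containing $e$. Such an $I$ exists because $g(e) \le \sum_{I \ni e} \lambda(I)$ is positive. Extra cuts never increase heights, so this replacement is harmless.

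For the height bounds, interleaving at most doubles recurrence times within each subschedule.
\begin{itemize}
\item A fast element $e$ has recurrence time at most $2\cdot 2H/g(e)$ with $H = c\ln n$, by the gap bound from the proof of \Cref{lem:fast-schedule}. This gives height at most $4c\ln n$.
\item A slow element is cut at least once every $2n$ days, so its height is at most $2n\tau_n = 2c\ln n$.
\end{itemize}

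The main obstacle is matching the stated constant $3c\ln n$ rather than $4c\ln n$. Plain alternation doubles the fast bound $2c\ln n$. To fix this, I would interleave more cleverly: dedicate only one day in every $m$ to the slow round-robin, with $m$ chosen so that slow elements stay below $3c\ln n$. For example, one slow day per block of $k$ fast days gives a slow recurrence of at most $n(k+1)$, hence slow height at most $(k+1)c\ln n$. Fast recurrence times are then inflated by a factor of at most $(k+1)/k$, giving fast height at most $2c\ln n\cdot(k+1)/k$.

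Taking $k=2$ gives slow height at most $3c\ln n$ and fast height at most $3c\ln n$, which matches the theorem. The remaining step is to check the off-by-one details of this interleaving, in particular the boundary gap across copies of $\sigma$ as computed in \Cref{lem:fast-schedule}.
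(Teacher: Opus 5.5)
Your final scheme, one round-robin slow day after every two fast days ($k=2$), is exactly the paper's proof: the paper runs $\pi_{\mathrm{fast}}$ on days $t \not\equiv 0 \pmod 3$ and the round-robin on days $t \equiv 0 \pmod 3$, which bounds fast heights by $\frac{3}{2}\cdot 2c\ln n$ and slow heights by $3n\tau_n = 3c\ln n$. The off-by-one you flag is harmless: every gap in $\pi_{\mathrm{fast}}$ is at most the even number $2\lfloor c\ln n/g(e)\rfloor$, and a gap of $\gamma$ fast days stretches to at most $\gamma+\lceil\gamma/2\rceil$ real days, so the interleaved gaps are at most $3\lfloor c\ln n/g(e)\rfloor$.
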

\begin{proof}
    We employ a weighted interleaving strategy. 
    Let $\pi_{\mathrm{fast}}$ be the infinite schedule from \Cref{lem:fast-schedule}.
    Let $\pi_{\mathrm{slow}}$ be a Round-Robin schedule of the $n$ elements (i.e., it cuts $I_1,\dots, I_n$ cyclically such that $e_1 \in I_1,\dots , e_n \in I_n$).
    We construct a combined schedule $\pi$ by using the fast schedule $2/3$ of the time and the slow schedule the remaining third.
    Formally, for time steps $t \in \NN$:
    \begin{itemize}
        \item If $t \not= 0 \pmod 3$, we execute the next move from $\pi_{\mathrm{fast}}$.
        \item If $t = 0 \pmod 3$, we execute the next move from $\pi_{\mathrm{slow}}$.
    \end{itemize}

    In the schedule $\pi_{\mathrm{fast}}$, the height is bounded by $2c\ln n$.
    Thus, the height is bounded by:
    \[  \frac{3}{2} H=  3c \ln n. \]
    In $\pi$, the schedule $\pi_{\mathrm{slow}}$ is executed every third step. 
    Since $\pi_{\mathrm{slow}}$ has a period of $n$, every element is guaranteed to be cut at least once every $3n$ steps. 
    For any $e \in E_{\mathrm{slow}}$, the growth rate is $g(e) \le \tau_n$. 
    Thus, the maximum height for any $e \in E_{\mathrm{slow}}$ is at most
    \[ 3n \cdot \tau_n = 3n \cdot \frac{c \ln n}{n} = 3c \ln n. \]
    We conclude that the schedule $\pi$ has a height of at most $3c\ln n$.
\end{proof}

\subsection{Lower Bound}
\label{ss:log-lower-bound}

We now show that the previous schedule is tight (up to constant factors).
The main idea is that every instance where the growth rate is bounded from below and that has a schedule with low height must have a small covering by independent sets. 
Then, we try to construct instances with high growth rates that are hard to cover.

\begin{thm}\label{thm:lower}
    For every $k \in \NN$ there is a CBGT instance with $n=\binom{2k}{k}$ elements for which every infinite valid schedule has height at least $\frac{1}{4}\log_2 n$. 
\end{thm}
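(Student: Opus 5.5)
We use the ground set $E=\binom{[2k]}{k}$, i.e., the $k$-subsets of $[2k]$, so $n=\binom{2k}{k}$, and set $g(e)=1/2$ for every $e\in E$. For each $i\in[2k]$ define $I_i=\{e\in E \mid i\in e\}$ and $\bar I_i=\{e\in E\mid i\notin e\}$. Let $\I$ be the downward closure of $\{I_i,\bar I_i \mid i\in[2k]\}$. The normalization holds because $g=\tfrac12\ind_{I_1}+\tfrac12\ind_{\bar I_1}$, which is a convex combination of independent sets. It therefore remains to show that any valid schedule has height at least $\frac14\log_2 n$.

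\textbf{Covering claim.} Consider any family of fewer than $k$ maximal independent sets, say $I_{i_1},\dots,I_{i_a}$ and $\bar I_{j_1},\dots,\bar I_{j_b}$ with $a+b<k$. We claim some element of $E$ is covered by none of them. Such an element is a $k$-set $e$ that avoids all $i_\ell$ and contains all $j_\ell$. If some index lies in both lists, the family also contains the pair $I_i,\bar I_i$, which covers everything. That case has to be handled, so we strengthen the requirement: we show that any family of at most $k-1$ sets fails to cover $E$ after an adjustment. The naive case where $I_i$ and $\bar I_i$ both appear breaks the claim with only two sets.

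The fix is to use a different set system: $\I$ consists of the downward closures of $I_i$ only, and we let $g(e)=1/2$ for all $e$. Normalization then requires $g\in\conv$. Since every $e$ lies in exactly $k$ of the $2k$ sets $I_i$, taking the uniform distribution over them gives $g=\frac{1}{2k}\sum_i\ind_{I_i}=\frac{k}{2k}=\frac12$. So the uniform combination works.

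Now take any $k$ of the sets, say $I_{i_1},\dots,I_{i_k}$ (repetitions allowed). Their index set $S$ has $|S|\le k$, so there is a $k$-set $e\subseteq[2k]\setminus S$, and $e$ is uncovered. Any independent set is a subset of some $I_i$, so covering by arbitrary independent sets reduces to this case. Hence in the first $k$ time steps of any valid schedule, some bamboo $e$ is never cut. Its height at day $k$ is therefore $k\cdot\frac12=k/2$.

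Finally, $n=\binom{2k}{k}\le 4^k$, so $\log_2 n\le 2k$ and $k/2\ge\frac14\log_2 n$, which gives the bound in \Cref{thm:lower}. The only delicate step is choosing the set system so that the normalization holds while small families still fail to cover $E$. Complementary pairs fail because two sets already cover everything. Using the sets $I_i$ alone, with each element lying in exactly half of them, solves both requirements at once.
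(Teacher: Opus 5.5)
Your final construction and argument are correct and are exactly the paper's proof: $k$-subsets of $[2k]$, $\I$ the downward closure of the stars $I_i$, uniform weights $1/(2k)$ giving $g\equiv 1/2$, an uncovered $k$-set avoiding the at most $k$ indices used in the first $k$ steps, and $\binom{2k}{k}\le 4^k$. The complementary-pairs false start should simply be removed from a final write-up.
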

\begin{proof}
    Let the set of elements $E$ be the collection of all subsets of $[2k]$ with size $k$; i.e., 
    \[
    E = \{S \subseteq [2k] \mid |S| = k \}.
    \]
    The collection of feasible cuts $\mathcal{I}$ consists of sets containing a fixed element from the universe $[2k]$. 
    Specifically, for each $i \in [2k]$, we define:
    \[
    I_i = \{S \in E \mid i \in S \}.
    \]
    We let $\mathcal{I}$ be the downward closure of $\{I_1, I_2, \dots, I_{2k}\}$.

    We consider the growth vector given by the coefficients $\lambda(I_i) = \frac{1}{2k}$ for all $i \in [2k]$. 
    These coefficients sum to $\sum_{i=1}^{2k} \lambda(I_i) = 1$, and for any element $S \in E$:
    \[
    g(S) = \sum_{i \in [2k] \colon S \in I_i} \lambda(I_i) = \sum_{i \in S} \frac{1}{2k} = \frac{|S|}{2k} = \frac{k}{2k} = \frac{1}{2}.
    \]

    We now show that for any infinite valid schedule, there exists an element $S^* \in E$ that is not cut during the first $k$ time steps.
    Let the schedule select the cuts $I_{i_1}, I_{i_2}, \dots, I_{i_k}$ in the first $k$ steps. 
    Let $J = \{i_1, \dots, i_k\} \subseteq [2k]$ and note that $|J| \leq k$.
    Therefore, there exists a subset $S^* \subseteq [2k] \setminus J$ with $|S^*| = k$.
    Furthermore, this set has not been cut as $J\cap S^* =\emptyset$.
    
    The height of $S^*$ at time $k$ is:
    \[
    h_k(S^*) = k \cdot g(S^*) = \frac{k}{2}.
    \]
    Using the bound $n = \binom{2k}{k} < 2^{2k}$, we have that $k>\frac{1}{2}\log_2 n$.
    Substituting this into the previous equation we obtain
    \[
    h_k(S^*) > \frac{1}{2} \left( \frac{1}{2} \log_2 n \right) = \frac{1}{4} \log_2 n. \qedhere
    \]
\end{proof}

A similar, albeit more involved, construction via algebraic methods can push the constant in the lower bound.

\jamisonlb

\subsection{Efficient Implementation}
\label{ss:logboundimplement}

We now show how to efficiently implement the schedule from~\Cref{ss:log-upper-bound}. Note that we can find some $I \in \I$ with $e \in I$ by calling the maximum-weight independent-set oracle with $w_e = 1$ and $w_{e'} = 0$ for all $e' \neq e$. (If $\I$ is downward-closed, we can just take $I = \{e\}$.) Thus, computing $\pi_{\mathrm{slow}}$ requires only one oracle call per $e \in E_{\mathrm{slow}}$.

\begin{lemma}
    \label{lem:slow-efficient}
    For any CBGT instance $(E,\cI,g)$, the schedule $\pi_{\mathrm{slow}}$ can be computed by making $n$ queries to a maximum-weight independent-set oracle for $\I$.
\end{lemma}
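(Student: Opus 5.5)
The plan is to read off $\pi_{\mathrm{slow}}$ from its definition in the proof of \Cref{thm:exub}: it is a Round-Robin schedule that cycles through $n$ sets $I_1,\dots,I_n \in \I$ with $e_j \in I_j$ for a fixed enumeration $e_1,\dots,e_n$ of $E$. Computing it therefore only requires producing, for each $j \in [n]$, one set $I_j \in \I$ containing $e_j$. After these are stored, the schedule is emitted cyclically, the set at slow step $s$ being $I_{((s-1) \bmod n)+1}$, with no further oracle calls.

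To produce $I_j$, query the maximum-weight independent-set oracle with weights $w_{e_j}=1$ and $w_{e'}=0$ for all $e' \neq e_j$; this is one query per element and $n$ queries in total. The returned $I$ has weight $1$ exactly when it contains $e_j$. Hence it remains to show that some set in $\I$ contains $e_j$, because then the maximum weight is $1$ and the returned set contains $e_j$.

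This is the only step needing an argument, and the key case is a general set system that is not downward-closed. I would use the normalization. There is $g' \in \conv(\{\ind_I \mid I \in \I\})$ with $g' \geq g$; for independence systems take $g'=g$, and in general use the discussion in \Cref{sec:approxfact}. Then every $e$ with $g(e)>0$ has $g'(e)>0$, so some $I$ with $\lambda(I)>0$ contains $e$. Elements with $g(e)=0$ never grow and can be discarded (or assigned $I_j=\emptyset$ in the downward-closed case), so they do not affect the height argument. In the downward-closed case one may even take $I_j=\{e_j\}$ directly.

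Finally, I would note that only slow elements actually need servicing. Restricting the rotation to $E_{\mathrm{slow}}$ uses at most $|E_{\mathrm{slow}}| \leq n$ queries, and the bound $3n\tau_n$ on slow heights in \Cref{thm:exub} still holds, since the period only shrinks. I expect no real obstacle beyond the existence check above.
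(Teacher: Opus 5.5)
Your proposal is correct and follows the paper's argument: one query per element with weights $w_{e_j}=1$ and $w_{e'}=0$ otherwise returns a set $I_j \ni e_j$, and the Round-Robin schedule is then emitted cyclically from these $n$ stored sets; when $\I$ is downward-closed, $I_j=\{e_j\}$ suffices. The paper leaves implicit that some set contains each relevant element, and you justify this correctly via a dominating $g'\in\conv(\{\ind_I \mid I\in\I\})$, together with discarding zero-growth elements.
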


We next describe how to implement an appropriate schedule for the fast elements.
Given a height vector $h \in \QQ_{\geq 0}^{E}$, we define the \emph{potential} of $h$ as $\Phi(h) = \sum_{e \in E_{\mathrm{fast}}} \exp (h(e))$.
Furthermore, if we are given a height vector $h\in \QQ_{\geq 0}^{E}$ and an independent set $I\in \cI$, we define \emph{the potential of $h$ after cutting $I$} as
\[
\Phi(h, I) = \sum_{e \in E_{\mathrm{fast}}\setminus I} \exp(h(e)+g(e)) +  \sum_{e \in E_{\mathrm{fast}}\cap I} \exp(g(e)). 
\]
Given a height vector $h\in \QQ_{\geq 0}^E$, one can find an $I \in \I$ minimizing $\Phi(h, I)$ using the maximum-weight independent-set oracle. Indeed, observe that for any $I' \subseteq E$ and any $e \notin I'$ we have \[
\Phi(h, I' \cup \{e\}) = \begin{cases}
    \Phi(h, I')+\exp(g(e))-\exp(g(e)+h(e)) & \text{if $e \in E_{\mathrm{fast}}$}, \\
    \Phi(h, I') & \text{otherwise}.
\end{cases} 
\]
Thus, the problem of minimizing $\Phi(h, I)$ over $I \in \I$ is a maximum-weight independent-set problem when we set $w_e=\exp(g(e)+h(e))-\exp(g(e)) \geq 0$ for $e \in E_{\mathrm{fast}}$ and $w_e = 0$ for $e \in E \setminus E_{\mathrm{fast}}$.

We show that the algorithm that greedily chooses a cut minimizing the potential at every time step (see \Cref{alg:potential}) is an appropriate schedule for the fast elements.

\begin{algorithm}
\caption{Greedy Potential Scheduling}
\label{alg:potential}
 \hspace*{\algorithmicindent} \textbf{Input:} A CBGT instance $(E,\cI,g)$. \\
 \hspace*{\algorithmicindent} \textbf{Output:} $\pi = (\pi_1,\pi_2,\dots, \pi_n)$ a valid schedule of length $n$.
\begin{algorithmic}[1]
    \State $h\gets (0,\dots,0)$
    \For{$t=1,\dots,n$}
    \State Choose $I\in\cI$ minimizing $\Phi(h,I)$.
    \State $\pi_t \gets I$
    \For{$e\in E$}
        \State If $e\notin I$, then $h(e) \gets h(e)+g(e)$. 
        Otherwise, $h(e)\gets g(e)$. 
    \EndFor
    \EndFor
\end{algorithmic}
\end{algorithm}

\begin{lemma}
    \label{lem:fast-potential}
    For $n\geq 3$, the schedule computed by \Cref{alg:potential} keeps the height of fast elements below $4\ln n$ during the first $n$ timesteps; i.e., 
    \[
    \max_{e \in E_{\mathrm{fast}}, t \in [n]} h_t(e) \leq 4 \ln n. 
    \]
\end{lemma}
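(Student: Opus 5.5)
The plan is to use the potential $\Phi$ as a pessimistic estimator. This derandomizes the argument of \Cref{lem:fast-schedule} by the method of conditional expectations, in the averaging form: at every step we compare the greedy choice with a random cut drawn from $\lambda$.

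Fix $\lambda$ with $\sum_I\lambda(I)=1$ and $g(e)\le\sum_{I\ni e}\lambda(I)$. Let $h$ be the height vector at the start of some step, i.e.\ after the previous cut. The key step is the one-step inequality
\[
\min_{I\in\I}\Phi(h,I)\;\le\;\sum_{I}\lambda(I)\,\Phi(h,I)\;=\;\sum_{e\in E_{\mathrm{fast}}}\Bigl(p_e\,e^{g(e)}+(1-p_e)\,e^{h(e)+g(e)}\Bigr),
\]
where $p_e=\sum_{I\ni e}\lambda(I)\ge g(e)$. Each summand is at most
\[
e^{g(e)}+(1-g(e))\,e^{h(e)+g(e)}\le e+e^{h(e)},
\]
using $g(e)\le 1$ and $(1-g)e^{g}\le 1$. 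Writing $\Phi_t$ for the potential after $t$ greedy steps, we get $\Phi_t\le\Phi_{t-1}+e\,n$. Since $\Phi_0\le n$, this gives $\Phi_t\le(1+e t)n\le 4n^2$ for all $t\le n$.

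For every fast $e$ we have $e^{h_t(e)}\le\Phi$, which yields $h_t(e)\le\ln(4n^2)=2\ln n+\ln 4$. This is at most $4\ln n$ when $n\ge 2$, and in particular for $n\ge3$. The only care needed is bookkeeping between $h_t$ (the height before the cut on day $t$) and the vector $h$ stored by \Cref{alg:potential} (the height after the cut). The height before the cut on day $t$ equals the stored height after step $t-1$ plus $g(e)$. This adds at most $1$, or equivalently we bound it directly by $\ln$ of the pre-cut potential, which is again at most $4n^2e$. We still need $\ln(4e)+2\ln n\le 4\ln n$, i.e.\ $n^2\ge 4e$, which holds for $n\ge 4$. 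For $n=3$ I would tighten the constants: use $\Phi_0=|E_{\mathrm{fast}}|$ and bound the additive increase by $\sum_e(e^{g(e)}-\dots)$ more carefully.

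The main obstacle I expect is the constant in the additive increment. A cleaner route may be to avoid the additive bound and use a multiplicative bound $\mathbb{E}[\text{term}]\le e^{h(e)}$ for large $h$. If the crude $e\cdot n$ additive bound leaves the $n=3$ case short, I would refine the per-element inequality to $p e^{g}+(1-p)e^{h+g}\le e^{h}+ (e^{g}-1)$ and sum it, which gives $\Phi_n\le n+n\cdot n(e-1)$ and hence $h\le\ln(n+n^2(e-1))+1\le 4\ln n$ for $n\ge3$. The fastness threshold is not actually needed in this argument beyond restricting the sum.
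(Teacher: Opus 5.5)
Your argument is correct. It uses the same pessimistic estimator as the paper: compare the greedy cut with a $\lambda$-random cut, and compute $p_e e^{g(e)}+(1-p_e)e^{h(e)+g(e)}$ per element. Where you differ is in how you close the recursion.

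\textbf{The paper's route.} It uses fastness essentially. From $(1-g)e^{g}\le 1-g^3\le 1-\tau_n^3=\rho_n$ it gets the contraction $\Phi(h_{t+1})\le en+\rho_n\Phi(h_t)$. This gives $\Phi(h_t)\le(1+e)n/\tau_n^3=(1+e)n^4/(c\ln n)^3$ for every $t$, which is where the $4\ln n$ comes from; the assumptions $n\ge 3$ and $c>2$ absorb the factor $(1+e)/(c\ln n)^3$.

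\textbf{Your route.} You only use $(1-g)e^{g}\le 1$, so $\Phi$ can grow additively by at most $en$ per step. Over the $n$ steps the lemma concerns, this gives $\Phi\le n+en^2$ and the sharper bound $2\ln n+\ln 4$, with no use of $g(e)>\tau_n$.

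\textbf{Trade-off.} The paper's bound is uniform in time, so the greedy could be run indefinitely without degradation. Yours degrades logarithmically with the horizon, but it is simpler and has a better constant. That is all that is needed here, since the length-$n$ schedule is afterwards repeated.

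\textbf{Bookkeeping.} \Cref{alg:potential} resets a cut element to $g(e)$, not to $0$. So the stored vector after step $t$ already includes the next growth and is exactly the $h_t$ of the lemma; the paper bounds $h_t(e)\le\ln\Phi(h_t)$ directly. Hence your first estimate already finishes the proof for all $n\ge 2$, and the extra $+1$ and the $n=3$ patch are unnecessary. The patch is nonetheless valid: for $p\ge g$ one has $pe^{g}+(1-p)e^{h+g}-e^{h}-(e^{g}-1)\le\bigl(1-(1-g)e^{g}\bigr)\bigl(1-e^{h}\bigr)\le 0$.
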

\begin{proof}
    Let $\mathbf{0}=h_0,h_1,\dots,h_n \in \QQ_{\geq 0}^E$ be the $n+1$ height vectors obtained during the execution of \Cref{alg:potential}.
    We define $\rho_n = 1 - \tau_n^3 < 1$.
    Let $\lambda \colon \cI \to [0,1]$ be such that $\sum_{I\in\cI} \lambda(I)=1$ and $g(e) \leq \sum_{I\in\cI \colon e \in I} \lambda(I).$
    
    Note that $\Phi(h_{t+1}) = \min_{I\in \cI} \Phi(h_t,I) \leq \EE_{I}(\Phi(h_t,I))$, where we consider the random choice of an independent set $I\in \cI$ according to the probability distribution $\lambda$.
    Thus,
    \begin{align*}
        \Phi(h_{t+1}) & \leq \EE_I(\Phi(h_t,I)) \\ 
        & = \sum_{e \in E_{\mathrm{fast}}} \exp(g(e))\PP_I[e \in I] + \sum_{e \in E_{\mathrm{fast}}} \exp(h_t(e)+g(e)) \PP_I[e \notin I] \\ 
        & \leq \sum_{e \in E_{\mathrm{fast}}} e + (1-g(e))\exp(g(e)) \exp(h_t(e)) . \numberthis \label{eq:potential-bound}
     \end{align*}
    Since $\exp(z) \leq 1+z+z^2$ for $z \in [0,1]$, we obtain
    \begin{equation}
        \label{eq:growth-bound}
        (1-g(e))\exp(g(e)) \leq (1-g(e))(1+g(e)+g(e)^2) = 1-g(e)^3 \leq \rho_n.
    \end{equation}
    Putting \eqref{eq:potential-bound} and \eqref{eq:growth-bound} together, we obtain
    \[
    \Phi(h_{t+1}) \leq en + \rho_n \Phi(h_{t}).
    \]
    Applying the previous inequality repeatedly, for every $t\in [n]$ we obtain that
    \[
    \Phi(h_{t}) \leq \Phi(h_0)\rho_n^n + en \sum_{i=1}^{n-1} \rho_n^i  \leq n + \frac{en}{1-\rho_n} \leq \frac{1-\rho_n+e}{1-\rho_n} n \leq \frac{(1+e)n}{\tau_n^3}.
    \]
    Using that $1+e \leq (c\ln n)^3$ for $n\geq 3$, we conclude that for every $e\in E_{\mathrm{fast}}$ and $t \in [n]$, it holds that
    \[
    h_t(e) \leq \ln (\Phi(h_t)) \leq \ln\bigg(\frac{(1+e)n^4}{(c\ln n)^3}\bigg) \leq 4 \ln n.
    \]
    This completes the proof.
\end{proof}

Interleaving the schedule of \Cref{lem:slow-efficient} with the one of \Cref{lem:fast-potential} we obtain the following.

\compub

\begin{proof}
    If $n\geq 3$ the interleaving of the schedules in \Cref{lem:slow-efficient,lem:fast-potential} gives the desired result. 
    If $n < 3$, the schedule $\pi_{\mathrm{slow}}$ of \Cref{lem:slow-efficient} has a guarantee of $2$.
\end{proof}

\section{Open Problems}
\label{sec:open}

We conclude with some open problems.

Let us call a CBGT instance $(E,\I,g)$ a matroid instance if $(E,\I)$ is a matroid.
We have shown that any matroid CBGT instance has a schedule $\pi$ with $h(\pi) < 2$. However, the implementation suggested by our proof uses time exponential in the number of bits required to represent $g$.
Ideally, we would like an algorithm which is given access to $\I$ through some reasonable oracle (e.g. an indepence oracle) and which implements a schedule $\pi$ with $h(\pi) < 2$ (or even $d(\pi) < 1$) in time polynomial in both $|E|$ and the number of bits required to represent $g$. 

\begin{restatable}{problem}{algorithm}
    Does there exists a polynomial-time implementation of a height-$2$ schedule for matroid CBGT instances when access to $\I$ is given through some natural oracle?
\end{restatable}

The notion of a \emph{$\ell$-system} was introduced by Jenkyns in the 1970s \cite{Jenkyns76}. An independence systems $(E,\I)$ is a $\ell$-system if, for any $S \subseteq E$, the ratio of the smallest to the largest maximal independent subset of $S$ is at most $\ell$. 
Examples of $\ell$-systems include the set systems obtained from the intersection of $\ell$ matroids \cite{Korte78}. 
Moreover, matroids are exactly the $1$-systems. 
As we have shown that the family of matroid CBGT instances admit height-$2$ schedules, it is natural to ask the following.

\begin{restatable}{problem}{ksystem}
    Is there a function $f \colon \NN\to\NN$ such that all $\ell$-system CBGT instances admit an $f(\ell)$-height schedule?
\end{restatable}

We note that the lower bound presented in \Cref{thm:lower} is a $\Theta(|E|)$-system.
Indeed, for the set system $(E,\cI)$ of \Cref{thm:lower}, consider the subset 
\[
S = \{ X \in E \mid 1 \in X \text{ and } 2\notin X \}\subseteq E
\]
and choose $X_1,X_2\in E$ such that $X_1\cap X_2 = \{2\}$. 
Note that $\{X_1,X_2\}$ and $\{X \in S \mid 1 \in X \}$ are the smallest and largest maximal independent sets respectively.  
Since their sizes are $2$ and $\binom{2k-2}{k-1} \geq \frac{1}{4} \binom{2k}{k} = \frac{1}{4}|E|$ respectively, we conclude that $(E,\cI)$ is a $\Theta(|E|)$-system.
This implies that if such an $f$ exists, we must have $f(\ell) = \Omega(\log \ell)$.

\section*{Acknowledgements}

We are grateful to Tobias Mömke, Victor Verdugo, and José Verschae for organizing the \emph{Santiago Summer Workshop on Combinatorial Optimization 2025}, where this research began. 
We would like to give special thanks to Sofía Errázuriz for valuable discussions at the beginning of this project.
Finally, we thank all participants of the workshop for inspiring discussions.

Mirabel Mendoza-Cadena was supported by Centro de Modelamiento Matemático (CMM) BASAL fund FB210005 for center of excellence from ANID-Chile.
Arturo Merino was supported by ANID FONDECYT Iniciación No. 11251528.
Kevin Schewior was supported by the Independent Research Fund Denmark, Natural Sciences, grant DFF-4283-00079B.

\bibliographystyle{plain}
\bibliography{bib}

\newpage
\appendix
\label{sec:appendix}

\section{Proofs for \texorpdfstring{\Cref{sec:prelim}}{Section~\ref{sec:prelim}}}

For the proof of \Cref{thm:optlb}, we use the following simple observation, which was originally made in the context of PS and since also used for BGT (see \cite[Thm 2.1]{HMRTV89} for the PS version).

\begin{restatable}{obs}{obsfinite}
\label{obsfinite}
    For any infinite schedule $\pi$ with finite $h(\pi)$, there exists a finite subsequence $\pi'$ of $\pi$ such that $h(\pi'^\infty) \leq h(\pi)$.
\end{restatable}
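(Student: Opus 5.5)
The idea is the standard pigeonhole argument on "states", as in the classical pinwheel observation. First I reduce to the case where every $g(e)>0$; elements with $g(e)=0$ never grow, so they do not affect heights. Since $h(\pi)$ is finite, every bamboo $e$ is cut at least once in every window of $\gamma^\pi(e)\le h(\pi)/g(e)$ consecutive days; in particular $\gamma^\pi(e)$ is a finite integer.

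Next I define the state of the schedule at day $t$ as the vector $s_t=(\tau_t(e))_{e\in E}$. Here $\tau_t(e)$ is the number of days since $e$ was last cut before day $t$. For days before the first cut of $e$, I count from time $0$. By the previous paragraph, each $\tau_t(e)$ lies in the finite set $\{0,1,\dots,\gamma^\pi(e)\}$, so there are only finitely many possible states. By pigeonhole, there exist days $t_1<t_2$ with $s_{t_1}=s_{t_2}$. I then take $\pi'=\pi(t_1),\pi(t_1+1),\dots,\pi(t_2-1)$, which is a contiguous finite subsequence of $\pi$.

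It remains to show that $h(\pi'^\infty)\le h(\pi)$, which I expect to be the only delicate point. Because $s_{t_1}=s_{t_2}$, the schedule $\pi'$ returns the "time-since-last-cut" vector to exactly the value it started from. Consider running $\pi'^\infty$ starting from state $s_{t_1}$. By induction over the copies of $\pi'$, every day of $\pi'^\infty$ then has the same state as the corresponding day in $\pi$ between $t_1$ and $t_2$. Hence every gap between consecutive cuts of any $e$ in $\pi'^\infty$ equals some gap already occurring in $\pi$, or is contained in one.

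The subtlety is the start of $\pi'^\infty$, which begins from height $0$ rather than from state $s_{t_1}$. Starting from zero only makes the counters smaller, so each $\tau$ is at most its value in the cyclic run. Also, every $e$ is cut within $\pi'$: otherwise $\tau(e)$ would increase by $t_2-t_1$ between $t_1$ and $t_2$, contradicting $s_{t_1}=s_{t_2}$. Therefore the recurrence time of each $e$ in $\pi'^\infty$ is at most $\gamma^\pi(e)$. Multiplying by $g(e)$ and taking the maximum over $e$ gives $h(\pi'^\infty)\le h(\pi)$.
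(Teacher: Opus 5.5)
Your proposal is correct and follows essentially the same route as the paper: pigeonhole on a finite state vector, then repeat the block between two equal states, then argue by a monotonicity induction that starting the repetition from the all-zero state cannot increase any height. Your time-since-last-cut counters are just the heights $h^\pi_t(e)$ divided by $g(e)$, which is why you must first discard bamboos with $g(e)=0$, whereas the paper's height-vector state handles them automatically; your remark that every $e$ is cut within $\pi'$ is true but not needed once you have the monotonicity argument.
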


\begin{proof}
    Call the vector $(h_t^\pi(e))_{e \in E}$ the \textit{state} at time $t$ for every $t \in \N$. Observe that, as $h(\pi)$ is finite and $h^\pi_t(e)$ is always a multiple of $g(e)$, there are finitely many possible states. Thus, some state must repeat. Let $t < t'$ be time steps such that $(h_t^\pi(e))_{e \in E} = (h_{t'}^\pi(e))_{e \in E}$. The schedule $\pi(1),\dots,\pi(t-1),(\pi(t),\dots,\pi(t'-1))^\infty$ has the same height as $\pi$ as it is identical up to time $t'-1$ and then cycles through a subset of already visited states. The final observation is that the schedule $\pi' = (\pi(t),\dots,\pi(t'-1))^\infty$ cannot be worse than $\pi$ as we would have $h(\pi') = h(\pi)$ if we started $\pi'$ from the state $(h_t^\pi(e))_{e \in E}$, and it is easy to show by induction on time that starting $\pi'$ from the state $(0)_{e \in E}$ cannot increase the height.
\end{proof}
\optlb*
\begin{proof}
    Suppose that $\Lambda(g) = 1$ and yet $h(\pi^\star) < 1$. We may assume that $\pi^\star = \pi^\infty$ for some finite schedule $\pi$ by \Cref{obsfinite}. For $I \in \I$, let \[
    \lambda(I) = \frac{|\{t \in [|\pi|] \mid \pi(t) = I\}|}{|\pi|}
    \] be the fraction of time steps in which $I$ is cut. 
    Then $\sum_{I \in \I} \lambda(I) = 1$. Since $h^\pi(e) = \gamma^\pi(e)g(e) < 1$, we must have $\gamma^\pi(e) < 1/g(e)$, implying that $e$ is cut in more than a $g(e)$-fraction of the time steps. That is, $\sum_{I \in \I:e \in I} \lambda(I) > g(e)$. But now $\lambda$ can be scaled by $\min_{e \in E} \frac{g(e)}{\sum_{I \in \I:e \in I} \lambda(I)} < 1$ while still satisfying the constraints of the LP defining $\Lambda(g)$ (\Cref{fig:LPprimal-left}), contradicting that $\Lambda(g) = 1$.
\end{proof}

\cordensity*

\begin{proof}
     As $P$ has density at most $1/2$, there exists a $\lambda \colon \I \rightarrow [0,1]$ such that $\sum_{I \in \I} \lambda(I) \leq 1/2$ and $\sum_{I \in \I: e \in I} \lambda(I) \geq 1/a(e)$ for all $e \in E$. 

    Let $B = (E,\I,g)$ be the CBGT instance where $g(e) = 2/a(e)$. 
    This way, $a(e) = \lfloor 2/g(e) \rfloor$. 
    Thus, by \Cref{prop:pinwheel}, it suffices to show that $B$ has an infinite valid schedule $\pi$ with $h(\pi) \leq 2$. Let $\lambda'(I) = 2\lambda(I)$ for all $I \in \I$. 
    Then $\sum_{I \in \I} \lambda'(I) = 2 \sum_{I \in \I} \lambda(I) \leq 1$, and so $g \in \conv(\{\ind_I \mid I \in \I\})$. 
    Now, by \Cref{thm:matroidtwo}, there exists a schedule $\pi$ for $B$ with $h(\pi) < 2$, which completes the proof.
\end{proof}

\corddensityup*

\begin{proof}
    Let $n$ be any integer. Pick $n' \leq n$ so that $n' = \binom{2k}{k}$ for some $k \in \N$. Note that $n' \geq n/4$.
    By \Cref{thm:lower}, there exists an instance $(E,\I,g)$ of CBGT with $|E| = n'$ which does not have any infinite valid schedule of height $(1/4)\log_2 n'$. Thus, by \Cref{prop:pinwheel} the instance $(E,\I,a)$ where $a(e) = \lfloor c/g(e)\rfloor$ for $c = (1/4)\log_2 n'$ for all $e \in E$ is not schedulable. 
    We now show that the density of $(E,\I,a)$ is $O(1/\log n)$, which will complete the proof. 

    For this, let $\lambda \colon E \rightarrow [0,1]$ be such that $\sum_{I \in \I} \lambda(I) = 1$ and $g(e) = \sum_{I \in \I : e \in I} \lambda(I)$. Let $\lambda'(I) = 2\lambda(I)/c$. 
    Then, for any $e\in E$ it holds that
    $$\sum_{I\in\I:e \in I} \lambda'(I) = \frac{2}{c}\cdot \sum_{I\in\I:e \in I} \lambda(I)=\frac{2}{c}\cdot g(e)\geq \frac{1}{a(e)},$$ where we used $a(e)=\lfloor c/g(e)\rfloor\geq c/(2 g(e))$ for the inequality. Also, we have that 
    \[
    \sum_{I\in\I}\lambda'(I)=\sum_{I\in\I}\frac2c\cdot\lambda(I)=\frac2c\cdot \sum_{I\in\I}\lambda(I)=\frac2c\in O\left(\frac{1}{\log n}\right),\] 
    which implies that the density of $(E,\I,a)$ is $O(1/\log n)$.
\end{proof}

\section{Proofs for \texorpdfstring{\Cref{sec:FUN-algorithms}}{Section~\ref{sec:FUN-algorithms}}}
\label{app:FUN}

\THMgraphic*
\begin{proof}
    We first introduce some notation.
    For any subgraph $H$ of $G$, we define a growth rate on the vertices of $H$ as  \[
        g_{H}(v) = \sum_{u: vu \in E(H)} g(vu).
    \]
    We claim that \Cref{alg:graphic} computes the desired coloring. 
  
    For a graphic matroid $r(X)$ is the size (number of edges) in a maximal forest in $G[X]$, which is at most $|V(G[X])|-1$, and hence \eqref{eq:growth-and-rank} is equivalent to 
     \[
        \sum_{e \in X} g(e) \leq |V(G[X])|-1\ \ \ \ \forall X \subseteq E.
    \]
    
    Let $c$ be the coloring obtained by \Cref{alg:graphic}.
    
    We claim that $c$ is a rainbow-cycle-free coloring.
    First, it is clear that the algorithm runs $n-1$ times and we assign exactly one color to each edge. It follows that $c$ is indeed a coloring.
    We show that $c$ is rainbow-circuit-free inductively. 
    First, consider the partition $(V(H_{n-2}), V(H_{n-1})\setminus V(H_{n-2}))$ of $G=H_{n-1}$. 
    By \Cref{lem:goodcoloring}, we need to show that (1) all the edges between the two sets should be colored by one color, (2) there is a rainbow-circuit-free coloring for $H_{n-1}[V(H_{n-2})]$, and  (3) there is a rainbow-circuit-free coloring for $H_{n-1}[V(H_{n-1})\setminus V(H_{n-2})]$. Property (1) holds because $V(H_{n-1})\setminus V(H_{n-2})=\{v_i\}$, and $c$ colored all its adjacent edges by one color. 
    Property (3) trivially holds because $V(H_{n-1})\setminus V(H_{n-2})$ is a singleton. We need to show (2), but this holds by induction. 
    This concludes the claim. 

    Thus we only need to show that for every $i\in [n-1]$, the vertex $v_i$ chosen in step $i$ satisfies $\sum_{e \in E_i} g(e) =\sum_{u:v_iu\in E(H_i)} g(v_iu)=g_H(v_i) \leq
    2 - \frac{2}{n}$. To see this, recall that $\sum_{e \in E(H)} g(e) \leq |V(H)|-1$ holds for any subgraph $H$ of $G$. Thus, the average of $g_H(u)$
    over all $u \in V(H)$ is at most $2(|V(H)|-1)/|V(H)|$. In particular, for $i\in [n-1]$ 
    \[
         g_{H_i}(v_i) \leq \frac{2(|V(H_i)|-1)}{|V(H_i)|} = 2 - \frac{2}{|V(H_i)|} \leq
        2-\frac{2}{n},
    \]
    as desired.

    For the running time, note that the While-loop runs at most $n-1$ times, and the rest of the operations can be obtained in polynomial time. This completes the proof.
\end{proof}

\THMgraphicFUN*
\begin{proof}
    The schedule $\pi$ is constructed from $n-1$ disjoint Fuse--Unfuse schedules $\pi_i$.
    By \Cref{thm:graphic}, we obtain a rainbow-circuit-free coloring $c$ of $G$ such that each color class $E_i$ satisfies that the sum of growth rates is less than $2$, $i\in[n-1]$. For each $i\in[n-1]$, let $\pi_i$ be the Fuse--Unfuse schedule for the instance $(E_i,g_i)$ where $g_i$ is the growth rate function induced by $g$ on $E_i$. We define $\pi$ as follows: At time $t$, cut the set $I_t=\cup_{i \in[n-1]}\pi_i(t)$. As we pick an element of each color class, and the coloring is rainbow-circuit-free, we obtain that $I_t$ is a forest, and thus the schedule is valid.
    
    For the height, observe that by \Cref{cor:bgt4} we have that $h(\pi_i) < 4$, and hence $h(\pi) = \max_i h(\pi_i) < 4$ as required.

    Finally, the coloring $c$ can be computed in polynomial time by \Cref{thm:graphic}, and each of the $n-1$ instances can be constructed in polynomial time using \Cref{cor:bgt4}. 
    Consequently, the resulting schedule can be implemented in polynomial time, which completes the proof.
\end{proof}

\rainbowfreelaminar*
\begin{proof}
    Observe that for laminar matroids, \eqref{eq:growth-and-rank} implies that for every $L \in \mathcal{L}$ it holds that
    \[\sum_{e \in L}g(e)\leq b(L).\] 
    
    We claim that \Cref{alg:laminar} computes the desired coloring. 
    Recall that we denote by $S$ the set of elements obtained through set-fusion.
    Let $c$ be the coloring obtained by \Cref{alg:laminar}.
    
    We begin by observing that $c$ is a well-defined coloring. Indeed, if $e \in E \cap S$, then $e$ is assigned exactly one color. If $e$ is set-fused during the execution of the algorithm, let $s \in S$ denote the final set-fused element containing $e$, obtained after the termination of the While-loop. The element $s$ receives exactly one color, which is then assigned to all elements obtained by set-unfusing $s$, including $e$.

    We next show that $c$ is rainbow-circuit-free. Observe that, before any set $L \in \mathcal{L}$ is removed from $\mathcal{L}$, the number of elements $s \in S$ that contain elements of $L$ is reduced to at most $b(L)$. Consequently, at most $b(L)$ distinct colors are assigned to elements of $L$. 
    It then follows from \Cref{lem:rainbowlaminar} that the coloring $c$ is rainbow-circuit-free.
    
    Finally, since the sum of growth rates in any set $L \in \mathcal{L}$ remains invariant throughout the execution of the algorithm until $L$ is removed from $\mathcal{L}$, to prove that each color class has total growth rate less than~$2$, it suffices to show that whenever two elements $s_1,s_2$ are set-fused into $s_1 \hatcirc s_2$, we have $g(s_1 \hatcirc s_2) = g(s_1) + g(s_2) < 2$.
    
    To this end, consider two elements $s_1$ and $s_2$ that are set-fused into $s_1 \hatcirc s_2$ at Line~8, and assume without loss of generality that $g(s_1) \leq g(s_2)$. Recall that initially $\sum_{e \in L} g(e) \leq b(L)$ holds for every $L \in \mathcal{L}$, and it is easy to see that this inequality is preserved throughout the execution of the algorithm. Moreover, since $|L| > b(L)$ at the time of the fusion, we have
    \[
    \sum_{e \in L} g(e) \leq b(L) \leq |L| - 1.
    \]
    As $s_1$ and $s_2$ are chosen to minimize $g(s_1)+g(s_2)$, it follows that
    \[
    g(s_1) \leq \frac{b(L)}{|L|} < 1
    \quad\text{and}\quad
    g(s_2) \leq \frac{b(L) - g(s_1)}{|L| - 1} \leq \frac{b(L)}{|L| - 1} \leq 1.
    \]
    Therefore, $g(s_1) + g(s_2) < 2$, as claimed.

    For the running time, for each $L \in \mathcal{L}$, we set-fuse elements in $L$ at most $|L|-b(L)\leq |E|-b(L)$ times, and $\mathcal{L}$ contains $\operatorname{poly}(|E|)$ sets; the implementation of set-fusion and set-unfusion can be implemented in polynomial time as in \Cref{alg:fun}. The rest of the operations can be done in polynomial time. Hence, $c$ can be computed in polynomial time.
\end{proof}

\thmLaminarFUN*
\begin{proof}
    The schedule $\pi$ is constructed from $k$ disjoint Fuse--Unfuse schedules $\pi_i$ for some $k$.
    By \Cref{thm:rainbowfreelaminar}, we obtain a rainbow-circuit-free coloring $c$ of $M$ such that each color class $E_i$ satisfies that the sum of growth rates is less than $2$, $i\in[k]$ for some $k$. For each $i\in[n-1]$, let $\pi_i$ be the Fuse--Unfuse schedule for the instance $(E_i,g_i)$ where $g_i$ is the growth rate function induced by $g$ on $E_i$. We define $\pi$ as follows: At time $t$, cut the set $I_t=\cup_{i \in[n-1]}\pi_i(t)$. As we pick an element of each color class, and the coloring is rainbow-circuit-free, we obtain that $I_t$ is independent, and thus the schedule is valid.
    
    For the height, observe that by \Cref{cor:bgt4} we have that $h(\pi_i) < 4$, and hence $h(\pi) = \max_i h(\pi_i) < 4$ as required.

    Finally, the coloring $c$ can be computed in polynomial time by \Cref{thm:rainbowfreelaminar}, and the schedule for each of the $k$ instances can be constructed in polynomial time using \Cref{alg:fun}, where $k\leq|E|$. 
    Consequently, the resulting schedule can be implemented in polynomial time, which completes the proof.
\end{proof}

\section{Improved Lower Bound for General Set Systems}
\label{app:jamison-lb}

Given $v \in \{0,1\}^k$, the affine hyperplane (avoiding $\mathbf{0}$) defined by $v$ is 
\[
H_v = \{u \in \{0,1\}^k \mid v \cdot u \equiv 1 \pmod 2 \}.
\]
We define the set $\cH$ as the set of all affine hyperplanes, i.e., 
\[
\cH_k = \{H_v \mid v \in \{0,1\}^k \setminus \{\mathbf{0}\}\}.
\]
The following theorem of Jamison \cite{MR439664} gives a lower bound on the number of affine hyperplanes covering the cube (except $\mathbf{0}$).
 
\begin{thm}[Theorem $1'$ in \cite{MR439664}]
    \label{thm:jamison}
    Let $H_1,\dots,H_m \in \cH_k$ be a set of affine hyperplanes covering $\{0,1\}^k \setminus \{\mathbf 0\}$.
    Then, $m\geq k$.
\end{thm}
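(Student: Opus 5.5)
The plan is to argue by contradiction using linear algebra over $\mathbb{F}_2$, which suffices because every hyperplane in $\cH_k$ has the special form $H_v$ with $v \in \{0,1\}^k$ and right-hand side $1$. (Jamison's original theorem allows arbitrary affine hyperplanes; for this restricted family a direct argument is enough.) Identify $\{0,1\}^k$ with the vector space $\mathbb{F}_2^k$. Then $u \in H_v$ holds exactly when $v \cdot u = 1$ in $\mathbb{F}_2$, and $u \notin H_v$ holds exactly when $v \cdot u = 0$.

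Suppose $H_{v_1},\dots,H_{v_m}$ cover $\{0,1\}^k \setminus \{\mathbf 0\}$ and $m < k$. Let $A$ be the $m \times k$ matrix over $\mathbb{F}_2$ whose rows are $v_1,\dots,v_m$. Its rank is at most $m < k$, so by rank--nullity its kernel has dimension at least $k-m \geq 1$. Hence there is some $u \neq \mathbf 0$ with $Au = \mathbf 0$, that is, $v_i \cdot u \equiv 0 \pmod 2$ for every $i \in [m]$.

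This $u$ is a nonzero point of the cube, and by the observation above it lies in none of the $H_{v_i}$. This contradicts the covering assumption, so $m \geq k$.

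No step here is a real obstacle. The only point to check carefully is that the parity condition defining $H_v$ is exactly a linear equation over $\mathbb{F}_2$, so that failing to lie in every $H_{v_i}$ is a homogeneous linear system. If the statement were meant for general affine hyperplanes over $\R$, as in Jamison's original paper, I would instead use the polynomial method (Alon--Füredi). One takes the product of the linear forms $\prod_i(\ell_i(x) - c_i)$, which vanishes on all nonzero cube points but not at $\mathbf 0$, and applies the Combinatorial Nullstellensatz to the polynomial minus a multiple of $\prod_j (1 - x_j)$ to force degree at least $k$. For the definition of $\cH_k$ given here, however, the linear-algebra argument suffices.
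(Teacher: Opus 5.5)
Your argument is correct, and it takes a different route from the paper. The paper does not prove this statement at all. It imports it from Jamison's theorem on finite affine spaces, whose hyperplane form says that covering all points of $\mathbb{F}_q^k$ but one by affine hyperplanes requires $k(q-1)$ of them; for general $q$ this is a genuinely nontrivial algebraic result. You observe that the only case the paper uses, $q=2$, reduces to rank--nullity. At most $k-1$ homogeneous parity constraints $v_i\cdot u\equiv 0 \pmod 2$ always have a nonzero common solution, and that point of the cube lies in no $H_{v_i}$. The bound is also tight, via the $k$ coordinate hyperplanes $\{u \mid u_i=1\}$. The citation buys the general-$q$ statement, which is never needed here. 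Your route buys a self-contained two-line proof. It makes the proof of \Cref{thm:jamison-lower-bound} fully elementary, and shows that the ``algebraic methods'' mentioned in the introduction amount to linear algebra over $\mathbb{F}_2$.

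Two small corrections to your side remarks, neither of which affects the proof. First, over $\mathbb{F}_2$ every affine hyperplane avoiding $\mathbf 0$ already has the form $H_v$, so $\cH_k$ is not a restricted family relative to Jamison's setting; the simplification comes from $q=2$. Second, Jamison's paper concerns finite fields, not real hyperplanes. The real version for $\{0,1\}^k\setminus\{\mathbf 0\}$ is the Alon--F\"uredi theorem, for which your Combinatorial Nullstellensatz sketch is indeed the standard proof.
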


Using \Cref{thm:jamison} as a source of a hard-to-cover set system, we can prove \Cref{thm:jamison-lower-bound}

\jamisonlb*

\begin{proof}
    Let $E = \{0,1\}^k \setminus \{\mathbf 0\}$, and we consider the independent sets $\cI =\cH_k$.
    We consider now the growth vector given by the coefficients $\lambda(H_v) = \frac{1}{|\cH_k|}$.
    This is a valid growth vector as $\sum_{v \in \{0,1\}^k \setminus \{\mathbf{0}\}} \frac{1}{|\cH|}=1$.
    Furthermore, for every $x \in \{0,1\}^k \setminus \{\mathbf{0}\} $ let
    \[
    g(x) = \sum_{v \in \{0,1\}^k \setminus \{\mathbf{0}\} \colon x \in H_v} \frac{1}{|\cH_k|} = \frac{|\{v \in E \mid  v \cdot x \equiv 1 \pmod 2 \}|}{|\cH_k|}= \frac{|H_x|}{|\cH_k|} = \frac{2^{k-1}}{2^k-1} > \frac{1}{2}.
    \]

    Now, for every schedule, there exists an element $v^*\in E$ that is not cut during the first $k-1$ time steps.
    Otherwise, we have that the sets $H_{v_1},\dots,H_{v_{k-1}} \in \cH_k$ selected by the schedule would cover $\{0,1\}^k\setminus \{\mathbf{0}\}$, contradicting \Cref{thm:jamison}.

    Recalling that $k= \log_2 (n+1)$, we have that the height of $v^*$ at time $k-1$ is 
    \[
    h_{k-1}(v^*) = (k-1)g(v^*) > \frac{k-1}{2} = \frac{\log_2 (n+1)-1}{2}.\qedhere
    \]
\end{proof}

\end{document}
\typeout{get arXiv to do 4 passes: Label(s) may have changed. Rerun}